\newtheorem{theorem}{Theorem}[section]
\newtheorem{lemma}[theorem]{Lemma}
\newtheorem{proposition}[theorem]{Proposition}
\theoremstyle{definition}
\newtheorem{remark}[theorem]{Remark}
\newtheorem{definition}[theorem]{Definition}
\newtheorem{example}[theorem]{Example}
\theoremstyle{plain}
\newtheorem{assumption}{Assumption}
\numberwithin{equation}{section}
\renewcommand{\(}{\left(}
\renewcommand{\)}{\right)}
\renewcommand{\tilde}{\widetilde}
\DeclareMathOperator*{\argmax}{\arg\max}
\DeclareMathOperator*{\argmin}{\arg\min}
\definecolor{darkblue}{rgb}{0.1,0.1,0.7}
\definecolor{darkred}{rgb}{0.9,0.1,0.1}
\DeclareMathOperator*{\val}{val}
\DeclareMathOperator*{\WD}{WD}
\DeclareMathOperator*{\leqHR}{\leq_{hr}}
\DeclareMathOperator*{\leqLR}{\leq_{lr}}
\DeclareMathOperator*{\ce1}{CE_1}
\DeclareMathOperator*{\CE2}{CE_2}
\newlist{thmcases}{enumerate}{1}
\setlist[thmcases]{
  label=\textbf{\upshape Case~\arabic*.},
  leftmargin=*,
  ref={\arabic*}}
\newlist{proofcases}{enumerate}{1}
\setlist[proofcases]{
  label=\textbf{\upshape Case~\theproposition.\arabic*.},
  leftmargin=*,
  ref={\theproposition.\arabic*}}
\newlist{subcases}{enumerate}{1}
\setlist[subcases]{
  label=\textbf{\upshape Subcase~\arabic*.},
  leftmargin=*,
  ref={\arabic*}}
\newenvironment{rcases}
  {\left.\begin{aligned}}
  {\end{aligned}\right\rbrace}
\newcommand{\Rmnum}[1]{\expandafter\@slowromancap\romannumeral #1@}
\begin{document}

\title[Optimal Insurance under Maxmin Expected Utility]{Optimal Insurance under\vspace{0.1cm}\\Maxmin Expected Utility\vspace{0.6cm}}

\author[Corina Birghila, Tim J.\ Boonen, and Mario Ghossoub]{Corina Birghila \vspace{0.1cm} \\ University of Waterloo \vspace{0.7cm}\\Tim J.\ Boonen \vspace{0.1cm} \\ University of Amsterdam \vspace{0.7cm} \\ Mario Ghossoub  \vspace{0.1cm} \\ University of Waterloo \vspace{1.4cm}  \\ This draft: \today \vspace{0.7cm}}

\address{{\bf Corina Birghila}: University of Waterloo -- Department of Statistics and Actuarial Science -- 200 University Ave.\ W.\ -- Waterloo, ON, N2L 3G1 -- Canada}
\email{\href{mailto:corina.birghila@uwaterloo.ca}{corina.birghila@uwaterloo.ca}\vspace{0.4cm}}

\address{{\bf Tim J.\ Boonen}:  Amsterdam School of Economics -- University of Amsterdam -- Roetersstraat 11, 1018 WB -- Amsterdam -- The Netherlands}
\email{\href{mailto:t.j.boonen@uva.nl}{t.j.boonen@uva.nl}\vspace{0.4cm}}

\address{{\bf Mario Ghossoub}: University of Waterloo -- Department of Statistics and Actuarial Science -- 200 University Ave.\ W.\ -- Waterloo, ON, N2L 3G1 -- Canada}
\email{\href{mailto:mario.ghossoub@uwaterloo.ca}{mario.ghossoub@uwaterloo.ca}\vspace{0.4cm}}

\thanks{\textit{JEL Classification:} C02, C61, D86, G22. \vspace{0.15cm} }

\thanks{\textit{2010 Mathematics Subject Classification:} 91B30, 91G99. \vspace{0.15cm}}

\keywords{Optimal Insurance, Ambiguity, Multiple Priors, Maxmin-Expected Utility, Heterogeneous Beliefs.\vspace{0.4cm}}

\thanks{Mario Ghossoub acknowledges financial support from the Natural Sciences and Engineering Research Council of Canada (NSERC Grant No.\ 2018-03961).}

%====================================================================================
%====================================================================================
%====================================================================================

\begin{abstract}
We examine a problem of demand for insurance indemnification, when the insured is sensitive to ambiguity and behaves according to the Maxmin-Expected Utility model of Gilboa and Schmeidler \cite{Gilboa-Schmeidler1989}, whereas the insurer is a (risk-averse or risk-neutral) Expected-Utility maximizer. We characterize optimal indemnity functions both with and without the customary \textit{ex ante no-sabotage} requirement on feasible indemnities, and for both concave and linear utility functions for the two agents. This allows us to provide a unifying framework in which we examine the effects of the \textit{no-sabotage} condition, marginal utility of wealth, belief heterogeneity, as well as ambiguity (multiplicity of priors) on the structure of optimal indemnity functions. In particular, we show how the singularity in beliefs leads to an optimal indemnity function that involves full insurance on an event to which the insurer assigns zero probability, while the decision maker assigns a positive probability. We examine several illustrative examples, and we provide numerical studies for the case of a Wasserstein and a R{\'e}nyi ambiguity set.
\end{abstract}

\maketitle

\newpage
%====================================================================================
%====================================================================================
%====================================================================================

\section{Introduction}
A foundational, and by now folkloric problem in economic theory and the theory of risk exchange is the problem of demand for insurance indemnification. Specifically, an insurance buyer, or decision maker (DM), faces a random insurable loss, against which she seeks coverage through the purchase of an insurance policy. The price of this coverage is termed the \textit{policy premium}, and the insurance pricing functional is called the \textit{premium principle}. The premium principle is assumed to be known by the DM, and to be given by the certainty equivalent of the insurer's utility. Although this is a classical problem, it has traditionally been confined to the accustomed framework of Expected-Utility Theory (EUT), going back to the pioneering work of Arrow \cite{Arrow1963b,Arrow1971} and Mossin \cite{Mossin1968a}. With the impetus of the von Neumann-Morgenstern \cite{VNM} then newly minted theory of choice under uncertainty (the EUT), Arrow \cite{Arrow1963b} shows the optimality of deductible insurance (a zero indemnification below a fixed threshold of loss, and a linear indemnification above) in an EUT framework, when the DM is risk-averse, the insurer is risk-neutral, and the two parties have the same beliefs about the underlying loss probability distribution. This foundational result has subsequently been extended in multiple directions. For instance, Raviv \cite{Raviv1979} proposes a bargaining approach where the insurer is risk-averse; Dana and Scarsini \cite{DanaScarsini2007} introduce background risk of the DM, and Cummins and Mahul \cite{CumminsMahul2004} study limited liability via an exogenous upper limit on the indemnity. Numerous other extensions and modifications of the classical insurance demand framework have been proposed, and we refer to Gollier \cite{Gollier2013} and Schlesinger \cite{Schlesinger2000} for surveys thereof.

\vspace{0.2cm}

\subsection*{Ambiguity in Insurance Demand}
The vast majority of this literature remains within the confines of the classical EUT, an entirely objective Bayesian approach to decision-making under uncertainty. Yet, ever since the major challenges to the foundations of EUT that the work of Allais \cite{allais53} and Ellsberg \cite{Ellsberg1961} has put forward, decision theory has been pulling away from parts of the axiomatic foundations of EUT, in favour of non-EU models that can rationalize behavior depicted by Allais \cite{allais53} and Ellsberg \cite{Ellsberg1961}, as well as other cognitive biases that are not captured by EUT. Arguably, one of the most important achievements of the modern theory of choice under uncertainty is the remarkable development spurred by the work of Ellsberg \cite{Ellsberg1961}, in the study of what came to be known as \textit{ambiguity}, or \textit{model uncertainty}. Two main approaches to the rationalization of attitudes toward ambiguity have been explored in the literature on axiomatic decision theory: the \textit{non-additive prior} approach, and the \textit{multiple additive priors} approach. These two approaches do intersect, but they are not equivalent. The first category is based on the seminal contributions of Yaari \cite{yaari} (Dual Theory, or DT), Quiggin \cite{quiggin82} (Rank-Dependent Expected-Utility, or RDEU), and Schmeidler \cite{schmeidler89} (Choquet-Expected Utility, or CEU), which encompasses the previous two models. The second category was initiated by Gilboa and Schmeidler \cite{Gilboa-Schmeidler1989} (Maxmin-Expected Utility, or MEU) and further refined by Ghirardato et al.\ \cite{Ghirardato} (the $\alpha$-maxmin model), Klibanoff et al.\ \cite{Klibanoffetal2005} (the KMM model), and Amarante \cite{Amarante2009b} who provides a unifying framework.

\vspace{0.2cm}

While the literature on non-EU preferences in risk-sharing or optimal insurance design problems is considerably thinner than the literature on risk-sharing with EU preferences, \textit{behavioral} preferences, and ambiguity in particular play an increasing role in this literature. Yet, Machina  \cite{MachinaINS2013} points out that the robustness of standard  optimal insurance results under situations of ambiguity is still very much an open question, despite a growing literature on the topic. For instance, Bernard et al.\ \cite{Bernardetal2015} and Xu et al.\  \cite{Xuetal2018} study RDEU preferences of the DM and risk-neutral EU preferences of the insurer, and they derive optimal insurance indemnities. Ghossoub \cite{Ghossoub2019b} extends the analysis to account for more general premium principles. Also within the first category of ambiguity representation as a non-additive prior, Jeleva \cite{Jeleva2000} considers the case of a DM who is a CEU-maximizer. In the second category of ambiguity representation as a collection of additive priors and an aggregation rule, Alary et al.\ \cite{AlaryGollierTreich203} and Gollier \cite{Gollier2014} consider the case of an ambiguity-averse DM, in the sense of KMM. However, they consider a finite state space and restrict the set of priors to have a given parametric form.

\vspace{0.2cm}

Despite its appeal, for its ability to provide a separation of the effect of ambiguity aversion from that of risk aversion, as well as for its capability to define the notion of ambiguity neutrality, the KMM model, as a model of ambiguity with multiple priors, is arguably not as intuitive or popular as the MEU model of Gilboa and Schmeidler \cite{Gilboa-Schmeidler1989}. The MEU model gives rise to decision-making problems that can be embedded into to a larger class of \textit{model uncertainty} problems, which lie at the core of the theory of \textit{distributionally robust optimization} (DRO). In this framework, a decision-making problem is often modelled via a maxmin formulation: the agent is uncertain about the underlying model (prior), and therefore formulates an objective function using a collection of (additive) priors, also referred to as the \textit{ambiguity set}. The agent then aims to maximize the objective under the worst-case model (e.g., Ben-Tal et al.\ \cite{BenTaiElGhaouiNemirovski2009}). However, the intuitiveness and wide popularity of the MEU model notwithstanding, there has surprisingly been no study of optimal insurance contracting when the DM is a MEU-maximizer, to the best of our knowledge. This paper fills this void. Specifically, we extend the classical setup and results in two ways: (i) the DM is endowed with MEU preferences; and (ii) the insurer is not necessarily risk-neutral (that is, the premium principle is not necessarily an expected-value premium principle). The main objective of this paper is to determine the shape of the optimal insurance indemnity when the DM is sensitive to ambiguity and behaves according to MEU.

\vspace{0.2cm}

\subsection*{This Paper's Contribution}
In the literature on optimal insurance contracting, a popular assumption is the \textit{no-sabotage condition}, typically imposed as an \textit{ex ante} condition of feasibility of insurance indemnities. This condition stipulates that the insured (ceded) risk and the retained risk are comonotonic (they are both nondecreasing functions of the underlying loss). Under the no-sabotage condition, the DM has no incentive to under-report the underlying loss, nor does the DM have an incentive to create incremental losses. This condition is also sometimes referred to as \textit{incentive compatibility}, or a condition that avoids \textit{ex post} moral hazard; and it is further studied by Huberman et al.\ \cite{Hubermanetal1983} and Carlier and Dana \cite{CarlierDana2003b}\footnote{We refer to Carlier and Dana \cite{CarlierDana2003b} for a discussion of various notions of \textit{ex ante} admissible contracts.}. In this paper, we characterize optimal insurance contracts under MEU, both with and without the no-sabotage condition. In doing so, this paper sheds light on the consequences of the no-sabotage assumption on the construction of optimal insurance indemnities, in the presence of belief heterogeneity as well as multiple priors for the DM.

\vspace{0.2cm}

Our main results are the following. First, we examine in Section~\ref{section3} the general case in which the insurer is a risk-averse EU-maximizer, and the DM is a MEU-maximizer with a concave utility function, displaying decreasing marginal utility of wealth. We provide an implicit characterization of optimal indemnity functions, both with and without the no-sabotage condition on feasible indemnities. Optimal indemnity functions can be formulated as a solution to an ordinary differential equation, which can then be easily solved numerically in practice.

\vspace{0.2cm}

Second, as a special case of the above setting, we examine in Section~\ref{section4} the situation in which the insurer is risk-neutral, and hence the premium principle is an expected-value premium principle, as is commonly assumed in the literature (e.g., Bernard et al.\ \cite{Bernardetal2015} and Xu et al.\ \cite{Xuetal2018}). In this case, we provide an explicit, closed-form characterization of optimal indemnity functions in the absence of the no-sabotage condition, and an implicit characterization in the presence of the no-sabotage condition. In particular, by doing so, we provide in both cases (with and without the no-sabotage condition) a crisp depiction of the effect of heterogeneity in beliefs between the two parties, showing how the singularity in beliefs leads to an optimal indemnity function that involves full insurance on an event to which the insurer assigns zero probability, but not the DM. This an important and intuitive feature of our optimal contracts in this case. We then illustrate these results with an example.

\vspace{0.2cm}

Third, as a further special case, we examine the situation in which both parties display constant marginal utility of wealth, that is, their utility functions are linear. We show that in this case, a particular type of layer insurance (also called tranching) is optimal if the no-sabotage condition is imposed, which we derive in closed form. Layer insurance contracts consist of a finite number of long and short positions on various stop-loss contracts of the underlying risk. If the no-sabotage condition is not imposed, the optimal indemnity makes use of a partition of the state space in three sets, providing no insurance for events in the first set, full insurance for events in the second set, and proportional insurance for events in the third set. Artzner et al.\ \cite{Artzneretal1999} (and Delbaen \cite{Delbaen2002}) show that the class of MEU preferences with linear utility is related to the class of coherent risk measures. Therefore, our analysis can be used to derive optimal insurance contracts when the DM is endowed with a general coherent risk measure. To the best of our knowledge, this has not been studied in the literature when the DM and seller have heterogeneous beliefs regarding the underlying probability distribution.\footnote{This is studied by Birghila and Pflug \cite{BirghilaPflug2019} in the context of homogeneous beliefs regarding the underlying probability distribution.} Layer indemnity contracts play an important role in insurance, as deductible or truncated deductible contracts are important examples thereof. Truncated deductibles or layer-type insurance indemnities are indeed commonly observed in reinsurance markets, where tranches of aggregate losses of the insurer are ceded to a reinsurer (e.g., Albrecher et al.\ \cite{AlbrecherBeirlantTeugels2017}).

\vspace{0.2cm}

Lastly, we examine in Section \ref{section5} some numerical examples to illustrate our results. By specifying the structure of the DM's ambiguity set $\mathcal{C}$, we are able to obtain explicitly the worst-case probability measure for the problems analyzed in Sections~\ref{section3} and \ref{section4}. In particular, we examine the special case in which the DM's set of priors forms a neighborhood around the insurer's probability measure. First, when the insured is risk-averse, $\mathcal{C}$ is a Wasserstein ambiguity set, and the insurer is risk-neutral, we are able to characterize the saddle point of the problem in Section~\ref{section4}. In this case, the optimal indemnity is a deductible contract, while the worst-case measure dominates the insurer's probability measure in the sense of first-order stochastic dominance. In the second example, we consider a general setting in which both participants are risk-averse, and $\mathcal{C}$ is the R\'enyi ambiguity set. In a discretized framework, we use a successive convex programming algorithm to solve the ordinary differential equation in Section~\ref{section3}. We then assess the influence of the ambiguity set on the optimal value. In particular, we show numerically that a larger ambiguity set yields a lower certainty equivalent of final wealth for the DM, but increases the willingness-to-pay for insurance. Moreover, for both examples, the impact of the no-sabotage condition on the feasible set of insurance indemnities is illustrated.

\vspace{0.2cm}

\subsection*{Other Related Literature}
This paper contributes to the literature on heterogeneity in beliefs between the DM and the insurer. Heterogeneity in beliefs has been studied recently in the context of optimal (re)insurance by Ghossoub \cite{Ghossoub2019c}, %\cite{Ghossoub2015b,Ghossoub2017,Ghossoub2019c}, Boonen \cite{Boonen2016},
Boonen and Ghossoub \cite{BoonenGhossoub2019b,BoonenGhossoub2020c}, %\cite{BoonenGhossoub2019b,BoonenGhossoub2020a,BoonenGhossoub2020c},
and Chi \cite{Chi2019}. All of these studies focus on unambiguous subjective preferences on the side of the DM (that is, a unique subjective prior on the state space), but they differ in the formulation of the objective function that is optimized.

\vspace{0.2cm}

Heterogeneous beliefs can arise for different reasons. For instance, in the Subjective Expected Utility (SEU) theory of De Finetti \cite{DeFinetti1937} and Savage \cite{Savage}, disagreements about (subjective) beliefs are a result of differences in preferences over alternatives. Moreover, in epistemic game theory, the celebrated Agreement Theorem of Aumann \cite{Aumann1976} states that if agents have common priors, then the Hars\'anyi Doctrine holds, that is, disagreements about probabilities result only from information asymmetry. Furthermore, disagreement about (posterior) beliefs can be a direct consequence of relaxing the controversial and heavily criticized common priors assumption (e.g., \cite{Aumann1998,Gul1998,Morris1995}).

\vspace{0.2cm}

On a practical level, an insurance buyer may have private information about the distribution of the insurable loss. The insurer may use another probability measure, based on an average historical distribution of losses in the insurer's portfolio over a relevant time frame. For instance, the price of insurance may seem low for insurance buyers facing relatively high risks. This observation forms the basis of adverse selection in insurance markets (e.g., \cite{ChiapporiSalanie2000,FinkelsteinPoterba2004}). Jeleva and Villeneuve \cite{JelevaVilleneuve2004} study  heterogeneous beliefs in an adverse selection model in an insurance market with two future states of the world. Moreover, heterogeneity in reference probabilities may also be driven by ambiguity on the side of the insurer, rather than the insurance buyer (e.g., \cite{AGP2015,HogarthKunreuther1989}). More precisely, the insurer may experience ambiguity about the underlying probability distribution, and hence use a pricing rule that would be deemed more prudent than one based on the probability distribution used by the insurance buyer.

\vspace{0.2cm}

By explicitly incorporating model uncertainty into the problem formulation via a set of priors $\mathcal{C}$, the present paper also falls within the DRO framework. In this perspective, insurance contracts can be seen as saddle points of a DRO problem. The benefit of this technique is twofold. First, the worst-case approach ensures that the optimal decision is not sensitive to possible model misspecification. Second, in many situations, there exist tractable reformulations or algorithms to solve these distributionally robust models, even when the corresponding non-ambiguous problem (that is, when there is a unique prior) cannot be efficiently solved. The idea of incorporating multiple models in the decision-making process dates back to the fundamental work of Scarf \cite{Scarf1958} in the inventory management applications. He considers a robust formulation of the newsvendor problem, where the optimal strategy is constructed over all possible demand functions with known mean and variance. This initial idea is further developed in the work of Ben-Tal et al.\ \cite{BenTaiElGhaouiNemirovski2009} and Bertsimas and Sim \cite{BertsimasSim2004}, among others. A key concept in DRO is the structure of the set of priors, known here as the ambiguity set. Clearly, the choice of the ambiguity set $\mathcal{C}$ influences the worst-case model, and thus the optimal decision, while it also facilitates a tractable reformulation and efficient algorithm implementation. The existing literature has focused so far on two types of ambiguity sets: those built using the moment-based approach (e.g., \cite{DelageYe2010,Scarf1958,ZymlerKuhnRustem2013}), and those built using the statistical distance-based approach (such as the Kullback-Leibler divergence in \cite{CalafioreElGhaoui2006}, the $L_1$-ball in \cite{Thiele2008}, or the Wasserstein distance in \cite{EsfahaniKuhn2018}). Each such choice comes with useful structural properties, but also with shortcomings that need to be dealt with. Ultimately, it is the available set of observations and the type of application that would dictate a suitable choice of $\mathcal{C}$.

\vspace{0.2cm}

The rest of the paper is organized as follows. Section~\ref{section2} presents the setup of our problem together with the necessary background. In Section~\ref{section3}, we consider the case in which both the insurer and DM have concave utility functions, and the DM is a MEU-maximizer. We characterize optimal indemnity functions both in the presence and absence of the no-sabotage condition. Section~\ref{section4} considers the particular case of a risk-neutral insurer, and provides some illustrating examples. Section~\ref{section5} reports numerical illustrations, and Section~\ref{section6} concludes the paper. Some definitions and technical proofs are provided in Appendices~\ref{appendixProofpropCase2} to~\ref{appendixAlgorithm}.

\vspace{0.4cm}

%====================================================================================
%====================================================================================
%====================================================================================

\section{Setup and preliminaries\label{section2}}
Let $S$ be a nonempty collection of states of the world, and equip $S$ with a $\sigma$-algebra $\mathcal{G}$ of events. A DM is facing an insurable state-contingent loss represented by a random variable $X$ on the measurable space $(S, \mathcal{G})$, with values in the interval $[0,M]$, for some $M \in \mathbb{R}^+$. We denote by $\Sigma$ the sub-$\sigma$-algebra $\sigma\{X\}$ of $\mathcal{G}$ on $S$ generated by the random variable $X$.

\vspace{0.2cm}

Let $B\(\Sigma\)$ denote the vector space of all bounded, $\mathbb{R}$-valued, and $\Sigma$-measurable functions on $\(S, \Sigma\)$, and let $B^{+}\(\Sigma\)$ be its positive cone. When endowed with the supnorm $\| . \|_{sup}$, %\footnote{Any $Y \in B\(\Sigma\)$ is bounded, and its supnorm is defined by $\|Y\|_{sup} := \sup \{ \abs{Y(s)}: s \in S \} < +\infty$.},
$B\(\Sigma\)$  is a Banach space (e.g., \cite[IV.5.1]{Dunford}). By Doob's measurability theorem \cite[Theorem 4.41]{AliprantisBorder}, for any $Y \in B\(\Sigma\)$ there exists a bounded, Borel-measurable map $I : \mathbb{R} \rightarrow \mathbb{R}$ such that $Y = I \circ X$. Moreover, $Y \in B^{+}\(\Sigma\)$ if and only if the function $I$ is nonnegative.

\vspace{0.2cm}

\begin{definition}
Two functions $Y_{1},Y_{2} \in B\(\Sigma\)$ are said to be comonotonic (resp., anti-comonotonic) if
\[
\Big [ Y_{1}\(s\) - Y_{1}\(s^{\prime}\) \Big ] \Big[ Y_{2}\(s\) - Y_{2}\(s^{\prime}\) \Big] \geq 0 \ \hbox{(resp., $\leq 0$)}, \hbox{ for all } s, s^{\prime} \in S.
\]
\label{Def:ComonotonicFunctions}
\end{definition}

\noindent For instance any $Y \in  B\(\Sigma\)$ is comonotonic and anti-comonotonic with any $c \in \mathbb{R}$. Moreover, if $Y_{1},Y_{2} \in B\(\Sigma\)$, and if $Y_{2}$ is of the form $Y_{2} = I \circ Y_{1}$, for some Borel-measurable function $I$, then $Y_{2}$ is comonotonic (resp., anti-comonotonic) with $Y_{1}$ if and only if the function $I$ is nondecreasing (resp., nonincreasing).

\vspace{0.3cm}

Let $ba\(\Sigma\)$ denote the linear space of all bounded finitely additive set functions on $\Sigma$, endowed with the usual mixing operations. When endowed with the total variation norm $\| . \|_{v}$, $ba\(\Sigma\)$ is a Banach space. By a classical result (e.g., \cite[IV.5.1]{Dunford}), $\(ba\(\Sigma\), \|.\|_{v}\)$ is isometrically isomorphic to the norm-dual of $B\(\Sigma\)$, via the duality $\textless \phi, \lambda \textgreater = \int \phi \ d \lambda, \ \forall \lambda \in ba\(\Sigma\), \ \forall \phi \in B\(\Sigma\)$. Consequently, we can endow $ba\(\Sigma\)$ with the weak$^{*}$ topology $\sigma\(ba\(\Sigma\), B\(\Sigma\)\)$.

\vspace{0.3cm}

Let $ca\(\Sigma\)$ denote the collection of all countably additive elements of $ba\(\Sigma\)$, and let $ca^{+}\(\Sigma\)$ denote its positive cone. Then $ca\(\Sigma\)$ is a $\|.\|_{v}$-closed linear subspace of $ba\(\Sigma\)$. Hence, $ca\(\Sigma\)$ is $\|.\|_{v}$-complete, i.e.\ $\(ca\(\Sigma\), \|.\|_{v}\)$ is a Banach space. Denote by
\[
ca^{+}_{1}\(\Sigma\) := \Big\{\mu \in ca^{+}\(\Sigma\): \mu(S) =1\Big\}
\]
the collection of probability measures on $\(S, \Sigma\)$. We shall endow $ca^{+}_{1}\(\Sigma\)$ with the weak$^{*}$ topology inherited from $ba\(\Sigma\)$.

\vspace{0.3cm}

For any $Y \in B(\Sigma)$ and $P \in ca^{+}_{1}\(\Sigma\)$, let $F_{Y,P}(t) := P\big(\lbrace s \in S : Y(s) \leq t\rbrace \big)$ denote the cumulative distribution function (cdf) of $Y$ with respect to the probability measure $P$, and let $F^{-1}_{Y,P}(t)$ denote the left-continuous inverse of the $F_{Y,P}$ (i.e., the quantile function of $Y$), defined by
\[
F^{-1}_{Y,P}(t) := \inf\big\lbrace z \in \mathbb{R} :\,  F_{Y,P}(z)\geq t\big\rbrace, \ \forall t \in \left[0, 1\right].
\]

\vspace{0.2cm}

\subsection{The DM's and the Insurer's Preferences}
The DM can purchase insurance against the random loss $X$ in a perfectly competitive insurance market, for a premium set by the insurer. In return for the premium payment, the DM is promised an indemnification against the realizations of $X$. An indemnity function is a random variable $Y = I(X)$ on $(S, \Sigma)$, for some bounded, Borel-measurable map $I: X(S) \rightarrow \mathbb{R}$, which pays off the amount $I(X(s)) \in \mathbb{R}$ in state of world $s \in S$, corresponding to a realization $X(s)$ of $X$. That is, we can identify the set of indemnity functions with a subset of $B\(\Sigma\)$.
%In order to mitigate the risk associated with $X$, the DM intends to purchase insurance for a premium $\Pi_0$, determined by insurer at the beginning of the contact. The insurance provides indemnification against the loss $X(s)$, for all $s\in S$. The indemnity function $Y$ is a random variable on $(S,\mathcal{F})$ given by $Y=I(X)$, for some bounded and Borel-measurable map $I:X(S)\rightarrow\mathbb{R}$.
%
%\vspace{0.2cm}
%
For each indemnity function $Y \in B\(\Sigma\)$, we define the corresponding retention function by $R := X -Y \in B\(\Sigma\)$. As the name suggests, $R$ is the retained loss after insurance indemnification.

\vspace{0.2cm}

The DM has a preference relation over insurance indemnification functions (or over wealth profiles) that admits a MEU representation $V^{\text{MEU}}: B(\Sigma) \rightarrow \mathbb{R}$ as in Gilboa and Schmeidler \cite{Gilboa-Schmeidler1989}, of the form
\begin{equation}
V^{\text{MEU}}(Z) := \min_{\mu \in \mathcal{C}} \int u(Z) \ d\mu, \ \ \forall \ Z \in B\(\Sigma\),
\label{MEUfunctional}
\end{equation}

\noindent where $u:\mathbb{R}\rightarrow\mathbb{R}$ is a concave utility function, and $\mathcal{C}$ is a (unique) weak$^*$-compact and convex subset of $ba^+_1(\Sigma)$. Moreover, we assume that the DM's preferences satisfy the Arrow-Villegas Monotone Continuity axiom as in Chateauneuf et al.\ \cite{Chateauneufetal2005a}, so that $\mathcal{C} \subset ca^+_1(\Sigma)$, i.e., all priors are countably additive. Additionally, the DM's utility function $u$ satisfies the following assumption.

%\vspace{0.2cm}
%
%We assume that the DM's preferences admit a representation in terms of a MEU-functional, with a concave utility function $u:\mathbb{R}\rightarrow\mathbb{R}$ and a convex and weak$^*$-compact set $\mathcal{C}\subset ca^+_1(\Sigma)$ of probability measures on $(S,\Sigma)$. The set $\mathcal{C}$, referred to as the ambiguity set in DRO, is given \textit{a priori}. Moreover, we will also assume that the insurer is a risk-averse EU maximizer, whose preferences induce a utility function $v:\mathbb{R}\rightarrow\mathbb{R}$ and a probability measure $Q$ on $(S,\Sigma)$. {\color{red}Both utility functions satisfy the following.

\vspace{0.2cm}

\begin{assumption} \label{assumptionsUtility}
The utility function $u:\mathbb{R} \longrightarrow \mathbb{R}$ is strictly increasing, concave and continuously differentiable.
\end{assumption}

%\vspace{0.2cm}
%
%\noindent Note that Assumption~\ref{assumptionsUtility} includes the case of a risk-neutral insurer, as in Arrow \cite{Arrow1971}. %A strict concavity assumption on utilities $u$ and $v$ reflects the risk-averse attitude of DM and insurer, respectively, and it will be imposed separately, whenever we analyze this scenario.

\vspace{0.2cm}

Let $W_0 \in \mathbb{R}^+$ be the DM's initial wealth. After purchasing insurance coverage for a premium $\Pi_0 >0$, the DM's terminal wealth is a random variable $W\in B(\Sigma)$ given by
\[
W(s):=W_0-X+Y-\Pi_0.
\]

\vspace{0.2cm}

The insurer's preference over $B\(\Sigma\)$ admits an EU representation $V^{\text{Ins}}: B(\Sigma) \rightarrow \mathbb{R}$ of the form
\begin{equation*}
V^{\text{Ins}}(Z) :=  \int v(Z) \ dQ, \ \ \forall \ Z \in B\(\Sigma\),
\end{equation*}

\noindent for a utility function $v:\mathbb{R}\rightarrow\mathbb{R}$ satisfying Assumption~\ref{assumptionsUtility} and a probability measure $Q \in ca^+_1(\Sigma)$.

\vspace{0.2cm}

The insurer has an initial wealth $W_0^{\text{Ins}}$, and faces an administration cost, often called an \textit{indemnification cost}, associated with the handling of an indemnity payment. As customary in the literature (e.g., Bernard et al.\ \cite{Bernardetal2015} and Xu et al.\ \cite{Xuetal2018}), we assume that for a given indemnity function $Y = I \circ X$, this indemnification cost is a proportional cost of the form $\rho Y$, for a given safety loading factor $\rho \geq 0$ specified exogenously and \textit{a priori}. Hence, the insurer's terminal wealth is the random variable $W^{\text{Ins}} \in B\(\Sigma\)$ given by
\[
W^{\text{Ins}}:= W_0^{\text{Ins}} - (1+\rho)Y + \Pi_0.
\]

\vspace{0.2cm}

\subsection{Admissible Indemnity Functions}

In Arrow's \cite{Arrow1963b} original formulation of the optimal insurance problem under EUT, an \textit{ex ante} condition of feasibility of indemnity schedules is the requirement that these be nonnegative and no larger than the realization of the loss in each state of the world. This is often referred to as the \textit{indemnity principle}, and it translates into the requirement that an admissible set of indemnities be restricted to those $Y\in B(\Sigma)$ that satisfy $0\leq Y\leq X$. We shall denote this set of indemnity functions by $\mathcal{I}$:
\begin{equation}\label{FeasibWithoutNSC}
\mathcal{I}:=\Big\{ Y=I\circ X\in B^+(\Sigma) :\, 0\leq I(x)\leq x, \, \forall x\in [0,M]\Big\}.
\end{equation}

\vspace{0.2cm}

A desirable property of optimal indemnities is that an indemnity function $Y = I \circ X$ and the corresponding retention function $R = X - Y$ be both nondecreasing functions of the loss $X$, that is both comonotonic with $X$ (and hence $Y$ and $R$ are comonotonic). Indeed, if $Y$ fails to be comonotonic with $X$, then the DM has an incentive to under-report the loss; whereas if $R$ fails to be comonotonic with $X$, then the DM has an incentive to create additional damage. These situations of \textit{ex post} moral hazard are not desirable, and one often seeks additional \textit{ex ante} conditions that would rule out such behavior from the DM. In the setting of Arrow \cite{Arrow1963b}, the optimal indemnity is a deductible contract of the form $Y = \max\(X-d,0\)$, for some $d \in \mathbb{R}^+$. For such contracts, both the indemnity and retention functions are comonotonic with the loss, and optimal indemnities are \textit{de facto} immune to the kind of \textit{ex post} moral hazard described above. However, outside of EUT, optimality of deductible contracts does not always hold, and optimal indemnities might suffer from the aforementioned type of moral hazard, as in Bernard et al.\ \cite{Bernardetal2015}.

\vspace{0.2cm}

In order to rule out \textit{ex post} moral hazard that might arise from a misreporting of the loss by the DM, an additional condition is often imposed \textit{ex ante} on the set of feasible indemnity schedules (as in Xu et al.\ \cite{Xuetal2018}). Such a condition is called the \textit{no-sabotage} condition, and it stipulates that admissible indemnity functions and the corresponding retention functions be comonotonic, hence resulting in the feasibility set $\hat{\mathcal{I}}$ given by:
\begin{equation*}
\hat{\mathcal{I}}:=\Big\{ \hat Y \in \mathcal{I} :\, \hat Y \ \hbox{and} \ \hat R = X- \hat Y \ \hbox{are comonotonic}\Big\}.
\end{equation*}

\noindent Since $Y \in \mathcal{I}$ is of the form $Y = I \circ X$, with $0\leq I(x)\leq x$ for all $x\in [0,M]$, we can write $\hat{\mathcal{I}}$ as
\begin{equation}\label{FeasibWithNSC}
\hat{\mathcal{I}}=\Big\{ \hat{Y}=\hat{I}\circ X \in B^+(\Sigma) :\, \hat{I}(0)=0, \, 0\leq \hat{I}(x_1)-\hat{I}(x_2)\leq x_1-x_2, \forall\, 0\leq x_2\leq x_1\Big\}.
\end{equation}

\vspace{0.2cm}

\noindent The no-sabotage condition is also sometimes referred to as \textit{incentive compatibility} by Xu et al.\ \cite{Xuetal2018}, and it is further studied by Huberman et al.\ \cite{Hubermanetal1983} and Carlier and Dana \cite{CarlierDana2003b}. The latter discuss various classes of \textit{ex ante} admissible contracts, as well as their implications of optimal indemnities.

\vspace{0.2cm}

\begin{remark}\label{CompactIhat}
Let $C[0,M]$ denote the set of all continuous functions on $[0,M]$ (and hence bounded), equipped with the supnorm $\Vert\cdot\Vert_{sup}$. Note that $\hat{\mathcal{I}}$ is a uniformly bounded subset of $C[0,M]$ consisting of Lipschitz-continuous functions $[0,M] \rightarrow [0,M]$, with common Lipschitz constant $K=1$. Therefore, $\hat{\mathcal{I}}$ is equicontinuous, and hence compact by the Arzel\`a-Ascoli theorem (e.g., \cite[Theorem IV.6.7]{Dunford}).
\end{remark}

\vspace{0.2cm}

In this paper, we will characterize optimal indemnity functions, both with and without the no-sabotage condition, in order to examine the impact of such an \textit{ex ante} requirement on feasible indemnity schedules. This will first be done in the general setting of a MEU-maximizing DM with a concave utility and an EU-maximizing insurer with concave utility (Section~\ref{section3}), and then in a setting where the insurer is risk-neutral (hence uses an expected-value premium principle).

%In this paper, we characterize optimal insurance contracts under MEU, both with and without the no-sabotage condition. In doing so, this paper sheds light on the consequences of the no-sabotage assumption on the construction of optimal insurance indemnities, in the presence of belief heterogeneity as well as multiple priors for the DM.

\vspace{0.4cm}

%====================================================================================
%====================================================================================
%====================================================================================

\section{Optimal Indemnity Functions}\label{section3}

In this section, we investigate the DM's problem of demand for insurance indemnification, when the DM is ambiguity-sensitive and has preferences %over insurance indemnification functions (or over wealth profiles)
admitting a MEU representation %$V^{\text{MEU}}: B(\Sigma) \rightarrow \mathbb{R}$
of the form given in eq.~\eqref{MEUfunctional}, whereas the insurer is a risk-averse EU-maximizer with a concave utility function $v$. We first examine in Section~\ref{section31} the class $\mathcal{I}$ of indemnities that are nonnegative and cannot exceed the loss $X$ (as defined in eq.~\eqref{FeasibWithoutNSC}), and we provide in Theorem \ref{propCase1V} a closed-form characterization of the optimal indemnity in this case.
We then consider in Section~\ref{section32} the class $\hat{\mathcal{I}}$ of indemnities that are such that both indemnity and retention functions are nondecreasing functions of the loss (as defined in eq.~\eqref{FeasibWithNSC}). In that case, Theorem \ref{propCase2V} provides an implicit characterization of the optimal indemnity function.

\vspace{0.2cm}

Let $\mathcal{F}$ denote the set of admissible indemnity functions, which could be either the set $\mathcal{I}$ defined in eq.~\eqref{FeasibWithoutNSC}, or the set $\hat{\mathcal{I}} \subset \mathcal{I}$ defined in eq.~\eqref{FeasibWithNSC}. For a given insurance budget $\Pi_0>0$ and a compact and convex set $\mathcal{C}$ of probability measures, the optimal indemnity function is obtained as the solution of the problem

\begin{equation}\label{optimV}
\left\{
\begin{aligned}
& \underset{Y\in \mathcal{F}}{\sup}\,\underset{P\in\mathcal{C}}{\inf}
& & \mathbb{E}_P [u( W_0-X+Y-\Pi_0) ]\\
& \text{s.t.}
& & \mathbb{E}_Q [v( W_0^{\text{Ins}}-(1+\rho)Y+\Pi_0) ] \geq v(W_0^{\text{Ins}}),
\end{aligned}\tag{$P$}
\right.
\end{equation}

\vspace{0.2cm}

\noindent where $\rho\geq0$ is a given safety loading factor. The constraint in~\eqref{optimV} is interpreted as the insurer's participation constraint.\footnote{All of this paper's results can be derived for any participation constraint of the form $\mathbb{E}_Q [v( W_0^{\text{Ins}}-(1+\rho)Y+\Pi_0) ] \geq k$, with $k\leq v(W_0^{\text{Ins}}+\Pi_0)$. To maintain a direct economic interpretation, we choose throughout the paper $k=v(W_0^{\text{Ins}})$, i.e., the insurer's reservation utility.} Observe that $\mathbb{E}_P[u(W_0-X+Y-\Pi_0)]\leq u(W_0-\Pi_0)$, for all $P\in\mathcal{C}$ and all $I\in\mathcal{F}$, and thus Problem~\eqref{optimV} is finite.

\vspace{0.2cm}

\begin{remark}\label{remDecomposition}
By the Lebesgue Decomposition Theorem, for any $P\in\mathcal{C}$ there are finite nonnegative countably additive measures $P_{ac}$ and $P_s$ on $(S,\Sigma)$ such that $P=P_{ac}+P_s$, where $P_{ac}\ll Q$ and $P_s\perp Q$. Hence, for each $P\in\mathcal{C}$, there exists some $A_P\in\Sigma$ and $h_P:S\rightarrow [0,\infty)$ such that $Q(S\setminus A_P)=P_s(A_P)=0$ and $h_P=dP_{ac}/dQ$. In particular, since $h_P$ is $\Sigma$-measurable, there exists a nonnegative Borel measurable function $\xi_P:\mathbb{R}_+\rightarrow\mathbb{R}_+$ such that $h_P=\xi_P\circ X$.
\end{remark}

\vspace{0.2cm}

\subsection{Without the No-Sabotage Condition}
\label{section31}

The solution $(Y^*,P^*)$ to Problem~\eqref{optimV}, when $\mathcal{F}=\mathcal{I}$ is given in the following result. 	

\begin{theorem}\label{propCase1V}
Suppose that the utility functions $u$ and $v$ satisfy Assumption~\ref{assumptionsUtility} and are, in addition, strictly concave and such that $\underset{x\rightarrow -\infty} \lim u'(x) = \underset{x\rightarrow -\infty}\lim v'(x) = +\infty$ and $\underset{x\rightarrow +\infty} \lim u'(x) = \underset{x\rightarrow +\infty} \lim v'(x)= 0$.\footnote{The limit conditions on $u$ and $v$ are the customary Inada \cite{Inada} conditions, often encountered in the literature.} Let $\mathcal{F}=\mathcal{I}$ as defined in eq.\ \eqref{FeasibWithoutNSC} be the set of admissible indemnity functions. Then there exists $P^*\in\mathcal{C}$ such that an optimal solution $Y^*\in\mathcal{I}$ of Problem~\eqref{optimV} is of the form:

\begin{equation}\label{YoptimCase1V}
Y^* = \tilde{Y}^*\mathrm{1}_{A\setminus A_{h^*}} + Y_{h^*}\mathrm{1}_{A_{h^*}} + X\mathrm{1}_{S\setminus A},
\end{equation}
where
\vspace{0.2cm}
\begin{enumerate}[label=(\alph*)]
\item $A\in\Sigma$ is such that $P^*=P^*_{ac}+P^*_s$, with $P^*_s(A)=Q(S\setminus A)=0$;
\vspace{0.2cm}
\item $h^*:S\rightarrow [0,\infty)$ is such that $h^*=dP^*_{ac}/dQ$;
\vspace{0.2cm}
\item $A_{h^*}:=\lbrace s\in A:\, h^*(s)=0\rbrace$;
\vspace{0.2cm}
\item $\tilde{Y}^*$ and $Y_{h^*}$ are of the form:
\vspace{0.2cm}
\begin{thmcases}
\item If $\lambda^*>0$, then $Y_{h^*}=0$ and $\tilde{Y}^*=\max\left[0, \min\left(X,Y^*_0\right)\right]$, where $Y^*_0$ solves
\[
u'(W_0-X(s)+Y(s)-\Pi_0)h^*(s)-\lambda^* (1+\rho)v'(W_0^{\text{Ins}}-(1+\rho)Y(s)+\Pi_0)=0, \, \forall s\in A\setminus A_{h^*};
\]
%\vspace{0.1cm}
\item If $\lambda^*=0$, then $\tilde{Y}^*=X$ and $Y_{h^*}$ solves
\[
\mathbb{E}_Q [v( W_0^{\text{Ins}}-(1+\rho)Y\mathrm{1}_{A_{h^*}} +\Pi_0)]= v(W_0^{\text{Ins}})-\mathbb{E}_Q [v( W_0^{\text{Ins}}-(1+\rho)X\mathrm{1}_{A\setminus {A_{h^*}} }+\Pi_0)];
\]
\end{thmcases}
\vspace{0.2cm}
\item $\lambda^*\in\mathbb{R}_+$ is such that $\mathbb{E}_Q [v( W_0^{\text{Ins}}-(1+\rho)Y^*+\Pi_0) ]= v(W_0^{\text{Ins}})$.
\end{enumerate}
\end{theorem}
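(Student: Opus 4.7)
The plan is to treat Problem~\eqref{optimV} as a concave--convex saddle problem, exhibit a saddle pair $(Y^*,P^*)$ via a minimax argument, and then characterize $Y^*$ from the first-order (KKT) conditions of the inner maximization with $P^*$ fixed, using the Lebesgue decomposition of $P^*$ with respect to $Q$ recorded in Remark~\ref{remDecomposition}.

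To produce the saddle pair, I would first observe that $(Y,P)\mapsto\mathbb{E}_P[u(W_0-X+Y-\Pi_0)]$ is concave in $Y$ (by concavity of $u$), affine and weak$^*$-continuous in $P$, and that $\mathcal{C}$ is weak$^*$-compact and convex, while the feasibility set $\mathcal{F}_{\text{part}}:=\{Y\in\mathcal{I}:\mathbb{E}_Q[v(W_0^{\text{Ins}}-(1+\rho)Y+\Pi_0)]\geq v(W_0^{\text{Ins}})\}$ is convex by concavity of $v$. Equipping $\mathcal{I}$ with the topology it inherits as a bounded subset of $L^\infty(Q)$ under the weak$^*$-topology (which is compact by Banach--Alaoglu), Sion's minimax theorem would yield $\sup\inf=\inf\sup$ together with a saddle pair $(Y^*,P^*)$.

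For $P^*$ fixed, I would form the Lagrangian
\[
\mathbb{E}_{P^*}\big[u(W_0-X+Y-\Pi_0)\big]+\lambda\Big(\mathbb{E}_Q\big[v(W_0^{\text{Ins}}-(1+\rho)Y+\Pi_0)\big]-v(W_0^{\text{Ins}})\Big),
\]
with $\lambda\geq 0$; Slater's condition holds (take $Y\equiv 0$, strictly feasible when $\Pi_0>0$), so strong duality applies and $Y^*$ maximizes the Lagrangian for some $\lambda^*\geq 0$. Decomposing $\mathbb{E}_{P^*}[\,\cdot\,]=\int_A(\,\cdot\,)\,h^*\,dQ+\int_{S\setminus A}(\,\cdot\,)\,dP^*_s$ (with $Q(S\setminus A)=0$), the insurer's term is blind to $Y$ on $S\setminus A$, so strict monotonicity of $u$ forces $Y^*=X$ there, which is the third summand of~\eqref{YoptimCase1V}. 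On $A$, the Lagrangian reduces to a $Q$-integral of a pointwise concave integrand in $y\in[0,X(s)]$, and a measurable selection argument reduces the problem to pointwise maximization.

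Next, on $A\setminus A_{h^*}$ strict concavity of $u$ and $v$ together with the Inada conditions will yield a unique interior critical point of the pointwise Lagrangian, namely the $Y^*_0$ defined implicitly by the first-order condition; truncation to $[0,X(s)]$ produces $\tilde Y^*$. On $A_{h^*}$ the DM's contribution is zero: if $\lambda^*>0$ the Lagrangian is strictly decreasing in $y$ and $Y_{h^*}=0$ (Case~(i)); if $\lambda^*=0$ the multiplier does not discriminate among values of $Y_{h^*}$, which must then be chosen so as to bind the participation constraint (Case~(ii)). The multiplier $\lambda^*$ itself is pinned down by complementary slackness. The hard part will be the minimax step: $\mathcal{I}$ is not \emph{a priori} compact in a topology that makes the DM's functional continuous uniformly over $P\in\mathcal{C}$, owing to the singular component $P^*_s$, so the minimax exchange will require either enlarging $\mathcal{I}$ to a suitably compact set and verifying \emph{a posteriori} that the optimizer lies in $\mathcal{I}$, or working directly with the dual characterization of $V^{\text{MEU}}$.
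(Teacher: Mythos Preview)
Your proposal is correct and follows essentially the same route as the paper: Sion's minimax theorem to obtain a saddle pair $(Y^*,P^*)$, the Lebesgue decomposition of $P^*$ with respect to $Q$ to split the state space, full insurance on $S\setminus A$, and a Lagrangian/pointwise first-order analysis on $A$ with the same case distinction on $\lambda^*$. The paper is no more careful than you about the compactness issue you flag for $\mathcal{I}_0$ (it simply asserts that Sion yields a saddle point), and it pins down $\lambda^*$ not via complementary slackness alone but by constructing an explicit upper bound $\overline{\lambda}$ so that the dual variable ranges over a compact interval.
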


\vspace{0.2cm}

\begin{proof}
Define the set $\mathcal{I}_0:=\big\lbrace Y\in\mathcal{I}:\, \mathbb{E}_Q [v( W_0^{\text{Ins}}-(1+\rho)Y+\Pi_0) ] \geq v(W_0^{\text{Ins}})\big\rbrace$. Observe that for $Y_1$, $Y_2\in\mathcal{I}$ and $\alpha\in (0,1)$, we have $\tilde{Y}:=\alpha Y_1+ (1-\alpha)Y_2\in \mathcal{I}$ and
\[
\mathbb{E}_Q[v(W_0^{\text{Ins}}-(1+\rho)\tilde{Y}+\Pi_0)] \geq \alpha \mathbb{E}_Q[v(W_0^{\text{Ins}}-(1+\rho)Y_1+\Pi_0)]  + (1-\alpha)\mathbb{E}_Q[v(W_0^{\text{Ins}}-(1+\rho)Y_2+\Pi_0)] \geq v(W_0^{\text{Ins}}),
\]
and thus the set $\mathcal{I}_0$ is convex. Moreover, the objective function in \eqref{optimV} is continuous and concave in $Y$ and continuous and linear in $P$, while $\mathcal{C}$ is weak$^*$-compact and convex set.  Hence the Sion's Minimax Theorem states that there exists a saddle point $(Y^*,P^*)\in\mathcal{I}_0\times\mathcal{C}$ such that
\begin{equation*}
\begin{split}
\sup_{Y\in\mathcal{I}_0}\,\inf_{P\in\mathcal{C}}\mathbb{E}_P [u(W_0-X+Y-\Pi_0)] & = \sup_{Y\in\mathcal{I}_0}\,\min_{P\in\mathcal{C}}\mathbb{E}_P [u(W_0-X+Y-\Pi_0)]  \\
& = \min_{P\in\mathcal{C}}\,\sup_{Y\in\mathcal{I}_0}\mathbb{E}_P [u(W_0-X+Y-\Pi_0)].
\end{split}
\end{equation*}
For $P^*\in\mathcal{C}$, to characterize the optimal indemnity $Y^*$, we focus on the following inner problem:
\begin{equation}\label{optimVc1}
\begin{aligned}
& \underset{Y\in\mathcal{I}}{\sup}
& & \mathbb{E}_{P^*}[u(W_0-X+Y-\Pi_0)]\\
& \text{s.t.}
& & \mathbb{E}_Q [v(W_0^{\text{Ins}}-(1+\rho)Y+\Pi_0)]\geq v(W_0^{\text{Ins}}).
\end{aligned}
\end{equation}

\vspace{0.2cm}

Problem~\eqref{optimVc1} is a convex optimization problem, since the constraint can be equivalently written as $\mathbb{E}_Q [v_1(W_0^{\text{Ins}}-(1+\rho)Y+\Pi_0)]\leq v_1(W_0^{\text{Ins}})$, where $v_1:=-v$ is a convex utility function. For $P^*\in\mathcal{C}$, let $A:=A_{P^*}$ and $h^*:=h_{P^*}$ be as in Remark~\ref{remDecomposition}, and consider the following two problems:
\begin{equation}\label{optimV1c1}
\begin{split}
\sup_{Y\in\mathcal{I}} \bigg\lbrace \int_{S\setminus A} u(W_0-X+Y-\Pi_0)dP^*_s : & \, 0\leq Y\mathrm{1}_{S\setminus A}\leq X\mathrm{1}_{S\setminus A}, \\
&  \int_{S\setminus A} v(W_0^{\text{Ins}}-(1+\rho)Y+\Pi_0)\,dQ =0 \bigg\rbrace.
\end{split}
\end{equation}

\begin{equation}\label{optimV2c1}
\sup_{Y\in\mathcal{I}} \bigg\lbrace \int_A u(W_0-X+Y-\Pi_0)h^*\,dQ :\, \int v(W_0^{\text{Ins}}-(1+\rho)Y+\Pi_0)\,dQ\geq v(W_0^{\text{Ins}})\bigg\rbrace.
\end{equation}

\vspace{0.2cm}

Observe that $\overline{Y}:=X$ is a feasible solution for Problem~\eqref{optimV1c1} and it holds that
\[
\int_{S\setminus A} u(W_0-X+\overline{Y}-\Pi_0)dP^*_s = u(W_0-\Pi_0)P^*_s(S\setminus A) \geq \int_{S\setminus A} u(W_0-X+Y-\Pi_0)dP^*_s,
\]
for any feasible solution $Y$ for Problem~\eqref{optimV1c1}. Hence $\overline{Y}=X$ is optimal for~\eqref{optimV1c1}.

\vspace{0.2cm}

Now, let $Y_1^*\in\mathcal{I}$ be an optimal solution for Problem~\eqref{optimV2c1}. We claim that $Y^*:=Y_1^*\mathrm{1}_A + X\mathrm{1}_{S\setminus A}$ is optimal for Problem~\eqref{optimVc1}. To see this, we remark that
\[
\int_S v(W_0^{\text{Ins}}-(1+\rho)Y^*+\Pi_0)\,dQ = \int_A v(W_0^{\text{Ins}}-(1+\rho)Y_1^*+\Pi_0)\,dQ \geq v(W_0^{\text{Ins}}),
\]
where the last inequality follows from the feasibility of $Y_1^*$ for~\eqref{optimV2c1}. Hence $Y^*$ is feasible for Problem~\eqref{optimVc1}. The optimality of $Y^*$ is then derived similar to \cite[Lemma C.6]{Ghossoub2019c}.

\vspace{0.2cm}

Next, we focus on the optimal indemnity $Y_1^*$ that solves Problem~\eqref{optimV2c1}. The associated Lagrange function is
\begin{equation*}
\mathcal{L}(Y_1,\lambda)  = \int_A \big[ u(W_0-X(s)+Y_1(s)-\Pi_0)h^*(s) + \lambda v(W_0^{\text{Ins}}-(1+\rho)Y_1(s)+\Pi_0) \big]\,dQ(s) - \lambda v(W_0^{\text{Ins}}),
\end{equation*}
where $\lambda\in\mathbb{R}_+$ is the Lagrange multiplier. As the domain $\mathcal{I}$ of $Y_1$ is convex and $\mathcal{L}(Y_1,\lambda)$ is continuous and concave in $Y_1$ and linear in $\lambda$, the strong duality holds, i.e.,
\begin{equation*}
\val(\mathcal{L}):=\sup_{Y_1\in\mathcal{I}}\inf_{\lambda\in\mathbb{R}_+}\mathcal{L}(Y_1,\lambda) =\inf_{\lambda\in\mathbb{R}_+}\sup_{Y_1\in\mathcal{I}}\mathcal{L}(Y_1,\lambda),
\end{equation*}
where the optimal value $\val(\mathcal{L})$ of Problem~\eqref{optimV2c1} is finite, since \eqref{optimV} is finite. For fixed $\lambda\in\mathbb{R}_+$, a necessary and sufficient condition for $Y_1^*\in\mathcal{I}$ to be the optimal solution of Problem~\eqref{optimV2c1} is
\begin{equation}\label{eqYoptimV1c1}
\lim_{\theta\rightarrow 0^+}\mathcal{L}'((1-\theta)Y_1^*+\theta Y_1)\leq 0,\,  \forall Y_1\in\mathcal{I}.
\end{equation}
By direct computation, \eqref{eqYoptimV1c1} becomes
\begin{equation}\label{eqY1optimV1c1}
\int_A [u'(W_0-X+Y_1^*-\Pi_0)h^*-\lambda (1+\rho)v'(W_0^{\text{Ins}}-(1+\rho)Y^*+\Pi_0)](Y_1-Y_1^*)\, dQ\leq 0,\, \forall Y_1\in\mathcal{I}.
\end{equation}

\noindent Define the following sets, depending on Lagrange multiplier $\lambda$:
\vspace{0.2cm}
\[
\left
  \{
  \begin{aligned}
	  A^+_\lambda & :=\lbrace s\in A : u'(W_0-X(s)+Y_1^*(s)-\Pi_0)h^*(s)-\lambda (1+\rho)v'(W_0^{\text{Ins}}-(1+\rho)Y_1^*(s)+\Pi_0)>0\rbrace, \\
    A^0_\lambda & :=\lbrace s\in A: u'(W_0-X(s)+Y_1^*(s)-\Pi_0)h^*(s)-\lambda (1+\rho)v'(W_0^{\text{Ins}}-(1+\rho)Y_1^*(s)+\Pi_0)=0\rbrace, \\
    A^-_\lambda & :=\lbrace s\in A: u'(W_0-X(s)+Y_1^*(s)-\Pi_0)h^*(s)-\lambda (1+\rho)v'(W_0^{\text{Ins}}-(1+\rho)Y_1^*(s)+\Pi_0)<0\rbrace.
  \end{aligned}\right.
\]

\vspace{0.2cm}

\noindent First, observe that on $A^+_\lambda$ and $A^-_\lambda$, condition~\eqref{eqY1optimV1c1} holds for all $Y_1\in\mathcal{I}$ only if
\begin{equation}\label{eqYAminusAplus}
Y_1^*\mathrm{1}_{A^+_\lambda}=X\mathrm{1}_{A^+_\lambda} \text{ and } Y_1^*\mathrm{1}_{A^-_\lambda}=0.
\end{equation}

\vspace{0.2cm}

\noindent Next, define the set $A_{h^*}:=\lbrace s\in A:\, h^*(s)=0\rbrace$. To obtain the structure of $Y_1^*$ in \eqref{optimV2c1}, we distinguish the following cases, depending on $\lambda$.

\vspace{0.2cm}

\begin{proofcases}[noitemsep,wide=0pt, leftmargin=\dimexpr\labelwidth + 2\labelsep\relax]
\item If $\lambda>0$, then $A_{h^*}\subseteq A^-_\lambda$ and thus $Y_1^*\mathrm{1}_{A_{h^*}}=0$. On $A^0_\lambda\setminus A_{h^*}$, $Y_1^*$ satisfies the following condition:
\begin{equation}\label{eqY0}
u'(W_0-X(s)+Y_1^*(s)-\Pi_0)h^*(s)-\lambda (1+\rho)v'(W_0^{\text{Ins}}-(1+\rho)Y_1^*(s)+\Pi_0)=0.
\end{equation}
\noindent Let $Y^*_0$ be the solution of~\eqref{eqY0}. In the view of~\eqref{eqYAminusAplus}, $Y_1^*$ is thus $Y_1^*\mathrm{1}_{A\setminus A_{h^*}} =\max\left[0, \min\left(X,Y^*_0\right)\right]\mathrm{1}_{A\setminus A_{h^*}}$, which depends on the state of the world only through $h^*$ and $X$.
\vspace{0.2cm}
\item If $\lambda=0$, then $A_{h^*}= A^0_\lambda$ and $u'(W_0-X(s)+Y_1^*(s)-\Pi_0)h^*(s)>0$, for all $s\in A\setminus A_{h^*}$. Thus $Y_1^*\mathrm{1}_A=Y\mathrm{1}_{A_{h^*}}+ X\mathrm{1}_{A\setminus A_{h^*}}$, for any feasible $Y\in\mathcal{I}$.
\end{proofcases}

The indemnity $Y_{1,\lambda}^*:=Y_1^*$, depending on $\lambda$, is the optimal solution of Problem~\eqref{optimV2c1} if there exists some $\lambda^*\geq 0$ such that $\mathbb{E}_Q[v(W_0^{\text{Ins}}-(1+\rho)Y_{1,\lambda^*}^*+\Pi_0)]=v(W_0^{\text{Ins}})$. To see this, define the constant $\overline{\lambda}:=\dfrac{\val(\mathcal{L})+\varepsilon}{v(W_0^{\text{Ins}}+\Pi_0)-v(W_0^{\text{Ins}})}\in\mathbb{R}_+$, for some large $\varepsilon>\big\vert u(W_0-\Pi_0)\big\vert P^*_{ac}(A)$. Then for any $\lambda>\overline{\lambda}$, we obtain
\begin{equation*}
\begin{split}
\sup_{Y_{1,\lambda}\in\mathcal{I}}\mathcal{L}(Y_{1,\lambda},\lambda) & \geq \mathcal{L}(0,\lambda) = \int_A u(W_0-X-\Pi_0)h^*dQ+ \lambda\bigg(\int_A v(W_0^{\text{Ins}}+\Pi_0)dQ- v(W_0^{\text{Ins}})\bigg) \\
& \geq  u(W_0-\Pi_0)P^*_{ac}(A)+\overline{\lambda}\big(v(W_0^{\text{Ins}}+\Pi_0)-v(W_0^{\text{Ins}})\big) =  u(W_0-\Pi_0)P^*_{ac}(A)+\val(\mathcal{L})+\varepsilon.
\end{split}
\end{equation*}

\vspace{0.2cm}

It follows that $\displaystyle{\inf_{\lambda\geq\overline{\lambda}}\,\sup_{Y_{1,\lambda}\in\mathcal{I}}}\mathcal{L}(Y_{1,\lambda},\lambda)>\val(\mathcal{L})$, a contradiction; hence the feasible set of $\lambda$ reduces to the compact interval $[0,\overline{\lambda}]$ and the strong duality of Problem~\eqref{optimV2c1} yields
\[
\sup_{Y_{1,\lambda}\in\mathcal{I}}\min_{\lambda\in [0,\overline{\lambda}]}\mathcal{L}(Y_{1,\lambda},\lambda) = \sup_{Y_{1,\lambda}\in\mathcal{I}}\mathcal{L}(Y_{1,\lambda^*},\lambda^*)=\mathcal{L}(Y_{1,\lambda^*}^*,\lambda^*),
\]
where $\mathbb{E}_Q[v(W_0^{\text{Ins}}-(1+\rho)Y_{1,\lambda^*}^*+\Pi_0)]=v(W_0^{\text{Ins}})$.
\end{proof}

\vspace{0.2cm}

Note that the optimal indemnity $Y^*$ in eq.~\eqref{YoptimCase1V} provides full insurance over the event $S\setminus A$, whenever the DM's worst-case measure $P^*$ is such that $P^*(S\setminus A)\neq\emptyset$. Note also that on the event $A$, $Y^*$ satisfies an ordinary differential equation for which an analytical expression is difficult to provide in general. However, for particular choices of $\mathcal{C}$, we can obtain numerically the structure of $Y^*$, as well as $P^*$ (see Example~\ref{exRenyi} in Section~\ref{section5}).

\vspace{0.4cm}

%======================================================================================
\subsection{With the No-Sabotage Condition}\label{section32}
Next we analyze the case when the set $\mathcal{F}$ of Problem~\eqref{optimV} is restricted to the set of indemnities satisfying the no-sabotage condition. In this case the feasibility set becomes:
%$$\hat{\mathcal{I}}:=\lbrace \hat{Y}=\hat{I}\circ X \in B^+(\Sigma) :\, \hat{I}(0)=0, \, 0\leq \hat{I}(x_1)-\hat{I}(x_2)\leq x_1-x_2, \forall\, 0\leq x_2\leq x_1 \rbrace$$
\begin{equation}\label{FeasibHatI}
\hat{\mathcal{I}}_0 := \Big\{ \hat{Y} \in \hat{\mathcal{I}} :\,\mathbb{E}_Q [v( W_0^{\text{Ins}}-(1+\rho)\hat{Y}+\Pi_0) ] \geq v(W_0^{\text{Ins}}) \Big\}.
\end{equation}

\vspace{0.2cm}

\begin{remark}\label{CompactIhat0}
Since $\hat{\mathcal{I}}$ is a compact subset of the space $\(C[0,M], \Vert\cdot\Vert_{sup}\)$ (see Remark~\ref{CompactIhat}), and $\hat{\mathcal{I}}_0$ is a closed subset of $\hat{\mathcal{I}}$, it follows that $\hat{\mathcal{I}}_0$ is compact.
\end{remark}

\vspace{0.2cm}

\begin{theorem}\label{propCase2V}
Suppose that the utility functions $u$ and $v$ satisfy Assumption~\ref{assumptionsUtility}. Let $\mathcal{F}=\hat{\mathcal{I}}$ as defined in eq.\ \eqref{FeasibWithNSC} be the set of admissible indemnity functions. Then there exists $P^*\in\mathcal{C}$ such that the optimal solution of Problem~\eqref{optimV}  is $\hat{Y}^*\in\hat{\mathcal{I}}$, $Q$-a.s. and is of the form
\begin{equation*}
\hat{Y}^*=  \hat{Y}_1^*\mathrm{1}_A + X\mathrm{1}_{S\setminus A},
\end{equation*}
where
\vspace{0.2cm}
\begin{enumerate}[label=(\alph*)]
\item $A\in\Sigma$ is such that $P^*=P^*_{ac}+P^*_s$, with $P^*_s(A)=Q(S\setminus A)=0$;
\vspace{0.2cm}
\item $h^*:S\rightarrow [0,\infty)$ is such that $h^*=dP^*_{ac}/dQ$;
\vspace{0.2cm}
\item $\xi^*:\mathbb{R}_+\rightarrow\mathbb{R}_+$ is a Borel measurable function such that $h^*=\xi^*\circ X$;
\vspace{0.2cm}
\item $\hat{Y}_1^*=\hat{I}^*\circ X$, where $\hat{I}^*(x)=\displaystyle{\int_0^x} (\hat{I}^*)'(t)\,dt$, $\forall x\in [0,M]$ and
\end{enumerate}
{\footnotesize
\[
(\hat{I}^*)'(t)=\begin{cases}
0, &\text{if } \int_{[t,M]\cap X(A)} (u'(W_0-x+\hat{I}^*(x)-\Pi_0)\xi^*(x)-\lambda^*(1+\rho)v'(W_0^{\text{Ins}}-(1+\rho)\hat{I}^*(x)+\Pi_0))\,dF_{X,Q}(x)
<0,\\
\kappa(t),  &\text{if }  \int_{[t,M]\cap X(A)} (u'(W_0-x+\hat{I}^*(x)-\Pi_0)\xi^*(x)-\lambda^*(1+\rho)v'(W_0^{\text{Ins}}-(1+\rho)\hat{I}^*(x)+\Pi_0))\,dF_{X,Q}(x)=0, \\
1,  &\text{if } \int_{[t,M]\cap X(A)} (u'(W_0-x+\hat{I}^*(x)-\Pi_0)\xi^*(x)-\lambda^*(1+\rho)v'(W_0^{\text{Ins}}-(1+\rho)\hat{I}^*(x)+\Pi_0))\,dF_{X,Q}(x)>0,
\end{cases}
\]}
for some Lebesgue measurable and $[0,1]$-valued function $\kappa$;
\begin{enumerate}[label=(\alph*), resume]
\vspace{0.2cm}
\item $\lambda^*\in\mathbb{R}_+$ is such that $\mathbb{E}_Q [v( W_0^{\text{Ins}}-(1+\rho)\hat{Y}^*+\Pi_0) ]= v(W_0^{\text{Ins}})$.
\end{enumerate}
\end{theorem}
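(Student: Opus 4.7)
The plan is to mirror the argument of Theorem~\ref{propCase1V}, but with the no-sabotage constraint handled through a one-dimensional parametrization of $\hat{I}$ by its almost-everywhere derivative. First, I would invoke Sion's minimax theorem: the set $\hat{\mathcal{I}}_0$ is convex (convex combinations of Lipschitz-$1$ indemnities vanishing at $0$ are of the same type, and the participation set is convex since $v$ is concave) and compact in $(C[0,M], \Vert\cdot\Vert_{sup})$ by Remark~\ref{CompactIhat0}; the set $\mathcal{C}$ is weak$^*$-compact and convex. The objective $(Y,P)\mapsto \mathbb{E}_P[u(W_0-X+Y-\Pi_0)]$ is continuous and concave in $Y$, and weak$^*$-continuous and linear in $P$. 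This produces a saddle point $(\hat{Y}^*, P^*)\in\hat{\mathcal{I}}_0\times\mathcal{C}$ and reduces the analysis to the inner maximization with $P^*$ fixed.

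Next, I would apply the Lebesgue decomposition of $P^*$ relative to $Q$ as in Remark~\ref{remDecomposition} to obtain the set $A$, the Radon--Nikodym derivative $h^*=dP^*_{ac}/dQ$, and the Borel map $\xi^*$ with $h^*=\xi^*\circ X$. The DM's objective decomposes into a $P^*_s$-integral over $S\setminus A$ and a $Q$-integral of $u(\cdot)h^*$ over $A$, while the insurer's participation constraint only involves the restriction of $\hat{Y}^*$ to $A$ since $Q(S\setminus A)=0$. Because $u$ is increasing and taking $\hat{I}^*(x)=x$ on $X(S\setminus A)$ remains consistent with the $1$-Lipschitz condition, choosing $\hat{Y}^*=X$ on $S\setminus A$ is optimal on that piece; this yields the representation $\hat{Y}^*= \hat{Y}_1^*\mathrm{1}_A + X\mathrm{1}_{S\setminus A}$ up to $Q$-a.s.\ equivalence.

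The core step is the characterization of $\hat{Y}_1^*=\hat{I}^*\circ X$ on $A$. Every $\hat{I}\in\hat{\mathcal{I}}$ is $1$-Lipschitz with $\hat{I}(0)=0$, hence admits the representation $\hat{I}(x)=\int_0^x g(t)\,dt$ for some Lebesgue-measurable $g:[0,M]\to[0,1]$, and conversely any such $g$ yields a feasible $\hat{I}$. I would form the Lagrangian
\[
\mathcal{L}(g,\lambda)=\int_A u\!\left(W_0-X+\textstyle\int_0^X g(t)\,dt-\Pi_0\right) h^*\,dQ + \lambda\Big(\mathbb{E}_Q\!\left[v\!\left(W_0^{\text{Ins}}-(1+\rho)\textstyle\int_0^X g(t)\,dt+\Pi_0\right)\right]- v(W_0^{\text{Ins}})\Big),
\]
and verify strong duality by bounding the admissible multipliers $\lambda$ exactly as in the proof of Theorem~\ref{propCase1V} (via the slack at $g\equiv 0$ and a reservation-utility estimate). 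Computing the directional derivative at the optimum $g^*$ in an admissible direction $\eta$, rewriting $\int_0^{X(s)}\eta(t)\,dt=\int_0^M \eta(t)\mathrm{1}_{\{t\leq X(s)\}}\,dt$, and applying Fubini, the first-order condition collapses to $\int_0^M \eta(t)\Phi(t)\,dt\leq 0$ for every feasible direction $\eta$, where
\[
\Phi(t):=\int_{[t,M]\cap X(A)}\!\big[u'(W_0-x+\hat{I}^*(x)-\Pi_0)\xi^*(x)-\lambda^*(1+\rho)v'(W_0^{\text{Ins}}-(1+\rho)\hat{I}^*(x)+\Pi_0)\big]\,dF_{X,Q}(x).
\]

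Since feasible perturbations at $g^*$ must satisfy $\eta(t)\geq 0$ where $g^*(t)=0$ and $\eta(t)\leq 0$ where $g^*(t)=1$, the variational inequality forces $g^*(t)=1$ on $\{\Phi>0\}$, $g^*(t)=0$ on $\{\Phi<0\}$, and leaves $g^*(t)\in[0,1]$ free on $\{\Phi=0\}$, which is exactly the piecewise description of $(\hat{I}^*)'$ in the statement (with $\kappa$ any measurable $[0,1]$-valued selector on the zero set of $\Phi$). The multiplier $\lambda^*\geq 0$ is then pinned down by complementary slackness, yielding $\mathbb{E}_Q[v(W_0^{\text{Ins}}-(1+\rho)\hat{Y}^*+\Pi_0)]=v(W_0^{\text{Ins}})$ whenever the constraint binds. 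The main technical obstacle will be the rigorous handling of admissible $L^\infty$-perturbations at saturation points of $g^*$ (so that $g^*+\theta\eta$ remains in $[0,1]$ for small $\theta>0$), together with the Fubini interchange that converts the pointwise variational inequality into the tail-integral characterization of $\Phi$.
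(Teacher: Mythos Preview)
Your proposal is correct and follows essentially the same approach as the paper: invoke Sion's minimax theorem on $\hat{\mathcal{I}}_0\times\mathcal{C}$, Lebesgue-decompose $P^*$ to split off the full-insurance piece on $S\setminus A$, form the Lagrangian on $A$, and then exploit the absolute continuity of $\hat{I}$ together with Fubini to convert the first-order variational inequality into the tail-integral condition $\Phi(t)\lessgtr 0$ that determines $(\hat{I}^*)'(t)$. Your explicit parametrization by $g=\hat{I}'$ and the careful treatment of admissible $L^\infty$-perturbations at the saturation points $g^*\in\{0,1\}$ are in fact more detailed than the paper's write-up, which simply computes the directional derivative at $\hat{I}^*$ and swaps the order of integration without dwelling on these points.
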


\vspace{0.2cm}

\begin{proof}
Similar to Theorem~\ref{propCase1V}, there exists a saddle point $(\hat{Y}^*,P^*)\in\hat{\mathcal{I}}_0\times\mathcal{C}$ such that
\begin{equation}\label{optimVc2}
\sup_{\hat{Y}\in\hat{\mathcal{I}}_0}\,\inf_{P\in\mathcal{C}}\mathbb{E}_P [u(W_0-X+\hat{Y}-\Pi_0)] = \min_{P\in\mathcal{C}}\,\max_{\hat{Y}\in\hat{\mathcal{I}}_0}\mathbb{E}_P [u(W_0-X+\hat{Y}-\Pi_0)],
\end{equation}
\noindent where $\hat{\mathcal{I}}_0$ given in eq.\ \eqref{FeasibHatI} is compact (see Remark~\ref{CompactIhat0}). For $P^*\in\mathcal{C}$, the inner optimization problem in~\eqref{optimVc2} becomes:
\begin{equation*}
\begin{aligned}
& \underset{\hat{Y}\in\hat{\mathcal{I}}}{\sup}
& & \int_A u(W_0-X+\hat{Y}-\Pi_0)h^*\,dQ+ \int_{S\setminus A} u(W_0-X+\hat{Y}-\Pi_0)\,dP^*_s \\
& \text{s.t.}
& & \int_A v(W_0^{\text{Ins}}-(1+\rho)\hat{Y}+\Pi_0)\,dQ\geq v(W_0^{\text{Ins}}),
\end{aligned}
\end{equation*}
where $A:=A_{P^*}$ and $h^*:=h_{P^*}$, depending on $P^*$, are defined in Remark~\ref{remDecomposition}. Similar to Theorem~\ref{propCase1V}, the optimal indemnity function $\hat{Y}^*$ can be obtained as $\hat{Y}^*=\hat{Y}_1^*\mathrm{1}_A + X\mathrm{1}_{S\setminus A}$, where $\hat{Y}_1^*$ solves Problem~\eqref{optimV1c2} below.

\begin{equation}\label{optimV1c2}
\sup_{\hat{Y}_1\in\hat{\mathcal{I}}} \bigg\lbrace \int u(W_0-X+\hat{Y}_1-\Pi_0)h^*\,dQ\, : \, \int v(W_0^{\text{Ins}}-(1+\rho)\hat{Y}_1+\Pi_0)\,dQ\geq v(W_0^{\text{Ins}})\bigg\rbrace.
\end{equation}

\vspace{0.2cm}

The Lagrange function of Problem~\eqref{optimV1c2} is
\begin{equation*}
\mathcal{L}(\hat{Y}_1,\lambda)  = \int_A \big[ u(W_0-X+\hat{Y}_1-\Pi_0)h^* + \lambda v(W_0^{\text{Ins}}-(1+\rho)\hat{Y}_1+\Pi_0) \big]\,dQ - \lambda v(W_0^{\text{Ins}}),
\end{equation*}
where $\lambda\in\mathbb{R}_+$ is the Lagrange multiplier. By strong duality, for fixed $\lambda\in\mathbb{R}_+$, a necessary and sufficient condition for $\hat{Y}_1^*\in\mathcal{I}$ to be the optimal solution of Problem~\eqref{optimV1c2} is
\begin{equation}\label{eqYoptimV1c2}
\lim_{\theta\rightarrow 0^+}\mathcal{L}'((1-\theta)\hat{Y}_1^*+\theta \hat{Y}_1)\leq 0,\,  \forall\,\hat{Y}_1\in\hat{\mathcal{I}}.
\end{equation}

\vspace{0.2cm}

Since $Q(S\setminus A)=0$, we can extend the domain of the integral above in~\eqref{eqYoptimV1c2} over $[0,M]$. By direct computation,~\eqref{eqYoptimV1c2} becomes:  for all $\hat{Y}_1=\hat{I}(X)\in\hat{\mathcal{I}}$,
\begin{equation} \label{eqY1optimV1c2}
\int_0^M [u'(W_0-t+\hat{I}^*(t)-\Pi_0)\xi^*(t)-\lambda (1+\rho)v'(W_0^{\text{Ins}}-(1+\rho)\hat{I}^*(t)+\Pi_0)](\hat{I}(t)-\hat{I}^*(t))\, dQ\circ X^{-1}(t)\leq 0.
\end{equation}
As any $\hat{Y}_1=\hat{I}(X)\in\hat{\mathcal{I}}$ is absolutely continuous, it is almost everywhere differentiable on $[0,M]$, and hence~\eqref{eqY1optimV1c2} becomes
\begin{equation*}
\begin{split}
0 & \geq \int_0^M\int_0^t (u'(W_0-t-\hat{I}^*(t)-\Pi_0)\xi^*(t)-\lambda (1+\rho)v'(W_0^{\text{Ins}}-(1+\rho)\hat{I}^*(t)+\Pi_0))\times\\
& \hspace*{10cm} \times (\hat{I}'(x)-(\hat{I}^*)'(x))\,dx\,dQ\circ X^{-1}(t) \\
& = \int_0^M\int_x^M (u'(W_0-t-\hat{I}^*(t)-\Pi_0)\xi^*(t)-\lambda (1+\rho)v'(W_0^{\text{Ins}}-(1+\rho)\hat{I}^*(t)+\Pi_0))\,dQ\circ X^{-1}(t)\times \\
& \hspace*{12cm} \times (\hat{I}'(x)-(\hat{I}^*)'(x))\,dx,
\end{split}
\end{equation*}
for all $\hat{Y}_1=\hat{I}(X) \in\hat{\mathcal{I}}$; hence $\hat{Y}^*_1=\hat{I}^*(X)$ is of the form:
\[
(\hat{I}^*)'(x) =\begin{cases}
0, &\mbox{if } \int_{[t,M]\cap X(A)} \tau(x)\,dQ\circ X^{-1}(x)<0, \\
\kappa(t),  &\mbox{if }  \int_{[t,M]\cap X(A)} \tau(x)\,dQ\circ X^{-1}(x) =0, \\
1,  &\mbox{if } \int_{[t,M]\cap X(A)} \tau(x)\,dQ\circ X^{-1}(x)>0,
\end{cases}
\]
\noindent where $\tau(x):= u'(W_0-x+\hat{I}^*(x)-\Pi_0)\xi^*(x)-\lambda^*(1+\rho)v'(W_0^{\text{Ins}}-(1+\rho)\hat{I}^*(x)+\Pi_0)$ and
$\kappa$ is some Lebesgue measurable and $[0,1]$-valued function. The existence of the Lagrange multiplier $\lambda^*\in\mathbb{R}_+$ that guarantees the existence of the solution $\hat{Y}_1^*$ follows similar to Theorem~\ref{propCase1V}.
\end{proof}

\vspace{0.2cm}

Theorem~\ref{propCase2V} above provides a general characterization of the optimal solution of Problem~\eqref{optimV}, when the set of admissible indemnity functions is given by $\hat{\mathcal{I}}$. The exact structure of the optimal indemnity $\hat{Y}^*$ may be difficult to interpret, due to its implicit form. However, under the (more common) assumption of risk-neutrality of the insurer, closed-form solutions for $\hat{Y}^*$ can be obtained, as seen in the next section.

\vspace{0.4cm}

%====================================================================================
%====================================================================================
%====================================================================================

\section{The case of a Risk-Neutral Insurer}\label{section4}

In this section, we examine the case of a risk-neutral insurer, that is, when the utility function $v$ is linear. We first consider in Section \ref{subsection1} the subcase in which the DM's utility function $u$ is concave, and characterize the optimal indemnity both without and with the no-sabotage condition. In the former case, we obtain a closed-form characterization of the optimal indemnity (Proposition~\ref{propCase1}), whereas in the latter case, the optimal indemnity is determined implicitly (Proposition~\ref{propCase2}). The results are illustrated in Example~\ref{exGeneralU} for a specific ambiguity set $\mathcal{C}$, and closed-form solutions are obtained. We then examine in Section~\ref{subsection2} the subcase in which the DM's utility function $u$ is linear. In that case, we also characterize the optimal indemnity both with and without the no-sabotage condition, and then provide two illustrative examples.

\vspace{0.2cm}

\subsection{The Case of a Concave Utility for the DM}\label{subsection1}
In this section, we study Problem~\eqref{optimV} under the assumption of risk neutrality of insurer:

\begin{equation*}\label{optim}
\left\{
\begin{aligned}
& \underset{Y\in \mathcal{F}}{\sup}\,\underset{P\in\mathcal{C}}{\inf}
& & \mathbb{E}_P [u( W_0-X+Y-\Pi_0) ]\\
& \text{s.t.}
& & (1+\rho)\mathbb{E}_Q [Y]\leq \Pi_0.
\end{aligned}\tag{$P_1$}
\right.
\end{equation*}

\vspace{0.2cm}

If $(1+\rho)\mathbb{E}_Q[X]\leq \Pi_0$, then we can eliminate the constraint in Problem~\eqref{optim}, as the DM's budget is large enough. In this case, the optimal indemnity is $Y^*=X,\, Q$-a.s. In the following, we assume that $(1+\rho)\mathbb{E}_Q[X]>\Pi_0$.

\vspace{0.2cm}

\subsubsection{Without the No-Sabotage Condition}

\begin{proposition}\label{propCase1}
Suppose that the utility function $u$ satisfies Assumption~\ref{assumptionsUtility} and is, in addition, strictly concave and such that $\underset{x\rightarrow -\infty} \lim u'(x) = +\infty$ and $\underset{x\rightarrow +\infty} \lim u'(x) = 0$. Let $\mathcal{F}=\mathcal{I}$ as defined in eq.\ \eqref{FeasibWithoutNSC} be the set of admissible indemnity functions. Then there exists $P^*\in\mathcal{C}$ such that an optimal solution $Y^*\in\mathcal{I}$ of Problem~\eqref{optim} is of the form:
\begin{equation}\label{optimY}
Y^*=  (X-R^*)\mathrm{1}_{A\setminus A_{h^*}} + (X-R_{h^*})\mathrm{1}_{A_{h^*}} + X\mathrm{1}_{S\setminus A},
\end{equation}
where
\vspace{0.2cm}
\begin{enumerate}[label=(\alph*)]
\item \label{itm:1c1} $A\in\Sigma$ is such that $P^*=P^*_{ac}+P^*_s$, with $P^*_s(A)=Q(S\setminus A)=0$;
\vspace{0.2cm}
\item \label{itm:2c1} $h^*:S\rightarrow [0,\infty)$ is such that $h^*=dP^*_{ac}/dQ$;
\vspace{0.2cm}
\item \label{itm3c1} $A_{h^*}:=\lbrace s\in A:\, h^*(s)=0\rbrace$;
\vspace{0.2cm}
\item \label{itm4c1} $R^*$ and $R_{h^*}$ are given by:
\vspace{0.2cm}
\end{enumerate}
\begin{thmcases}
\item \label{case1c1} If $(1+\rho)\mathbb{E}_Q[X\mathrm{1}_{S\setminus A_{h^*}}]> \Pi_0$, then $R^*=\max\left[0, \min\left(X,W_0-\Pi_0-(u^\prime)^{-1}\left(\dfrac{\lambda^*}{h^*}\right)\right)\right]\mathrm{1}_{A\setminus A_{h^*}}$ and $R_{h^*}=X\mathrm{1}_{A_{h^*}}$, $Q$-a.s., where $\lambda^*\in\mathbb{R}_+$ is such that $(1+\rho)\mathbb{E}_Q[Y^*]=\Pi_0$;
\vspace{0.6cm}
\item \label{case2c1} If $(1+\rho)\mathbb{E}_Q[X\mathrm{1}_{S\setminus A_{h^*}}]\leq \Pi_0$, then $R^*=0$, $Q$-a.s. and $R_{h^*}=c\,X \mathrm{1}_{A_{h^*}}$, where $c\in (0,1]$ is defined as $c:=\dfrac{\mathbb{E}_Q[X]-(1+\rho)^{-1}\Pi_0}{\mathbb{E}_Q[X\mathrm{1}_{A_{h^*}}]}$.
\end{thmcases}
\end{proposition}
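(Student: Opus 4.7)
The plan is to obtain Proposition \ref{propCase1} as a specialization of Theorem \ref{propCase1V} to the case of a linear $v$, so that the insurer's participation constraint collapses to the linear budget constraint $(1+\rho)\mathbb{E}_Q[Y]\leq\Pi_0$. I would begin exactly as in the proof of Theorem \ref{propCase1V} by invoking Sion's minimax theorem to extract a saddle point $(Y^*,P^*)\in\mathcal{I}_0\times\mathcal{C}$, and then using the Lebesgue decomposition of Remark \ref{remDecomposition} to split $P^*=P^*_{ac}+P^*_s$ with $P^*_s(A)=Q(S\setminus A)=0$ and $h^*=dP^*_{ac}/dQ$. On the set $S\setminus A$ only the singular part of $P^*$ puts mass, and since the budget constraint is a $Q$-constraint it is insensitive to the values of $Y$ there; pointwise maximization of $u(W_0-X+Y-\Pi_0)$ subject to $0\le Y\le X$ then yields $Y^*=X$, which accounts for the third term in \eqref{optimY}.

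The rest of the work is to solve the inner problem restricted to $A$. Its Lagrangian is
\[
\mathcal{L}(Y_1,\lambda)=\int_A u(W_0-X+Y_1-\Pi_0)\,h^*\,dQ-\lambda\bigl((1+\rho)\mathbb{E}_Q[Y_1\mathrm{1}_A]-\Pi_0\bigr),
\]
and by strong duality (exactly as in Theorem \ref{propCase1V}) the pointwise first-order condition
\[
u'(W_0-X+Y^*-\Pi_0)\,h^*=\lambda^*(1+\rho)\quad\text{on }A\setminus A_{h^*}
\]
is necessary and sufficient for optimality. Inverting $u'$ (well defined thanks to strict concavity of $u$ and the Inada conditions) and rewriting in terms of $R^*=X-Y^*$, one obtains $R^*=W_0-\Pi_0-(u')^{-1}(\lambda^*/h^*)$ on $A\setminus A_{h^*}$ after absorbing $(1+\rho)$ into $\lambda^*$, and the feasibility clipping $0\le Y^*\le X$ produces the $\max[0,\min(X,\cdot)]$ formula in the statement. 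On $A_{h^*}$ the DM's marginal utility vanishes (since $h^*=0$), so the indemnity there enters only through the budget constraint.

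The case split then follows the sign of $\lambda^*$. If $(1+\rho)\mathbb{E}_Q[X\mathrm{1}_{S\setminus A_{h^*}}]>\Pi_0$, then even setting $Y=0$ on $A_{h^*}$ makes it impossible to fully insure $A\setminus A_{h^*}$ within budget, so the constraint must bind with $\lambda^*>0$; moreover, since insuring $A_{h^*}$ yields no objective gain, spending any budget there is strictly suboptimal, giving $R_{h^*}=X\mathrm{1}_{A_{h^*}}$ (i.e.\ $Y=0$ on $A_{h^*}$) and recovering Case \textbf{1}. Existence of $\lambda^*\ge 0$ is obtained from the continuity and monotonicity of $\lambda\mapsto(1+\rho)\mathbb{E}_Q[Y_\lambda^*]$ via the same bounded-multiplier/intermediate-value argument used in Theorem \ref{propCase1V}. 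If instead $(1+\rho)\mathbb{E}_Q[X\mathrm{1}_{S\setminus A_{h^*}}]\le\Pi_0$, then $\lambda^*=0$ is feasible, the pointwise optimum on $A\setminus A_{h^*}$ collapses to $R^*=0$ (full insurance), and the residual budget $\Pi_0-(1+\rho)\mathbb{E}_Q[X\mathrm{1}_{A\setminus A_{h^*}}]$ may be freely allocated on $A_{h^*}$. Imposing the proportional form $R_{h^*}=cX\mathrm{1}_{A_{h^*}}$ and solving $(1+\rho)\mathbb{E}_Q[Y^*]=\Pi_0$ directly yields $c=(\mathbb{E}_Q[X]-(1+\rho)^{-1}\Pi_0)/\mathbb{E}_Q[X\mathrm{1}_{A_{h^*}}]$.

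The main obstacle I anticipate is Case \textbf{2}, where the DM's objective is flat in the $A_{h^*}$-allocation, so the optimizer is not unique and one must verify that the advertised $c$ lies in $(0,1]$: positivity follows from the standing assumption $(1+\rho)\mathbb{E}_Q[X]>\Pi_0$, while $c\le 1$ is equivalent to $\mathbb{E}_Q[X]-\mathbb{E}_Q[X\mathrm{1}_{A_{h^*}}]\le(1+\rho)^{-1}\Pi_0$, which is precisely the Case \textbf{2} hypothesis. Feasibility of $R_{h^*}=cX\mathrm{1}_{A_{h^*}}$ as an indemnity is then automatic since $0\le (1-c)X\le X$ whenever $c\in[0,1]$, completing the identification of $Y^*$ in \eqref{optimY}.
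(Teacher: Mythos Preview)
Your proposal is correct and follows essentially the same route as the paper's own proof: Sion's minimax to obtain a saddle point, Lebesgue decomposition of $P^*$ to split off the singular part (yielding full insurance on $S\setminus A$), a Lagrangian pointwise analysis on $A$ with the first-order condition $u'(W_0-R^*-\Pi_0)h^*=\lambda^*$, and the case split on whether $\lambda^*>0$ or $\lambda^*=0$ governed by the comparison of $\mathbb{E}_Q[X\mathrm{1}_{A_{h^*}}]$ with $\tilde{\Pi}_0$. The only cosmetic difference is that the paper reformulates Problem~\eqref{optim} in terms of the retention $R=X-Y$ from the outset (so the constraint becomes $\mathbb{E}_Q[R]\ge\tilde{\Pi}_0$ and the $(1+\rho)$ factor never appears in the Lagrangian), whereas you work with $Y$ and absorb $(1+\rho)$ into $\lambda^*$; your explicit verification that $c\in(0,1]$ in Case~\textbf{2} is in fact slightly more detailed than the paper's.
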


\vspace{0.2cm}

\begin{proof}
Letting  $R:=X-Y$ be the retention random variables, Problem~\eqref{optim} becomes
\begin{equation}\label{optimc1}
\begin{aligned}
& \underset{R\in\mathcal{I}}{\sup}\,\underset{P\in\mathcal{C}}{\inf}
& & \mathbb{E}_P [u(W_0-R-\Pi_0)]\\
& \text{s.t.}
& & \mathbb{E}_Q [R]\geq \tilde{\Pi}_0,
\end{aligned}
\end{equation}
where $\tilde{\Pi}_0:=\mathbb{E}_Q[X]-(1+\rho)^{-1}\Pi_0$. For fixed $\Pi_0$, define the convex set $\mathcal{I}_0:=\lbrace R\in\mathcal{I}:\, \mathbb{E}_Q [R]\geq \tilde{\Pi}_0\rbrace$. Problem~\eqref{optimc1} fulfills the conditions of Sion's Minimax Theorem, and thus there exists some $R^*\in\mathcal{I}_0$ and $P^*\in \mathcal{C}$ such that
\begin{equation*}
\sup_{R\in\mathcal{I}_0}\,\inf_{P\in\mathcal{C}}\mathbb{E}_P [u(W_0-R-\Pi_0)] = \min_{P\in\mathcal{C}}\,\sup_{R\in\mathcal{I}_0}\mathbb{E}_P [u(W_0-R-\Pi_0)].
\end{equation*}
For $P^*\in\mathcal{C}$, the inner problem becomes:
\begin{equation}\label{optim1c1}
\begin{aligned}
& \underset{R\in\mathcal{I}}{\sup}
& & \int_A u(W_0-R-\Pi_0)h^*\,dQ+ \int_{S\setminus A} u(W_0-R-\Pi_0)\,dP^*_s \\
& \text{s.t.}
& & \int_A R(s)\,dQ(s)\geq \tilde{\Pi}_0,
\end{aligned}
\end{equation}
where $A:=A_{P^*}$ and $h^*:=h_{P^*}$ are defined in Remark~\ref{remDecomposition}. Based on the splitting technique in \cite[Lemma C.6]{Ghossoub2019c}, the optimal retention function $R^*$ can be obtained as $R^*=R_1^*\mathrm{1}_A$, where $R_1^*$ solves Problem~\eqref{optim2c1} below.

\begin{equation}\label{optim2c1}
\sup_{R_1\in\mathcal{I}} \bigg\lbrace \int u(W_0-R_1-\Pi_0)h^*\,dQ\, : \, \int R_1 dQ \geq \tilde{\Pi}_0\bigg\rbrace.
\end{equation}

\vspace{0.2cm}

Following the proof of Theorem~\ref{propCase1V}, for a fixed $\lambda\in\mathbb{R}_+$,  a necessary and sufficient condition for $R_1^*\in\mathcal{I}$ to be the optimal solution of Problem~\eqref{optim2c1} is
\begin{equation}\label{eqRoptim}
\begin{aligned}
& \int_A \big[u'(W_0-R_1^*(s)-\Pi_0)h^*(s)-\lambda\big](R_1^*(s)-R_1(s))dQ(s)\leq 0,\, \forall R_1\in\mathcal{I}.
\end{aligned}
\end{equation}

\noindent Next, we define the following sets, depending on Lagrange multiplier $\lambda\in\mathbb{R}_+$:
\[
\left \{\begin{aligned}
	  A^+_\lambda & :=\lbrace s\in A : u'(W_0-R_1^*(s)-\Pi_0)h^*(s)-\lambda>0\rbrace, \\
    A^0_\lambda & :=\lbrace s\in A: u'(W_0-R_1^*(s)-\Pi_0)h^*(s)-\lambda=0\rbrace, \\
    A^-_\lambda & :=\lbrace s\in A: u'(W_0-R_1^*(s)-\Pi_0)h^*(s)-\lambda<0\rbrace.
    \end{aligned}\right.
\]

\vspace{0.2cm}

\noindent Clearly, condition~\eqref{eqRoptim} holds for all $R_1\in\mathcal{I}$ only if
\begin{equation}\label{eqAminusAplus}
R_1^*\mathrm{1}_{A^+_\lambda}=0 \text{ and } R_1^*\mathrm{1}_{A^-_\lambda}=X\mathrm{1}_{A^-_\lambda}.
\end{equation}
Next, define the set $A_{h^*}:=\lbrace s\in A:\, h^*(s)=0\rbrace$. To characterize $R_1^*$ in \eqref{optim2c1}, we consider the following cases:
\vspace{0.2cm}
\begin{proofcases}[noitemsep,wide=0pt, leftmargin=\dimexpr\labelwidth + 2\labelsep\relax]
\item \label{case1lambda} If $\lambda>0$, then $A_{h^*}\subseteq A^-_\lambda$ and thus $R_1^*\mathrm{1}_{A_{h^*}}=X\mathrm{1}_{A_{h^*}}$. On $A^0_\lambda$, $R_1^*$ satisfies the following condition:
\[
u'(W_0-R_1^*(s)-\Pi_0)h^*(s)-\lambda=0.
\]
By equation~\eqref{eqAminusAplus} and strict monotonicity of $u$, the first order condition yields:
\begin{equation}\label{eqRcase1propCase1}
R_1^*\mathrm{1}_A =\max\left[0, \min\left(X,W_0-\Pi_0-(u^\prime)^{-1}\left(\dfrac{\lambda}{h^*}\right)\right)\right]\mathrm{1}_{A\setminus A_{h^*}} + X\mathrm{1}_{A_{h^*}},
\end{equation}
which depends on the state of the world only through $h^*$ and $X$.
\vspace{0.2cm}
\item \label{case2lambda} If $\lambda=0$, then $A_{h^*} = A^0_\lambda$ and $u'(W_0-R_1^*(s)-\Pi_0)h^*(s)>0$, for all $s\in A\setminus A_{h^*}$. Moreover, in this case, $Q(A_{h^*})>0$; otherwise, $R_1^*\mathrm{1}_A=0$, $Q$-a.s., leading to a contradiction, since $\mathbb{E}_Q[R_1^*]=0<\tilde{\Pi}_0$. Thus, according to~\eqref{eqAminusAplus}, $R_1^*\mathrm{1}_A=R\mathrm{1}_{A_{h^*}}$, for any feasible $R\in\mathcal{I}$.
\end{proofcases}

\vspace{0.2cm}

The retention $R_1^*$ is the optimal solution of Problem~\eqref{optim2c1} if there exists some $\lambda^*\in\mathbb{R}$ such that $\mathbb{E}_Q[R_1^*]=\tilde{\Pi}_0$. We will denote this as $R_{1,\lambda}^*$, to emphasize the dependence on $\lambda$. Now define the function $\varphi:\mathbb{R}_+\rightarrow\mathbb{R}_+$, $\varphi(\lambda):=\mathbb{E}_Q[R_{1,\lambda}^*]$. Since $R_{1,\lambda}^*$ is continuous in $\lambda$, and $R_{1,\lambda}^*\leq X$, by Lebesgue Dominated Convergence Theorem, it follows that $\varphi(\lambda)$ is continuous in $\lambda$. Moreover, $\varphi$ is a nondecreasing function of $\lambda$ and satisfies $\lim_{\lambda\rightarrow 0}\varphi(\lambda)=\mathbb{E}_Q[R\mathrm{1}_{A_{h^*}}]$, for some $R\in\mathcal{I}$ and $\lim_{\lambda\rightarrow \infty}\varphi(\lambda)=\mathbb{E}_Q[X]$. If $\tilde{\Pi}_0 > \mathbb{E}_Q[X\mathrm{1}_{A_{h^*}}]$, there exists some $\lambda^*>0$ such that $\varphi(\lambda^*)=\tilde{\Pi}_0$. Otherwise, $\lambda^*=0$.

\vspace{0.2cm}

If $\mathbb{E}_Q[X\mathrm{1}_{A_{h^*}}]< \tilde{\Pi}_0$, this implies that $\lambda^* > 0$, and therefore $R_1^*$ is as in~\eqref{eqRcase1propCase1}. Now, if $\mathbb{E}_Q[X\mathrm{1}_{A_{h^*}}]\geq\tilde{\Pi}_0$, then according to Cases~\ref{case1lambda} and~\ref{case2lambda} and the discussion above, $\lambda^*=0$ and $R_{1,\lambda}^*\mathrm{1}_{A_{h^*}}=R\mathrm{1}_{A_{h^*}}$, for any $R\in\mathcal{I}$ that is feasible for Problem~\eqref{optim}. For instance, $R\mathrm{1}_{A_{h^*}}=c\,X\mathrm{1}_{A_{h^*}}$ with $c:=\dfrac{\tilde{\Pi}_0}{\mathbb{E}_Q[X\mathrm{1}_{A_{h^*}}]}\in (0,1]$ satisfies $R\mathrm{1}_{A_{h^*}}\leq X\mathrm{1}_{A_{h^*}}$ and the constraint $\mathbb{E}_Q[R]=\tilde{\Pi}_0$, and hence it is optimal for~\eqref{optim}.
\end{proof}

% \vspace{0.2cm}

%Note that the conditions imposed on the utility function $u$ in Proposition~\ref{propCase1} are weaker than the Inada-type conditions, often encountered in the literature.

\vspace{0.4cm}

%======================================================================================

\begin{remark}\normalfont
In the setting of Proposition~\ref{propCase1}, Case~\eqref{case2c1}, an important special case is when $h^*\equiv 0$, i.e., $P^*\perp Q$, where $P^*$ is the worst-case measure that attains the infimum in~\eqref{optim}. Thus the optimal indemnity function is $Y^*=(X-R_{h^*})\mathrm{1}_A + X\mathrm{1}_{S\setminus A}$, where $R_{h^*}\in\mathcal{I}$ satisfies $\int_A R_{h^*}(s)\,dQ(s)=\tilde{\Pi}_0$. A possible choice for $R_h^*$ is shown in Proposition~\ref{propCase1}.
\end{remark}

\vspace{0.4cm}

%======================================================================================

\subsubsection{With the No-Sabotage Condition}

The following result characterizes the optimal indemnity $Y^*$, when the no-sabotage condition is enforced.

\vspace{0.2cm}

\begin{proposition}\label{propCase2}

Suppose that the utility function $u$ satisfies Assumption~\ref{assumptionsUtility} and let $\mathcal{F}=\hat{\mathcal{I}}$ as defined in eq.\ \eqref{FeasibWithNSC} be the set of admissible indemnity functions. Then there exists $P^*\in\mathcal{C}$ such that the optimal solution of Problem~\eqref{optim} is $\hat{Y}^*\in\hat{\mathcal{I}}$, $Q$-a.s. and is of the form
\begin{equation*}
\hat{Y}^*=  (X-\hat{R}^*)\mathrm{1}_A + X\mathrm{1}_{S\setminus A},
\end{equation*}
where
\vspace{0.2cm}
\begin{enumerate}[label=(\alph*)]
\item $A\in\Sigma$ is such that $P^*=P^*_{ac}+P^*_s$, with $P^*_s(A)=Q(S\setminus A)=0$;
\vspace{0.2cm}
\item $h^*:S\rightarrow [0,\infty)$ is such that $h^*=dP^*_{ac}/dQ$;
\vspace{0.2cm}
\item $\xi^*:\mathbb{R}_+\rightarrow\mathbb{R}_+$ is a Borel measurable function such that $h^*=\xi^*\circ X$;
\vspace{0.2cm}
\item $\hat{R}^*=\hat{r}^*\circ X$, where $\hat{r}^*(x)=\displaystyle{\int_0^x} (\hat{r}^*)'(t)\,dt$, $\forall x\in [0,M]$ and
\[
(\hat{r}^*)'(t)=\begin{cases}
0, &\mbox{if } \int_{[t,M]\cap X(A)} (u'(W_0-\hat{r}^*(x)-\Pi_0)\xi^*(x)-\lambda^*)\,dQ\circ X^{-1}(x)>0,\\
\kappa(t),  &\mbox{if }  \int_{[t,M]\cap X(A)} (u'(W_0-\hat{r}^*(x)-\Pi_0)\xi^*(x)-\lambda^*)\,dQ\circ X^{-1}(x)=0,\\
1,  &\mbox{if } \int_{[t,M]\cap X(A)} (u'(W_0-\hat{r}^*(x)-\Pi_0)\xi^*(x)-\lambda^*)\,dQ\circ X^{-1}(x)<0,
\end{cases}
\]
for some Lebesgue measurable and $[0,1]$-valued function $\kappa$;
\vspace{0.2cm}
\item $\lambda^*\in\mathbb{R}_+$ is such that $(1+\rho)\mathbb{E}_Q[\hat{Y}^*]=\Pi_0$.
\end{enumerate}
\end{proposition}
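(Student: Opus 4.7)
The plan is to mirror the structure of the proof of Theorem~\ref{propCase2V}, exploiting the simplification that a risk-neutral insurer ($v(z) = z$) collapses the participation constraint into the linear budget constraint $(1+\rho)\mathbb{E}_Q[\hat{Y}] \leq \Pi_0$. First I would invoke Sion's minimax theorem to obtain a saddle point $(\hat{Y}^*, P^*) \in \hat{\mathcal{I}}_0 \times \mathcal{C}$, using that $\hat{\mathcal{I}}_0$ is convex and compact (Remark~\ref{CompactIhat0}), $\mathcal{C}$ is convex and weak$^*$-compact, and the objective is concave in $\hat{Y}$ (by concavity of $u$) and affine in $P$.

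Next, for fixed $P^* \in \mathcal{C}$, I would apply the Lebesgue decomposition from Remark~\ref{remDecomposition} to write $P^* = P^*_{ac} + P^*_s$, with support event $A$ and Radon--Nikodym derivative $h^* = \xi^* \circ X$. The inner maximization over $\hat{Y} \in \hat{\mathcal{I}}$ splits as
\begin{equation*}
\int_A u(W_0 - X + \hat{Y} - \Pi_0)\, h^* \, dQ + \int_{S\setminus A} u(W_0 - X + \hat{Y} - \Pi_0)\, dP^*_s,
\end{equation*}
subject to $(1+\rho)\int \hat{Y}\, dQ \leq \Pi_0$. Since $Q(S\setminus A) = 0$, values of $\hat{Y}$ on $S\setminus A$ are irrelevant to the constraint, and since $u$ is strictly increasing and $\hat{Y} \leq X$ pointwise, setting $\hat{Y} = X$ on $S\setminus A$ maximizes the $P^*_s$-integrand pointwise. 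A splitting argument analogous to \cite[Lemma C.6]{Ghossoub2019c} then shows the optimum has the form $\hat{Y}^* = \hat{Y}_1^* \mathrm{1}_A + X\, \mathrm{1}_{S\setminus A}$, $Q$-a.s., where $\hat{Y}_1^* = \hat{I}^* \circ X$ solves the reduced problem with the same linear budget constraint.

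For this reduced problem I would set up the Lagrangian $\mathcal{L}(\hat{Y}_1, \lambda) = \int u(W_0 - X + \hat{Y}_1 - \Pi_0) h^*\, dQ - \lambda\left[(1+\rho)\mathbb{E}_Q[\hat{Y}_1] - \Pi_0\right]$ and invoke strong duality (from concavity of the objective in $\hat{Y}_1$, convexity of $\hat{\mathcal{I}}$, and linearity in $\lambda$), yielding the necessary and sufficient first-order condition
\begin{equation*}
\int \bigl[u'(W_0 - X + \hat{Y}_1^* - \Pi_0)\, h^* - \lambda^*(1+\rho)\bigr](\hat{Y}_1 - \hat{Y}_1^*)\, dQ \leq 0, \quad \forall\, \hat{Y}_1 \in \hat{\mathcal{I}}.
\end{equation*}
Parametrizing $\hat{Y}_1 = \hat{I}\circ X$ with $\hat{I}$ being $1$-Lipschitz, I would write $\hat{I}(x) - \hat{I}^*(x) = \int_0^x (\hat{I}'(t) - (\hat{I}^*)'(t))\,dt$ with $\hat{I}', (\hat{I}^*)' \in [0,1]$ almost everywhere, and apply Fubini's theorem to swap the order of integration against $dQ\circ X^{-1}$ and $dt$. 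This reformulates the inequality as a pointwise optimization of $\hat{I}'(t) \in [0,1]$, producing the bang-bang characterization in terms of the sign of the inner integral over $[t,M]\cap X(A)$. Passing back to the retention via $\hat{r}^*(x) = x - \hat{I}^*(x)$ flips the roles and yields exactly the stated case-distinction for $(\hat{r}^*)'(t)$.

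The existence of $\lambda^* \in \mathbb{R}_+$ ensuring the budget binds follows from a continuity/monotonicity argument for $\varphi(\lambda) := (1+\rho)\mathbb{E}_Q[\hat{Y}_{1,\lambda}^*]$, combined with the dominated convergence theorem and the standing assumption $(1+\rho)\mathbb{E}_Q[X] > \Pi_0$, exactly as in the proof of Proposition~\ref{propCase1}. I expect the main obstacle to lie in the third paragraph: the Fubini interchange followed by the passage from a global functional inequality to a pointwise bang-bang rule for $\hat{I}'$ requires care, because the admissible derivatives are constrained to $[0,1]$ (not to an arbitrary $L^\infty$-ball), and because the relevant integration region shrinks to $[t,M]\cap X(A)$, so one must track how the support of the worst-case prior and of the loss distribution interact when characterizing the slope of the optimal indemnity.
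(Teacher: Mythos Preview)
Your proposal is correct and follows essentially the same approach as the paper's proof (Appendix~\ref{appendixProofpropCase2}): Sion's minimax theorem, Lebesgue decomposition of $P^*$, the splitting via \cite[Lemma C.6]{Ghossoub2019c}, Lagrangian strong duality, and the Fubini interchange to obtain the bang-bang characterization of the derivative. The only cosmetic differences are that the paper works directly in the retention variable $\hat{R}=X-\hat{Y}$ from the outset (so no sign flip at the end) and reformulates the budget constraint as $\mathbb{E}_Q[\hat{R}]\geq \tilde{\Pi}_0$ with $\tilde{\Pi}_0=\mathbb{E}_Q[X]-(1+\rho)^{-1}\Pi_0$, which absorbs the factor $(1+\rho)$ into the multiplier; your $\lambda^*(1+\rho)$ is simply the paper's $\lambda^*$.
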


Proposition~\ref{propCase2} is a particular case of Theorem~\ref{propCase2V} when the insurer's utility $v$ is linear. For completeness, the proof is provided in Appendix~\ref{appendixProofpropCase2}.

\vspace{0.4cm}

%======================================================================================
\subsubsection{Example}
The following example analyzes the structure of the optimal $Y^*$ and $\hat{Y}^*$ in Propositions~\eqref{propCase1} and~\eqref{propCase2}, respectively, when all probability measures $P$ in $\mathcal{C}$ are absolutely continuous with respect to $Q$, with a particular structure of the Radon-Nikod\'ym derivatives. This is done both with and without the no-sabotage condition. Specifically, we assume that each $P \in \mathcal{C}$ is such that $P \ll  Q$ with
\begin{equation}\label{WeightedTransform}
\dfrac{dP}{dQ}=\dfrac{w(X)}{\int w(X)dQ},
\end{equation}
for some nonnegative and increasing weight function $w$ satisfying $\int w(X)dQ > 0$. Such measure transformations have a long tradition in insurance pricing, dating back to the Esscher transform (e.g., B\"{u}hlmann \cite{Buhlmann1980}), in which the function $w$ takes the form $w(x) = e^{bx}$, for a given $b \in \(0,+\infty\)$. More generally, Furman and Zitikis \cite{FurmanZitikis2008a,FurmanZitikis2008b,FurmanZitikis2009} discuss the general class of weighted premium principles where pricing is done via measure transformations as in eq.~\eqref{WeightedTransform}.

\vspace{0.2cm}

Suppose that the utility function $u$ satisfies Assumption~\ref{assumptionsUtility}, and assume that insurer's probability measure $Q$ has a continuous cdf over $[0,M]$.

\vspace{0.2cm}

\begin{example}\label{exGeneralU}
Let the DM's ambiguity set $\mathcal{C}$ be defined as follows:
\begin{equation*}
\mathcal{C}_{\mathcal{W}}:=\bigg\lbrace P\in ca^{+}_{1}(\Sigma):\, \dfrac{dP}{dQ}=\dfrac{w(X)}{\int w(X)dQ},\, w\in \mathcal{W}\bigg\rbrace,
\end{equation*}
\noindent where $\mathcal{W} \subset L^1\left(\mathbb{R}, \mathcal{B}(\mathbb{R}), Q \circ X^{-1}\right)$ is a collection of nonnegative increasing weight functions, such that $\int w(X)dQ > 0$ for all $w \in \mathcal{W}$. Appendix~\ref{appendixCompactConvexCw} provides conditions under which the set $\mathcal{C}_{\mathcal{W}}$ is convex and weak$^*$-compact.

\vspace{0.2cm}

First we analyze the case when the feasible set of indemnities is $\mathcal{F}=\mathcal{I}$, as defined in eq.~\eqref{FeasibWithoutNSC}. By definition of $\mathcal{C}_{\mathcal{W}}$, any optimal $P^*$ is absolutely continuous with respect to $Q$. Moreover, by monotonicity of $h^*=dP^*/dQ = \xi^*(X)$, there exists some $a\geq 0$ such that $\xi^*(x)=0$, for $x\in [0, a]$ and $\xi^*(x)>0$, for $x>a$, i.e., the set $A_{h^*}$ in Proposition~\ref{propCase1} is precisely $A_{h^*}=X^{-1}([0,a])$.

\vspace{0.2cm}

If $(1+\rho)\mathbb{E}_Q[X\mathrm{1}_{S\setminus A_{h^*}}] = (1+\rho)\int_a^M x \, dQ\circ X^{-1}(x)< \Pi_0$, the optimal indemnity $Y^*=I^*(X)$ in~\eqref{optimY} is such that 
\[
  I^*(x) = \max\left[0, \min\left(x,x-W_0+\Pi_0+(u^\prime)^{-1}\left(\dfrac{\lambda^*}{\xi^*}\right)\right)\right]\mathrm{1}_{[a,M]},
\]
where $\lambda^*\in\mathbb{R}_+$ is such that $(1+\rho)\mathbb{E}_Q[Y^*]=\Pi_0$.

\vspace{0.2cm}

If $(1+\rho)\mathbb{E}_Q[X\mathrm{1}_{S\setminus A_{h^*}}]= (1+\rho)\int_a^M x \, dQ\circ X^{-1}(x)\geq \Pi_0$, then \eqref{optimY} becomes
\begin{equation*}
I^*(x)=\begin{cases}
(1-c)x, &\mbox{if }  x\leq a,\\
x,  &\mbox{if } x>a,
\end{cases}
\end{equation*}
where the constant $c\in (0,1)$ is chosen as in Proposition~\ref{propCase1}.

\vspace{0.4cm}

Next, let $\mathcal{F}=\hat{\mathcal{I}}$, as defined in eq.~\eqref{FeasibWithNSC}. Following the setting of Proposition~\ref{propCase2}, the utility $u$ need not to be strictly concave, but only concave. According to Proposition~\ref{propCase2}, the optimal retention can be equivalently written as
\begin{equation*}
(\hat{r}^*)'(t)=\begin{cases}
0, &\mbox{if } \int_t^M (\lambda^*-u'(W_0-\hat{r}^*(x)-\Pi_0)\xi^*(x))\,dF_{X,Q}(x)<0,\\
\kappa(t),  &\mbox{if } \int_t^M (\lambda^*-u'(W_0-\hat{r}^*(x)-\Pi_0)\xi^*(x))\,dF_{X,Q}(x)=0,\\
1,  &\mbox{if } \int_t^M (\lambda^*-u'(W_0-\hat{r}^*(x)-\Pi_0)\xi^*(x))\,dF_{X,Q}(x)>0.
\end{cases}
\end{equation*}

\vspace{0.2cm}

Observe that the function $\varphi:[0,M]\rightarrow\mathbb{R}$, $\varphi(x):=-u'(W_0-\hat{r}^*(x)-\Pi_0)\xi^*(x)$ is a continuous, decreasing function, as it is the product of two decreasing functions. We distinguish the following cases:

\vspace{0.2cm}

\begin{thmcases}[noitemsep,wide=0pt, leftmargin=\dimexpr\labelwidth + 2\labelsep\relax]
\item If $-\lambda^*<\varphi(M)$, then $\int_t^M (\varphi(x)+\lambda^*)\,dF_{X,Q}(x)>0$, for all $t\in [0,M]$, and thus $(\hat{r}^*)'\equiv 1$.
\vspace{0.2cm}
\item If $-\lambda^*>\varphi(0)$, then $\int_t^M (\varphi(x)+\lambda^*)\,dF_{X,Q}(x)<0$, for all $t\in [0,M]$, and thus $(\hat{r}^*)'\equiv 0$.
\vspace{0.2cm}
\item If $-\lambda^*\in [\varphi(M),\varphi(0)]$, then there exists some $d^*\in (0,M)$ such that $\varphi(x)+\lambda^*>0$, for all $x<d$ and $\varphi(x)+\lambda^*<0$, for all $x>d $. This implies that for all $t\geq d$,  $\int_t^M (\varphi(x)+\lambda^*)\,dF_{X,Q}(x)<0$, and thus $(\hat{r}^*)'(t)= 0$, for $t\in [d,M]$. Moreover, for $t_1<t_2<d$, it holds that
\begin{equation*}
\int_{t_1}^d (\varphi(x)+\lambda^*)\,dF_{X,Q}(x) + \int_d^M (\varphi(x)+\lambda^*)\,dF_{X,Q}(x) \geq \int_{t_2}^d (\varphi(x)+\lambda^*)\,dF_{X,Q}(x) + \int_d^M (\varphi(x)+\lambda^*)\,dF_{X,Q}(x).
\end{equation*}
Therefore, there exists some $d^*\geq 0$ such that for all $t\leq d^*$,
$$
\int_t^{d^*} (\varphi(x)+\lambda^*)\,dF_{X,Q}(x) + \int_{d^*}^M (\varphi(x)+\lambda^*)\,dF_{X,Q}(x)<0.
$$
Therefore, $(\hat{r}^*)'(t)=1$ for all $t< d^*$, and $(\hat{r}^*)'(t)=0$, for all $t> d^*$. In this case, $\hat{r}^*(t)=\min(t,d^*)$ and thus $\hat{I}^*(t)=\max(t-d^*,0)$.
\end{thmcases}
\end{example}

\vspace{0.4cm}
%======================================================================================

\subsection{The Case of a Linear Utility for the DM}\label{subsection2}

In this section we assume that the utility functions of both the DM and the insurer are linear. Proposition~\ref{propUVlinearCase1} characterizes the optimal indemnity in the absence of the no-sabotage condition. If the admissible set of indemnities is $\hat{\mathcal{I}}$, then using the results of Proposition~\ref{propCase2}, we obtain a closed-form expression for the optimal indemnity in the presence of the no-sabotage condition, namely, a layer-type insurance indemnity. The section ends with two examples of the ambiguity set $\mathcal{C}$, for which we derive each time the optimal insurance contract.

\vspace{0.2cm}

When both DM and insurer are risk-neutral, Problem~\eqref{optim} can be further simplifies as follows:

\begin{equation*}\label{optimUVlinear}
\left\{
\begin{aligned}
& \underset{R\in \mathcal{F}}{\inf}\,\underset{P\in\mathcal{C}}{\sup}
& & \mathbb{E}_P [R ]\\
& \text{s.t.}
& & \mathbb{E}_Q [R]\geq \tilde{\Pi}_0,
\end{aligned}\tag{$P_2$}
\right.
\end{equation*}
where $R=X-Y$ is the retention function and $\tilde{\Pi}_0:=\mathbb{E}_Q[X]-(1+\rho)^{-1}\Pi_0$.  Clearly, if $\tilde{\Pi}_0\leq 0$, then $R^*=0$, $Q$-a.s., and so we focus on the case $\tilde{\Pi}_0>0$.

\vspace{0.2cm}
\subsubsection{Without the No-Sabotage Condition}

\begin{proposition}\label{propUVlinearCase1}
Let $\mathcal{F}=\mathcal{I}$ as defined in eq.~\eqref{FeasibWithoutNSC} be the set of admissible indemnity functions. Then there exists $P^*\in\mathcal{C}$ such that an optimal solution $Y^*\in\mathcal{I}$ of Problem~\eqref{optimUVlinear} is of the form:
\begin{equation*}
Y^*=  (X-R_{h^*})\mathrm{1}_{A}+ X\mathrm{1}_{S\setminus A},
\end{equation*}
where
\vspace{0.2cm}
\begin{enumerate}[label=(\alph*)]
\item $A\in\Sigma$ is such that $P^*=P^*_{ac}+P^*_s$, with $P^*_s(A)=Q(S\setminus A)=0$;
\vspace{0.2cm}
\item $h^*:S\rightarrow [0,\infty)$ is such that $h^*=dP^*_{ac}/dQ$;
\vspace{0.2cm}
\item $R_{h^*}$ is given by:
\vspace{0.2cm}
\begin{thmcases}[noitemsep,wide=0pt, leftmargin=\dimexpr\labelwidth + 2\labelsep\relax]
\item If $\mathbb{E}_Q[X\mathrm{1}_{A_{h^*}}]< \tilde{\Pi}_0$, then $R_{h^*}=X\mathrm{1}_{A^-_{\lambda^*}} + c\,X\mathrm{1}_{A^0_{\lambda^*}}$, i.e., $Y^*$ can be written as
\[
Y^* = (1-c)X\mathrm{1}_{A^0_{\lambda^*}} + X\mathrm{1}_{S\setminus (A^0_{\lambda^*}\cup A^-_{\lambda^*})},
\]
where \begin{enumerate}[label=(\roman*)]
\vspace{0.2cm}
\item $A_{h^*}:=\lbrace s\in A:\, h^*(s)=0\rbrace$;
\vspace{0.2cm}
\item $A^-_{\lambda^*}:=\lbrace s\in A :\, h^*(s)<\lambda^*\rbrace $;
\vspace{0.2cm}
\item $A^0_{\lambda^*}:=\lbrace s\in A :\, h^*(s)=\lambda^*\rbrace $;
\vspace{0.2cm}
\item $\lambda^*\in\mathbb{R}_+$ is such that $\mathbb{E}_Q[R_{h^*}]=\tilde{\Pi}_0$;
\vspace{0.2cm}
\item $c=\begin{cases} 0, &\mbox{if } \mathbb{E}_Q[X\mathrm{1}_{A^-_{\lambda^*}}]=\tilde{\Pi}_0, \\
\dfrac{\tilde{\Pi}_0-\mathbb{E}_Q[X\mathrm{1}_{A^-_{\lambda^*}}]}{\mathbb{E}_Q[X\mathrm{1}_{A^0_{\lambda^*}}]}, & \mbox{otherwise; } \end{cases}$
\end{enumerate}
\vspace{0.4cm}
\item  If $\mathbb{E}_Q[X\mathrm{1}_{A_{h^*}}]\geq \tilde{\Pi}_0$, then $R_{h^*}=c_0 X\mathrm{1}_{A_{h^*}}$,  i.e., $Y^*$ can be written as
\[
Y^* = (1-c_0)X\mathrm{1}_{A_{h^*}} + X\mathrm{1}_{S\setminus A_{h^*}},
\]
where $c_0:=\dfrac{\tilde{\Pi}_0}{\mathbb{E}_Q[X\mathrm{1}_{A_{h^*}}]}\in (0,1]$.
\end{thmcases}
\end{enumerate}
\end{proposition}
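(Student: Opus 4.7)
The plan is to mirror the structure of Theorem~\ref{propCase1V} and Proposition~\ref{propCase1}, exploiting that linearity of $u$ collapses the inner problem into a (parametric) linear program. Passing to retentions $R=X-Y$, Problem~\eqref{optimUVlinear} becomes
\[
\inf_{R\in\mathcal{I}}\,\sup_{P\in\mathcal{C}}\,\mathbb{E}_P[R] \qquad \text{subject to} \qquad \mathbb{E}_Q[R]\geq\tilde\Pi_0.
\]
The feasibility set $\mathcal{I}_0:=\{R\in\mathcal{I}:\mathbb{E}_Q[R]\geq\tilde\Pi_0\}$ is convex, the objective is linear (hence concave-convex) in each variable separately, $\mathcal{C}$ is weak$^*$-compact and convex, and $\mathcal{I}_0$ is convex. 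As in Theorem~\ref{propCase1V}, Sion's Minimax Theorem yields a saddle point $(R^*,P^*)\in\mathcal{I}_0\times\mathcal{C}$. Fixing $P^*$, I would then apply Remark~\ref{remDecomposition} to write $P^*=P^*_{ac}+P^*_s$ with $A=A_{P^*}$, $h^*=h_{P^*}=dP^*_{ac}/dQ$, and decompose
\[
\int_S R\,dP^*=\int_A R\,h^*\,dQ+\int_{S\setminus A} R\,dP^*_s.
\]

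Since the budget constraint only sees $Q$ and $Q(S\setminus A)=0$, the value of $R$ on $S\setminus A$ is constrained only by $0\le R\le X$; as $P^*_s\ge 0$, minimizing the second integral forces $R^*\mathrm{1}_{S\setminus A}=0$, i.e.\ $Y^*=X$ on $S\setminus A$. The remaining problem on $A$ is the linear program
\[
\inf_{R\in\mathcal{I}}\Big\{\int_A R\,h^*\,dQ:\,\int_A R\,dQ\geq\tilde\Pi_0,\ 0\leq R\leq X\Big\},
\]
whose Lagrangian $\mathcal{L}(R,\lambda)=\int_A R(h^*-\lambda)\,dQ+\lambda\tilde\Pi_0$ is linear in $R$. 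Strong duality follows exactly as in Theorem~\ref{propCase1V}. The KKT first-order condition
\[
\int_A(h^*(s)-\lambda)(R(s)-R^*(s))\,dQ(s)\geq 0,\qquad \forall R\in\mathcal{I},
\]
forces a pointwise bang-bang structure: on $A^-_\lambda:=\{h^*<\lambda\}$ we must have $R^*=X$, on $A^+_\lambda:=\{h^*>\lambda\}$ we must have $R^*=0$, and on $A^0_\lambda:=\{h^*=\lambda\}$ the value of $R^*$ is free in $[0,X]$ and serves as a slack variable to saturate the constraint.

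It remains to select $\lambda^*\geq 0$ and the free component so that $\mathbb{E}_Q[R^*]=\tilde\Pi_0$. Define $\varphi(\lambda):=\mathbb{E}_Q[X\mathrm{1}_{\{h^*<\lambda\}}]$; this is nondecreasing and left-continuous in $\lambda$ with $\varphi(0)=\mathbb{E}_Q[X\mathrm{1}_{A_{h^*}}]\cdot 0 = 0$ at $\lambda=0^+$ after accounting for the atom $\{h^*=0\}$, and $\varphi(+\infty)=\mathbb{E}_Q[X\mathrm{1}_A]$. If $\mathbb{E}_Q[X\mathrm{1}_{A_{h^*}}]<\tilde\Pi_0$, a standard continuity/monotonicity argument produces $\lambda^*>0$ together with a proportional coefficient $c\in[0,1]$ on $A^0_{\lambda^*}$ satisfying $\mathbb{E}_Q[X\mathrm{1}_{A^-_{\lambda^*}}]+c\,\mathbb{E}_Q[X\mathrm{1}_{A^0_{\lambda^*}}]=\tilde\Pi_0$; this yields Case~1 with the stated formula for $c$. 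If instead $\mathbb{E}_Q[X\mathrm{1}_{A_{h^*}}]\geq\tilde\Pi_0$, then $\lambda^*=0$ is feasible, in which case $A^-_{0}=\emptyset$ and $A^0_{0}=A_{h^*}$, and one may satisfy the constraint with $R^*=c_0 X\mathrm{1}_{A_{h^*}}$ for $c_0=\tilde\Pi_0/\mathbb{E}_Q[X\mathrm{1}_{A_{h^*}}]\in(0,1]$, giving Case~2.

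The main obstacle is the tie-breaking on $A^0_{\lambda^*}$: because the Lagrangian is degenerate there, the optimum is highly non-unique and one has to show that a particular measurable selection (the constant proportion $c$ or $c_0$) is compatible with the binding budget constraint. The argument is analogous to the Case~\ref{case2lambda} analysis in the proof of Proposition~\ref{propCase1}, combined with the continuity-of-$\varphi$ construction that guarantees existence of a suitable $\lambda^*\in[0,\bar\lambda]$ via the standard bound $\bar\lambda$ derived in Theorem~\ref{propCase1V}. Everything else is a specialization to the linear-utility setting of the arguments already used in Section~\ref{section3}.
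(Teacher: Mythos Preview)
Your proposal is correct and follows essentially the same approach as the paper's proof: Sion's minimax to obtain a saddle point, Lebesgue decomposition of $P^*$ to split $S$ into $A$ and $S\setminus A$ (with $R^*=0$ on $S\setminus A$), a Lagrangian analysis on $A$ yielding the bang-bang structure on $A^\pm_\lambda$ with free slack on $A^0_\lambda$, and then the same case split on $\mathbb{E}_Q[X\mathrm{1}_{A_{h^*}}]$ versus $\tilde\Pi_0$ to pin down $\lambda^*$ and the proportional coefficient. One small writing slip: the sentence ``$\varphi(0)=\mathbb{E}_Q[X\mathrm{1}_{A_{h^*}}]\cdot 0 = 0$'' is garbled; you mean $\varphi(0)=0$ while $\lim_{\lambda\downarrow 0}\varphi(\lambda)=\mathbb{E}_Q[X\mathrm{1}_{A_{h^*}}]$, which is exactly the jump that separates your two cases.
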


\vspace{0.2cm}

\begin{proof}
The proof is similar to the proof of Proposition~\ref{propCase1}, in the case $u(x)=x$. According to~\eqref{optim1c1}, there exists some $P^*\in\mathcal{C}$ such that for $A:=A_{P^*}$ and $h^*:=h_{P^*}$ as in Remark~\ref{remDecomposition}, $R^*$ is the optimal solution of the following problem:
\begin{equation*}
\inf_{R\in\mathcal{I}} \bigg\lbrace \int_A R h^*\,dQ+ \int_{S\setminus A} R\,dP^*_s : \, \int_A R dQ \geq \tilde{\Pi}_0 \bigg\rbrace.
\end{equation*}
According to \cite[Lemma C.6]{Ghossoub2019c}, the optimal retention is of the form $R^*=R_1^*\mathrm{1}_A$, where $R_1^*$ solves the problem below.
\begin{equation*}
\inf_{R_1\in\mathcal{I}} \bigg\lbrace \int_A R_1 h^*\,dQ : \, \int_A R_1 dQ \geq \tilde{\Pi}_0 \bigg\rbrace.
\end{equation*}
As both the objective function and the constraint are linear in $R_1$, a sufficient condition for $R^*_1$ to be optimal is that it minimizes pointwisely the integrand of the associated Lagrange function:
\begin{equation*}
\inf_{R_1\in\mathcal{I}}\,\sup_{\lambda\geq 0}\mathcal{L}(R_1,\lambda)  = \inf_{R_1\in\mathcal{I}}\,\sup_{\lambda\geq 0}\int_A (h^*(s)-\lambda)R_1(s) - \lambda\tilde{\Pi}_0,
\end{equation*}
where $\lambda\in\mathbb{R}_+$ is the Lagrange multiplier. To obtain the structure of optimal $R^*_{1,\lambda}:=R^*_1$, depending on $\lambda$, we consider again the following sets:
\[
\left \{\begin{aligned}
	A^+_\lambda & :=\lbrace s\in A : h^*(s)-\lambda>0\rbrace, \\
    A^0_\lambda & :=\lbrace s\in A: h^*(s)-\lambda=0\rbrace, \\
    A^-_\lambda & :=\lbrace s\in A: h^*(s)-\lambda<0\rbrace.
    \end{aligned}\right.
\]
It is clear that $R_{1,\lambda}\mathrm{1}_{A^+_\lambda}=0$ and $R_{1,\lambda}\mathrm{1}_{A^-_\lambda}=X\mathrm{1}_{A^-_\lambda}$. Next, let $A_{h^*}:=\lbrace s\in A: \,  h^*(s)=0\rbrace$. We distinguish the following cases:
\vspace{0.2cm}
\begin{proofcases}[noitemsep,wide=0pt, leftmargin=\dimexpr\labelwidth + 2\labelsep\relax]
\vspace{0.2cm}
\item If $\lambda>0$, then $A_{h^*}\subseteq A^-_\lambda$ and the corresponding retention is $R_{1,\lambda}=X\mathrm{1}_{ A^-_\lambda} + \tilde{R}\mathrm{1}_{A^0_\lambda}$, for some arbitrary $\tilde{R}\in\mathcal{I}$.
\vspace{0.2cm}
\item If $\lambda=0$, then $A_{h^*}=A^0_\lambda$ and $R_{1,\lambda}=\overline{R}\mathrm{1}_{A_{h^*}}$, for some arbitrary $\overline{R}\in\mathcal{I}$. Observe that in this case $Q(A_{h^*})>0$, otherwise $\mathbb{E}_Q[R_{1,\lambda}]=0<\tilde{\Pi}_0$, a contradiction.
\end{proofcases}

\vspace{0.2cm}

Similar to Proposition~\ref{propCase1}, there exists $\lambda^*\geq 0$ such that the slackness condition below holds:
\begin{equation}\label{eqSlackness}
\lambda^*(\mathbb{E}_Q[R_{1,\lambda^*}]-\tilde{\Pi}_0)=0.
\end{equation}

Now, based on the optimal value of $\lambda^*$, one can specify a choice of $\tilde{R}$ and $\overline{R}$, respectively, and thus characterize an optimal $R^*_{1,\lambda^*}$. To see this, let $\lambda_1:=\inf_{\, x\in X(A\setminus A_{h^*})}\xi^*(x)\geq 0$, for a Borel measurable function $\xi^*:\mathbb{R}_+\rightarrow\mathbb{R}_+$ such that $h^*=\xi^*(X)$. If $\lambda_1>0$, then $A_{h^*}=A^-_{\lambda_1}$ and the corresponding retention becomes $R_{1,\lambda_1}=X\mathrm{1}_{A_{h^*}} + \tilde{R}\mathrm{1}_{A^0_{\lambda_1}}$, for some $\tilde{R}\in\mathcal{I}$. Otherwise, $A_{h^*}=A^0_0$ and $R_{1,0}=\overline{R}\mathrm{1}_{A_{h^*}}$, for some $\overline{R}\in \mathcal{I}$.

If $\mathbb{E}_Q[X\mathrm{1}_{A_{h^*}}]<\tilde{\Pi}_0$, then $\lambda^*>\lambda_1$ and an optimal retention in this case is $R^*_{1,\lambda^*}=X\mathrm{1}_{A^-_{\lambda^*}} + c X\mathrm{1}_{A^0_{\lambda^*}}$, where the constant $c$ is defined as
\[
c:=\begin{cases} 0, &\mbox{if } \mathbb{E}_Q[X\mathrm{1}_{A^-_{\lambda^*}}]=\tilde{\Pi}_0, \\
\dfrac{\tilde{\Pi}_0-\mathbb{E}_Q[X\mathrm{1}_{A^-_{\lambda^*}}]}{\mathbb{E}_Q[X\mathrm{1}_{A^0_{\lambda^*}}]}, & \mbox{otherwise. }
\end{cases}
\]
Note that $c\in [0,1]$, since $\mathbb{E}_Q[X\mathrm{1}_{A^-_{\lambda^*}}]\leq \tilde{\Pi}_0$, according to eq.~\eqref{eqSlackness}. If $\mathbb{E}_Q[X\mathrm{1}_{A_{h^*}}]\geq \tilde{\Pi}_0$, $\lambda^*=0$ and one can choose $R^*_{1,0}=c_0 X\mathrm{1}_{A_{h^*}}$, where $c_0:=\tilde{\Pi}_0/\mathbb{E}_Q[X\mathrm{1}_{A_{h^*}}]$.
\end{proof}

\vspace{0.2cm}

Observe that the optimal $Y^*$ given in Proposition \ref{propUVlinearCase1} is not unique, since on the sets $A_{\lambda^*}^0$ and $A_{h^*}$, corresponding to the cases $\lambda^*>0$ and $\lambda^*=0$, the indemnity is undetermined. However, Proposition~\ref{propUVlinearCase1} above provides one possible choice of the optimal indemnity $Y^*$ on these sets, namely, a proportional contract (coinsurance).

\vspace{0.2cm}

%=======================================================================================

\subsubsection{With the No-Sabotage Condition}

In this section, we assume that $Q\circ X^{-1}$ is nonatomic, i.e., $X$ is a continuous random variable for $Q$.

\vspace{0.2cm}

If the set of admissible indemnities is such that the indemnity and the retention function are increasing function of the loss, then according to  Proposition~\ref{propCase2}, the optimal retention $\hat{R}^*=\hat{r}^*(X)\in\hat{\mathcal{I}}$ is such that:
\begin{equation}\label{eqRuLinear}
(\hat{r}^*)'(t)=\begin{cases}
0, &\mbox{if }  P^*_{ac}\circ X^{-1}(A_t)>\lambda^*Q\circ X^{-1}(A_t),\\
\kappa(t),  &\mbox{if }  P^*_{ac}\circ X^{-1}(A_t) = \lambda^*Q\circ X^{-1}(A_t),\\
1,  &\mbox{if } P^*_{ac}\circ X^{-1}(A_t)<\lambda^*Q\circ X^{-1}(A_t),
\end{cases}
\end{equation}
where $A_t:=[t,M]\cap X(A)$, for any $t\in X(A)$, $\lambda^*\geq 0$ and $\kappa$ is a Lebesgue measurable and $[0,1]$-valued function. Let $\underline{t} :=\inf\lbrace t:\, t\in X(A)\rbrace$ and $\overline{t}:=\sup\lbrace t:\, t\in X(A)\rbrace$ be the smallest and the largest value in $X(A)$, respectively. Moreover, let $t_Q:=\sup\lbrace t\geq \underline{t}:\, Q\circ X^{-1}(A_t)>0\rbrace$ and $t_*=\sup\lbrace t\geq \underline{t}:\, P^*_{ac}\circ X^{-1}(A_t)>0\rbrace$. As $P^*_{ac}\ll Q$, then $P^*_{ac}\circ X^{-1}(A_t)=0$, for all $t\geq t_Q$, i.e., $t_*\leq t_Q$.

\vspace{0.2cm}

If $X(A)$ is not connected, then using the notations above, we can extend the definition of the set $A_t$ over $[\underline{t},\overline{t}]$ as follows: if $\tilde{t}\in [\underline{t},\overline{t}]$ such that $\tilde{t}\notin X(A)$, then $A_{\tilde{t}}:=A_{\sup\lbrace t\in X(A):\, t<\tilde{t}\rbrace}$. Observe that $A_{t_n}\supseteq A_{t_{n+1}}$, for $t_n\in [\underline{t},\overline{t}]$, with $Q\circ X^{-1}(A_{\underline{t}})=Q\circ X^{-1}(A)=1$.

\vspace{0.2cm}

Next, we proceed by considering the following cases:
\vspace{0.2cm}
\begin{thmcases}[noitemsep,wide=0pt, leftmargin=\dimexpr\labelwidth + 2\labelsep\relax]
\item If $\lambda^*=0$, then $P^*_{ac}\circ X^{-1}(A_t)>0$, for all $t\in [\underline{t}, t_*]$, and $P^*_{ac}\circ X^{-1}(A_t)=0$, for all $t>t_*$. Hence,
\begin{equation*}
(\hat{r}^*)'(t)=\begin{cases}
0, &\mbox{if }  \underline{t}\leq t\leq t_*,\\
\kappa(t),  &\mbox{if }  t_*<t\leq \overline{t},
\end{cases}
\end{equation*}
for some $\kappa:[\underline{t},\overline{t}]\rightarrow [0,1]$ Lebesgue measurable function. In particular, if $\kappa$ is chosen as $\kappa\equiv 0$, then $\hat{Y}^*=X$, $P^*$-a.s.
\vspace{0.2cm}
\item If $\lambda^*\geq\dfrac{1}{Q\circ X^{-1}(A_{t_Q})}$, then for all $t\in [\underline{t}, t_Q]$, it holds that
\[
P^*_{ac}\circ X^{-1}(A_t)-\lambda^*Q\circ X^{-1}(A_t)\leq P^*_{ac}\circ X^{-1}(A_t)-\lambda^*Q\circ X^{-1}(A_{t_Q})\leq P^*_{ac}\circ X^{-1}(A_t)-1\leq 0.
\]
In this case, let $t_1:=\sup\lbrace t\geq \underline{t}:\, P^*_{ac}\circ X^{-1}(A_t) =1 \rbrace$. If no such $t_1$ exists, then $(\hat{r}^*)'(t)=1$, for all $t\in [\underline{t}, \overline{t}]$, i.e., $\hat{Y}^*=0$, $Q$-a.s. Otherwise,
\begin{equation*}
(\hat{r}^*)'(t)=\begin{cases}
\kappa(t), &\mbox{if }  \underline{t} \leq t\leq t_1,\\
1,  &\mbox{if }  t_1<t\leq \overline{t},
\end{cases}
\end{equation*}
where $\kappa:[\underline{t},\overline{t}]\rightarrow [0,1]$ is a Lebesgue measurable function. Similar to the previous case, one can choose $\kappa\equiv 1$ and thus $\hat{Y}^*=0$, $Q$-a.s.
\vspace{0.2cm}
\item If $\lambda^*\in \bigg(0,\dfrac{1}{Q\circ X^{-1}(A_{t_Q})}\bigg)$, define the function $\varphi:[\underline{t},\overline{t}]\rightarrow \bigg[0, \dfrac{1}{Q\circ X^{-1}(A_{t_Q})}\bigg]$,
$$
\varphi(t):=\dfrac{P^*_{ac}\circ X^{-1}(A_t)}{Q\circ X^{-1}(A_t)}= \dfrac{1-F_{X,P^*_{ac}}(t)}{1-F_{X,Q}(t)}.
$$

Let $\chi_{P_{ac}^*}:=\big\lbrace t \in [0,M] \cap X(A) : \, F_{X,P_{ac}^*} \text{ is continuous in } t \big\rbrace$. Since the set of discontinuities of $F_{X,P_{ac}^*}$ is at most countable, it follows that
$$
[\underline{t}, \overline{t}]\cap \chi_{P_{ac}^*} = \bigcup_{n\in\mathbb{N}} (\underline{t}_n, \overline{t}_n)
$$
is the set of disjoints intervals such that $\varphi_n:=\varphi\big\vert_{(\underline{t}_n, \overline{t}_n)}$ is continuous. The functions $(\varphi_n)_{n\in\mathbb{N}}$ can be extended over a compact interval by taking $\varphi_n(\underline{t}_n):=\lim_{t\searrow \underline{t}_n} \varphi(t)$ and $\varphi_n(\overline{t}_n):=\lim_{t\nearrow \overline{t}_n} \varphi(t)$. Then for each $\varphi_n$, $n\in\mathbb{N}$, the procedure is as follows: we construct a piecewise linear underestimator and a piecewise linear overestimator of $\varphi_n$ over $[\underline{t}_n, \overline{t}_n]$, which allows us to determine the sign of $\varphi_n-\lambda^*$; this sign, in turn, gives the marginal retention function corresponding to $\hat{r}^*$, according to eq.~\eqref{eqRuLinear}. The optimal $\hat{r}^*$ is thus of the form
$$
\hat{r}^*(t) =\sum_{n\in\mathbb{N}}\int_{\underline{t}_n}^{\overline{t}_n} (\hat{r}^*)'(x)\mathbf{1}_{\lbrace x\leq t\rbrace }\,dx, \, \forall t\in X(A).
$$

%\vspace{0.2cm}

Fix $n\in\mathbb{N}$ and consider $\varphi_n$. For a small enough $\Delta>0$, let $\psi_-, \psi_+:[\underline{t}_n, \overline{t}_n]\rightarrow\mathbb{R}_+$ be a $\Delta$-underestimator and $\Delta$-overestimator of $\varphi_n$, respectively and let $G_-(\varphi_n):=\bigcup _{i=1}^m([a_i,b_i], [x^{\min}_i,x^{\max}_i])$, and $G_+(\varphi_n):=\bigcup _{i=1}^m([a_i,c_i], [x^{\min}_i,x^{\max}_i])$ be their associated systems, as defined in Appendix~\ref{appendixDeltaApprox}, for $a_i$ $b_i$, $c_i\in\mathbb{R}$ and a system of breakpoints $\lbrace x^{\min}_i, x^{\max}_i\rbrace_{i=1}^m$ for $\varphi_n$. Figure~\ref{fig:estimator} illustrates the $\Delta$-underestimator and $\Delta$-overestimator for some function $\varphi_n$. According to eq.~\eqref{eqRuLinear}, the marginal retention function switches from $0$ to $1$, depending on the sign of $\varphi_n(x)-\lambda^*$, for all $x\in [\underline{t}_n, \overline{t}_n]$. We observe that if there exists some $i=1,\ldots, m$ such that
\begin{equation}\label{eqLinearIntersection}
\left\{\begin{aligned}
\psi_-(x^{\min}_i) & \leq \lambda^* \leq \psi_-(x^{\max}_i),\\
\psi_+(x^{\min}_i) & \leq \lambda^* \leq \psi_+(x^{\max}_i),
\end{aligned}\right.
\end{equation}
then there exists some $d_i\in [x^{\min}_i, x^{\max}_i]$ such that $\varphi_n(d_i)=\lambda^*$. The structure of $\hat{r}^*$ is determined by a case-by-case analysis, depending on the slope $a_i$ of both $\Delta$-underestimator and $\Delta$-overestimator:
\vspace{0.5cm}
\begin{subcases}[noitemsep,wide=0pt, leftmargin=\dimexpr\labelwidth + 2\labelsep\relax]
\item If $a_i>0$, then $\varphi_n(x)<\lambda^*$, for $x< d_i$ and $\varphi_n(x)>\lambda^*$, for $x>d_i$. In Figure~\ref{fig:estimator} the situation is depicted for instance over the segment $[x^{\min}_1,x^{\max}_1]$,  when $\psi_-(x)=a_1 x+ b_1$ and $\psi_+(x)=a_1 x + c_1$, for $a_1>0$ and $b_1$, $c_1\in\mathbb{R}$.
\vspace{0.5cm}
\item If $a_i<0$, then $\varphi_n(x)>\lambda^*$, for $x< d_i$ and $\varphi_n(x)<\lambda^*$, for $x>d_i$. This is the case over $[x^{\min}_2,x^{\max}_2]$ in Figure~\ref{fig:estimator} , when $\psi_-(x)=a_2 x+ b_2$ and $\psi_+(x)=a_2 x + c_2$, for $a_2<0$ and $b_2$, $c_2\in\mathbb{R}$.
\vspace{0.5cm}
\item If $a_i=0$, $\varphi_n$ is parallel with $x$-axis over $[x^{\min}_i,x^{\max}_i]$. In this case, we only need to check the relation between $\varphi_n(x^{\min}_i)$ and $\lambda^*$: if $\varphi_n(x^{\min}_i)>\lambda^*$, then $(\hat{r}^*)'(x)=0$, while if $\varphi_n(x^{\min}_i)<\lambda^*$, then $(\hat{r}^*)'(x)=1$, for all $x\in [x^{\min}_i,x^{\max}_i]$. Otherwise, if $\varphi_n(x^{\min}_i)=\lambda^*$, then ($\hat{r}^*)'(x)=\kappa(x)$. In particular, $\kappa$ can be chosen to be $0$ over this interval.
\end{subcases}

\begin{figure}[h!]
\center
\includegraphics[scale=2]{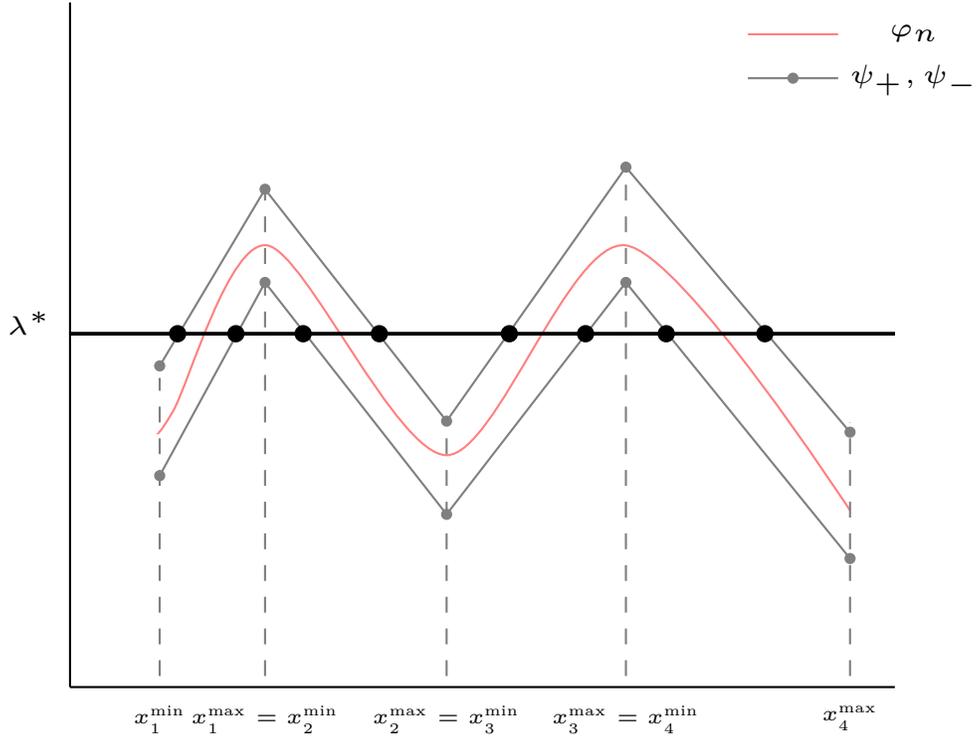}
\caption{$\Delta$-underestimator $\psi_-$ and $\Delta$- overestimator $\psi_+$ of $\varphi_n$.}\label{fig:estimator}
\end{figure}

\vspace{0.4cm}

It follows that the optimal retention function $r^*$ on $[\underline{t}_n, \overline{t}_n]$ is of the form: for each $i=1,\ldots,m$ and $x\in [x^{\min}_i, x^{\max}_i]$,
\begin{equation*}
(\hat{r}^*)'(x) =
\begin{cases}
\begin{rcases}
	\begin{cases}
		1, &\text{if  } x^{\min}_i\leq x<d_i \\
  		0, &\text{if  } d_i\leq x\leq x^{\max}_i \\
	\end{cases}, \text{ for } x^{\min}_i\leq d_i\leq x^{\max}_i
\end{rcases},  &\text{if \eqref{eqLinearIntersection} holds and }  a_i> 0, \\
\begin{rcases}
	\begin{cases}
		0, &\text{if  } x^{\min}_i\leq x<d_i \\
  		1, &\text{if  } d_i\leq x\leq x^{\max}_i \\
	\end{cases}, \text{ for } x^{\min}_i\leq d_i\leq x^{\max}_i
\end{rcases},  &\text{if \eqref{eqLinearIntersection} holds and }  a_i< 0, \\
\begin{cases}
		1, &\text{if  } \varphi_n(x^{\min}_i)<\lambda^* \\
  		0, &\text{if  } \varphi_n(x^{\min}_i)\geq \lambda^* \\
\end{cases}, \text{ if }a_i\, a_{i-1}=0, \\
\;\;\; 0, \;\;\;\text{if \eqref{eqLinearIntersection} doesn't hold, but } a_{i-1}>0, \\
\;\;\; 1, \;\;\; \text{if \eqref{eqLinearIntersection} doesn't hold, but } a_{i-1}<0.
\end{cases}
\end{equation*}
\end{thmcases}

\vspace{0.2cm}

%======================================================================================

\begin{remark}[Coherent Risk Measures]\label{RemCoherent}
Since $R=X-Y$ and since the location of the minimum is not affected by adding and subtracting constants to the objective function in Problem~\eqref{optimUVlinear}, Problem~\eqref{optimUVlinear} can readily be reformulated as follows:
\begin{equation*}\label{optimUVlinear3}
\left\{
\begin{aligned}
& \underset{Y\in \mathcal{F}}{\inf}\,\underset{P\in\mathcal{C}}{\sup}
& & \mathbb{E}_P [-(W_0-X+Y-\Pi_0)]\\
& \text{s.t.}
& & \mathbb{E}_Q [Y]\leq \mathbb{E}_Q [X]-\tilde{\Pi}_0.
\end{aligned}\tag{$P_3$}
\right.
\end{equation*}
Now, by Artzner et al.\ \cite{Artzneretal1999} and Delbaen \cite{Delbaen2002}, the expression $\sup_{P\in\mathcal{C}} \mathbb{E}_P [-(W_0-X+Y-\Pi_0)]$ is equivalent to a coherent risk measure of the random gain $W_0-X+Y-\Pi_0$. Hence, Problem~\eqref{optimUVlinear3} can be seen as the objective to minimize a coherent risk measure of final wealth under a premium budget constraint.
\end{remark}

\vspace{0.4cm}

%======================================================================================

\subsubsection{Examples}

For the following examples, we assume that $X$ is a continuous random variable for $Q$.

\begin{example}\label{exLinearUalternative1}
In this example, we consider again the ambiguity set $\mathcal{C}_{\mathcal{W}}$ in Example~\ref{exGeneralU}, i.e.,
\begin{equation*}
\mathcal{C}_{\mathcal{W}}=\bigg\lbrace P\in ca^{+}_{1}(\Sigma):\, \dfrac{dP}{dQ}=\dfrac{w(X)}{\int w(X)dQ},\, w\in \mathcal{W}\bigg\rbrace,
\end{equation*}
\noindent where $\mathcal{W}$ is defined in Example~\ref{exGeneralU}.

\vspace{0.2cm}

First we analyze the case when $\mathcal{F}=\mathcal{I}$, as defined in eq.~\eqref{FeasibWithoutNSC}. Observe that the worst-case measure $P^*\in\mathcal{C}_{\mathcal{W}}$ satisfies $P^*\ll Q$. Based on the same argument as in Example~\ref{exGeneralU}, the set $A_{h^*}$ in Proposition~\ref{propUVlinearCase1} is $A_{h^*}=X^{-1}([0,a])$, for some $a\geq 0$. Now, according to Proposition~\ref{propUVlinearCase1}, the structure of $R^*=r^*(X)\in\mathcal{I}$ depends on the sign of $\xi^*(x)-\lambda^*$, where $\lambda^*\geq 0$ is the optimal Lagrange multiplier of Problem~\eqref{optimUVlinear}. If $\lambda^*>0$, then there exists some $d_2\geq d_1\geq a$ such that $\xi^*(x)<\lambda^*$, for $x< d_1$ and $\xi^*(x)>\lambda^*$, for $x>d_2$. Otherwise,
$\lambda^*=0$, and in this case, $a>0$. With these observations, the optimal retention $R^*$ of Problem~\eqref{optimUVlinear} is given by:
\begin{equation*}
R^*=\begin{cases}
\begin{rcases}
	\begin{cases}
		X\mathrm{1}_{X^{-1}([0,d_1])} + c\,X\mathrm{1}_{X^{-1}([d_1,d_2])}  , &\text{if  } \mathbb{E}_Q[X\mathrm{1}_{X^{-1}([0,d_1])}]<\tilde{\Pi}_0 \\
  		X\mathrm{1}_{X^{-1}([0,d_1])} , &\text{if  } \mathbb{E}_Q[X\mathrm{1}_{X^{-1}([0,d_1])}]=\tilde{\Pi}_0\\
	\end{cases}
\end{rcases}, & \text{if }\mathbb{E}_Q[X\mathrm{1}_{X^{-1}([0,a])}]<\tilde{\Pi}_0, \\
\;\;\; c_0X\mathrm{1}_{X^{-1}([0,a])}, \;\;\; \text{if } \mathbb{E}_Q[X\mathrm{1}_{X^{-1}([0,a])}]\geq\tilde{\Pi}_0,
\end{cases}
\end{equation*}
where $c:=\dfrac{\tilde{\Pi}_0-\mathbb{E}_Q[X\mathrm{1}_{X^{-1}([0,d_1])}]}{\mathbb{E}_Q[X\mathrm{1}_{X^{-1}([d_1,d_2])}]}\in (0,1]$, $c_0:=\dfrac{\tilde{\Pi}_0}{\mathbb{E}_Q[X\mathrm{1}_{X^{-1}([0,a])}]}$ and $\tilde{\Pi}_0:=\mathbb{E}_Q[X]-(1+\rho)^{-1}\Pi_0$.

\vspace{0.2cm}

Now, suppose that $\mathcal{F}=\hat{\mathcal{I}}$, as defined in eq.~\eqref{FeasibWithNSC}. Then according to Proposition~\ref{propCase2}, the optimal retention $\hat{R}^*=\hat{r}^*(X)$ is determined by
\begin{equation*}
(\hat{r}^*)'(t)=\begin{cases}
0, &\mbox{if }  P^*\circ X^{-1}([t,M])>\lambda^*Q\circ X^{-1}([t,M]),\\
\kappa(t),  &\mbox{if }  P^*\circ X^{-1}([t,M]) = \lambda^*Q\circ X^{-1}([t,M]),\\
1,  &\mbox{if } P^*\circ X^{-1}([t,M])<\lambda^*Q\circ X^{-1}([t,M]),
\end{cases}
\end{equation*}
for all $t\in [0,M]$.
\vspace{0.2cm}

Define the function $\varphi:[0,M]\rightarrow \mathbb{R}_+$, $\varphi(t):=\dfrac{P^*_{ac}\circ X^{-1}([t,M])}{Q\circ X^{-1}([t,M])}=\dfrac{1-F_{X,P^*}(t)}{1-F_{X,Q}(t)}$. Let $Z_1$ and $Z_2$ be two random variables, whose probability distribution are $F_{X,P^*}$ and $F_{X,Q}$, respectively, and let $f_{X,P^*}$ and $f_{X,Q}$ be the corresponding probability densities. Since $f_{X,P^*}(t)/ f_{X,Q}$ is increasing in $t$, then $Z_1$ is smaller than $Z_2$ in the likelihood ratio order, i.e., $Z_1\leqLR Z_2$, and hence, by \cite[Theorem 1.C.1.]{ShakedShanthikumar2007}, $Z_1$ is smaller than $Z_2$ in the hazard ratio order, i.e., $Z_1\leqHR Z_2$. Thus, the function $\varphi$ is increasing in $t$. For $\lambda^*\in\mathbb{R}_+$, there exists some $d_1\leq d_2\in [0,M]$ such that $\varphi(t)=\lambda^*$, for $t\in [d_1,d_2]$. In this case, the optimal retention is of the following form:
\begin{equation*}
(\hat{r}^*)'(t)=\begin{cases}
1, &  0\leq t<d_1,\\
\kappa(t),  & d_1\leq t\leq d_2,\\
0,  &\mbox{if } d_2<t,
\end{cases}
\end{equation*}
for some Lebesgue measurable and $[0,1]$-valued function $\kappa$. In particular, for $\kappa(t)\equiv 0$, the optimal indemnity is $\hat{Y}^*=\max(X-d,0)$, for some $d\in [0,M]$.
\end{example}

\vspace{0.2cm}

\begin{example}\label{exLinearUalternative2}
Assume that the state-space $S$ is a Polish space. Let $T_1, \ldots, T_n$ be $n$ distortion functions such that $T_i:[0,1]\rightarrow [0,1]$ is a strictly increasing, strictly concave function satisfying $T_i(0)=0$ and $T_i(1)=1$, for $i=1,\ldots,n$. As before, let $F_{X,Q}$ be the continuous cdf corresponding to the insurer's belief $Q$. Moreover, let $P_1,\ldots, P_n$ be $n$ probability measures, whose corresponding cdf's are defined as $F_{X,P_i}:=1-T_i(1-F_{X,Q})$, $i=1,\ldots, n$. In the presence of model ambiguity, DM considers the ambiguity set $\mathcal{C}$ to be the convex hull of these probabilities, i.e., $\overline{\mathcal{C}}:=conv(P_1, \ldots, P_n)$. Observe that, by construction, $P\sim Q$, for any $P\in\overline{\mathcal{C}}$. As $S$ is a Polish space, according to Prohorov's Theorem, $\overline{\mathcal{C}}$ is weak$^*$-compact if it is closed and tight. Since the generating set is finite, it follows that $\overline{\mathcal{C}}$ is a closed set. To prove the tightness property, observe that $\mathbb{E}_{P_i}[W_0-X+Y-\Pi_0]\leq W_0$, for $i=1,\ldots,n$ and for all $Y\in\mathcal{F}$.

\vspace{0.2cm}

Next, observe that for any $P\in\overline{\mathcal{C}}$, there exists some $\alpha_1,\ldots,\alpha_n\in [0,1]$ with $\sum_{i=1}^n \alpha_i=1$ such that $\mathbb{E}_P[W_0-X+Y-\Pi_0]=\sum_{i=1}^n\alpha_i\mathbb{E}_{P_i}[W_0-X+Y-\Pi_0]\leq K$. For any $\delta>0$ and any $P\in\overline{\mathcal{C}}$,
\[
P\bigg( W_0-X+Y-\Pi_0\leq \dfrac{K}{\delta}\bigg)\geq 1-\dfrac{\mathbb{E}_P[W_0-X+Y-\Pi_0]}{K/\delta}\geq 1-\delta,
\]
where the first inequality holds by Markov's inequality. Hence, $\overline{\mathcal{C}}$ is tight and thus weak$^*$-compact.

\vspace{0.2cm}

Next, consider Problem~\eqref{optim} with the ambiguity set $\overline{\mathcal{C}}$ defined above. It is a problem of minimization of a linear function in $P$, and thus minimization of a concave function in $P$ over a convex and compact set. As the ambiguity set is not empty, the optimal solution is an extreme point of $\overline{\mathcal{C}}$, i.e., $P^*\in\lbrace P_1,\ldots, P_n\rbrace$.

\vspace{0.2cm}

In this setting, let $\mathcal{F}=\mathcal{I}$ be the admissible set of indemnities in Proposition~\ref{propUVlinearCase1}. Since $h^*>0$ on $S$, the optimal retention in Example~\ref{exLinearUalternative1} simplifies further to:
\begin{equation*}
R^*=\begin{cases}
		X\mathrm{1}_{X^{-1}([0,d_1])} + c\,X\mathrm{1}_{X^{-1}([d_1,d_2])}  , &\text{if  } \mathbb{E}_Q[X\mathrm{1}_{X^{-1}([0,d_1])}]<\tilde{\Pi}_0, \\
  		X\mathrm{1}_{X^{-1}([0,d_1])} , &\text{if  } \mathbb{E}_Q[X\mathrm{1}_{X^{-1}([0,d_1])}]=\tilde{\Pi}_0,
	\end{cases}
\end{equation*}
where $c\in (0,1]$ is given in Proposition~\ref{propUVlinearCase1}.

\vspace{0.2cm}

For $\mathcal{F}=\hat{\mathcal{I}}$ as defined in eq.~\eqref{FeasibWithNSC} and $u(x)\equiv x$, Proposition~\ref{propCase2} states that the optimal retention $\hat{R}^*=\hat{r}^*\circ X$ is such that
\begin{equation*}
(\hat{r}^*)'(t)=\begin{cases}
0, &\mbox{if }  1-F_{X,P^*}(t)>\lambda^*(1-F_{X,Q}(t)),\\
\kappa(t),  &\mbox{if }  1-F_{X,P^*}(t)=\lambda^*(1-F_{X,Q}(t)),\\
1,  &\mbox{if } 1-F_{X,P^*}(t)<\lambda^*(1-F_{X,Q}(t)),
\end{cases}
\end{equation*}
for all $t\in [0,M]$.
\vspace{0.2cm}

Define the continuous function $\varphi:[0,M]\rightarrow \mathbb{R}_+$, $\varphi(t):=\dfrac{1-F_{X,P^*}(t)}{1-F_{X,Q}(t)}$. Let $f_{X,P^*}$ and $f_{X,Q}$ be the probability densities of $F_{X,P^*}$ and $F_{X,Q}$, respectively. Note that for any distortion $T\in \lbrace T_1,\ldots, T_n\rbrace$, $\dfrac{f_{X,P^*}}{f_{X,Q}}= T'(1-F_{X,Q}(t))$ is increasing in $t$. Now let $Z_1$ and $Z_2$ be two continuous random variables such that $Z_1\sim F_{X,P^*}$ and $Z_2\sim F_{X,Q}$. Similar to Example \ref{exLinearUalternative1}, $Z_1\leqHR Z_2$, and thus, the function $\varphi$ is increasing in $t$. For $\lambda^*\in\mathbb{R}_+$, there exists some $d_1\leq d_2\in [0,M]$ such that $\varphi(t)=\lambda^*$, for $t\in [d_1,d_2]$. In this case, the optimal retention is of the following form:
\begin{equation*}
(\hat{r}^*)'(t)=\begin{cases}
1, &  0\leq t<d_1,\\
\kappa(t),  & d_1\leq t\leq d_2,\\
0,  &\mbox{if } d_2<t,
\end{cases}
\end{equation*}
for some Lebesgue measurable and $[0,1]$-valued function $\kappa$. If $F_{X,Q}$ is strictly increasing, then $d_1=d_2$.
\end{example}

\vspace{0.4cm}

%====================================================================================
%====================================================================================
%====================================================================================

\section{Numerical Examples}\label{section5}

This section presents numerical examples that illustrate the structure of the optimal indemnity $\hat{Y}^*$ obtained in Sections~\ref{section3} and~\ref{section4}, when the ambiguity set $\mathcal{C}$ is constructed as a specific neighbourhood around a \textit{reference/ baseline distribution}. Throughout this analysis, we assume that the underlying space $S$ is a Polish space, equipped with its Borel sigma-algebra.

\vspace{0.2cm}

As above, $X$ is a nonnegative random variable representing the insurable loss, whose true distribution may be unknown. The insurer's belief $Q\in ca^+_1(\Sigma)$ regarding the loss $X$ can be the empirical distribution, derived from experts' opinion or estimated using standard statistical tools. The DM's ambiguity regarding the realizations of $X$ is described by a $\delta$-neighbourhood around $Q$ defined as:
\begin{equation}\label{defCparametric}
\mathcal{C}_\delta:=\lbrace P\in ca^+_1(\Sigma) :\, \mathrm{d}(P, Q)\leq\delta\rbrace,
\end{equation}
where $\mathrm{d}:ca^+_1(\Sigma)\times ca^+_1(\Sigma)\rightarrow\mathbb{R}_+$ is some \textit{discrepancy measure} between probability measures $P$ and $Q$, and $\delta>0$ is a \textit{tolerance level/ambiguity radius}. The mapping $\mathrm{d}$ satisfies $\mathrm{d}(P,Q)=0$ if and only if $P=Q$. It is worth mentioning that the worst-case distribution $P^*$ depends not only on the choice of $\mathrm{d}$, but also on the ambiguity radius $\delta$. In general, the size of $\mathcal{C}_\delta$ is connected to the amount of observations available: if $\delta$ is close to zero, the impact of ambiguity is negligible; while large values of $\delta$ indicate high levels of model uncertainty. The question of how to optimally choose the ambiguity radius is an ongoing stream of research in robust optimization. The data-driven approach estimates $\delta$ either by evaluating the discrepancy between the empirical model $\widehat{P}_n$ and the calibrated model, or using measure concentration inequalities to target a certain confidence level $\beta\in (0,1)$, i.e., $\mathbb{P}(\mathrm{d}(P^*,\widehat{P}_n)\leq\delta)\geq 1-\beta$ (see \cite[Theorem 3.4 and the discussion afterwards]{EsfahaniKuhn2018}, \cite[Section 5.1]{BlanchetHeMurthy2020}). We investigate this method in Example~\ref{exWD}, when the ambiguity set is constructed using the Wasserstein metric. Another approach interprets $\delta$ as the degree of ambiguity about the reference model and thus argues that this choice depends on the risk preferences of market participants (e.g., \cite{BreuerCsiszar2016,Wozabal2012}). In Example~\ref{exRenyi}, we follow the latter approach and solve Problem~\eqref{optimV} for different levels of ambiguity. This allows us to analyze the impact of ambiguity on the optimal indemnity $\hat{Y}^*$ and the worst-case distribution $P^*$.

\vspace{0.2cm}

The following observation characterizes the change in the DM's expected utility, as a function of the ambiguity radius $\delta$. This dynamic is later illustrated in Figure~\ref{fig:RenyiCE} in Example~\ref{exRenyi}.

\vspace{0.2cm}

%======================================================================================

\begin{remark}\label{remC}
For a fixed premium budget $\Pi_0>0$, let $\hat{\mathcal{I}}_0$ (defined in eq.~\eqref{FeasibHatI}) be the feasible set of indemnities in Theorem~\ref{propCase2V}. Moreover, for a discrepancy measure $\mathrm{d}$ and some ambiguity radii $\delta_1\leq\delta_2$, let $\mathcal{C}_{\delta_1}$ and $\mathcal{C}_{\delta_2}$ be the corresponding ambiguity sets, as defined in eq.~\eqref{defCparametric}. Let $(\hat{Y}_1^*,P_1^*)$ and $(\hat{Y}_2^*,P_2^*)$ be the saddle points of Problems~\eqref{optimV}, for $\mathcal{C}_{\delta_1}$ and $\mathcal{C}_{\delta_2}$, respectively. It holds that
\[
\mathbb{E}_{P_2^*}[u(W_0-X+\hat{Y}_2^*+\Pi_0)] \leq  \mathbb{E}_{P_1^*}[u(W_0-X+\hat{Y}_2^*+\Pi_0)] \leq \mathbb{E}_{P_1^*}[u(W_0-X+\hat{Y}_1^*+\Pi_0)],
\]
where the first inequality follows from $\mathcal{C}_{\delta_1}\subseteq\mathcal{C}_{\delta_2}$, as $\delta_1\leq \delta_2$. Hence, for increasing values of $\delta$, the optimal DM's expected utility decreases.
\end{remark}

%======================================================================================

\vspace{0.2cm}

\begin{example}[Wasserstein ambiguity set]\label{exWD}
We examine the structure of the saddle point $(\hat{Y}^*,P^*)$ in the setting of Problem~\eqref{optim}, when the insurer is risk-neutral and the admissible set of indemnities is $\mathcal{F}=\hat{\mathcal{I}}$ as defined in eq.~\eqref{FeasibWithNSC}. The DM's ambiguity about the realizations of $X$ is characterized by the ambiguity set $\mathcal{C}_\delta$ given by
\[
\mathcal{C}_\delta^{\WD} :=\lbrace P\in ca^+_1(\Sigma) :\, \WD(P,Q)\leq \delta\rbrace,
\]
where $\WD$ is the Wasserstein distance on $\mathbb{R}$, with the $L_1$-norm being the underlying metric (e.g., \cite{Vallender1974}):
\[
\WD(P,Q):=\int_{\mathbb{R}} \big\vert F_{X,P}(x)-F_{X,Q}(x)\big\vert\,dx = \int_0^1\big\vert F_{X,P}^{-1}(t)-F_{X,Q}^{-1}(t)\big\vert\,dt.
\]
The Wasserstein distance is a metric on $ca^+_1(\Sigma)$, satisfying $\WD(P,Q)=0$ if and only if $P=Q$. With this metric, $\mathcal{C}_\delta^{\WD}$ is convex and weak$^*$-compact. See Villani \cite{Villani2008} for further properties of the Wasserstein distance.

\vspace{0.2cm}

Let $Q$ be the insurer's belief regarding the loss $X$.
%the random variable $X$ represent the loss faced by DM, whose underlying distribution is unknown; however, an estimated distribution $\widehat{Q}$ is available.
In this example, we assume that $F_{X,Q}$ is a truncated Generalized Pareto distribution with an upper bound $M$, with shape and scale parameters $0.3$ and $5$, respectively. Next, we simulate from the distribution $F_{X,Q}$, and obtain the empirical distribution. We then construct a piecewise linear approximation $F_{X,\widehat{Q}}$ of this empirical distribution, with given knots $\lbrace x_1, \ldots, x_n\rbrace $, where the partition $0=x_1<\cdots<x_n=M$ is chosen arbitrarily, but kept fixed all throughout. That is, $F_{X,\widehat{Q}}$ is given by a system $\big\lbrace [x_i,x_{i+1}], [F_{X,\widehat{Q}}(x_i), F_{X,\widehat{Q}}(x_{i+1})]\big\rbrace_{i=1}^{n-1}$. Note that by construction, $F_{X,\widehat{Q}}(x_n)=1$. The corresponding density $\widehat{\mathbf{q}}=[\widehat{q}_1,\ldots,\widehat{q}_{n-1}]^\top$ is piecewise constant on each interval $[x_i,x_{i+1}]$, for $i=1,\ldots, n-1$. More precisely, $\widehat{q}_i$ is the slope of the line passing through the points $(x_i,F_{X,\widehat{Q}}(x_i))$ and $(x_{i+1},F_{X,\widehat{Q}}(x_{i+1}))$, i.e., $\widehat{q}_i=\dfrac{F_{X,\widehat{Q}}(x_{i+1})-F_{X,\widehat{Q}}(x_i)}{x_{i+1}-x_i}$, for $i=1,\ldots, n-1$.

\vspace{0.2cm}

This representation of $\widehat{Q}$ allows us to compute the Wasserstein distance between $F_{X,P}$ and $F_{X,\widehat{Q}}$, and thus to characterize the alternative distributions via the system $\big\lbrace F_{X,P}(x_1),\ldots, F_{X,P}(x_n)\big\rbrace$, for the same segments $[x_i,x_{i+1}]$, $i=1,\ldots, n-1$. For two such distributions, the Wasserstein distance is the sum of the areas of the trapezoids with corners $\big\lbrace F_{X,\widehat{Q}}(x_i), F_{X,\widehat{Q}}(x_{i+1}), F_{X,P}(x_i), F_{X,P}(x_{i+1})\big\rbrace$ formed by $F_{X,P}$ and $F_{X,\widehat{Q}}$ (see the shaded area in Figure~\ref{fig:WDcomputation}), i.e.,
\begin{equation}\label{eqWD}
\WD(P,\widehat{Q})=\dfrac{1}{2}\sum_{i=1}^{n-1} (x_{i+1}-x_i)\phi(F_{X,P}(x_i)-F_{X,\widehat{Q}}(x_i), F_{X,P}(x_{i+1})-F_{X,\widehat{Q}}(x_{i+1})),
\end{equation}
where the function $\phi:[-1,1]^2\rightarrow\mathbb{R}_+$ defined below is convex in each component (e.g., \cite{PflugTimoninaHochrainer2017}):
\[
\phi(a,b)=\begin{cases}
\vert a\vert +\vert b\vert, &\mbox{if }  ab\geq 0,\\
\dfrac{a^2+b^2}{\vert a\vert +\vert b\vert},  &\mbox{otherwise}.
\end{cases}
\]

\begin{figure}[h]
\center
\includegraphics[scale=1]{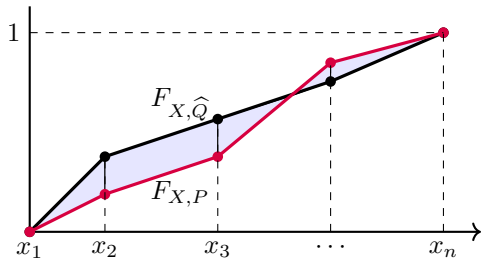}
\caption{Computation of Wasserstein distance between piecewise linear probability distributions $F_{X,\widehat{Q}}$ and $F_{X,P}$.} \label{fig:WDcomputation}
\end{figure}

\noindent The alternative measure $P$ is represented by an $(n-1)$-dimensional vector $\mathbf{p}=[p_1,\ldots, p_{n-1}]^\top$, where $p_i\in [0,1]$ is the constant forming the piecewise constant density of $F_{X,P}$. More precisely, the alternative cdf $F_{X,P}$ will be linear on each interval $[x_i,x_{i+1}]$, and will differ from $F_{X,\widehat{Q}}$ only in the cumulative probabilities $F_{X,P}(x_i)$. Thus, $p_i$ will be the slope of the line passing through the points $(x_i,F_{X,P}(x_i))$ and $(x_{i+1},F_{X,P}(x_{i+1}))$. The representation of $F_{X,P}$ is shown in Figure~\ref{fig:WDcomputation}. Therefore, the variable $\mathbf{p}$ must satisfy $\mathbf{p}^\top\mathbf{A}_{n-1}\mathbf{x}=1$, where for $i=1,\ldots, n-1$, the matrix $\mathbf{A}_i\in\mathbb{R}^{(n-1)\times n}$ is defined as follows:
\begin{equation}\label{defMatrixAi}
\mathbf{A}_i:=
\begin{blockarray}{ccccccccccc}
\begin{block}{(cccccccccc)c}
-1 & 1 & 0 & 0 & \ldots &  & \ldots &  & \ldots & 0 &  &\\
0 & -1 & 1 & 0 & \ldots &  & \ldots &  & \ldots & 0 &  & \\
 & \ddots &  & \ddots &  & \ddots &  & \ddots &  & \ddots &  &\\
\ldots &  & \ldots &  & \ldots & 0 & -1 & 1 & 0 & \ldots & \leftarrow i\text{-th row}. & \\
\ldots &  & \ldots &  & \ldots & 0 & 0 & 0 & 0 & \ldots &  & \\
 & \ddots &  & \ddots &  & \ddots &  & \ddots &  & \ddots &  & \\
\end{block}
\end{blockarray}
\end{equation}

\vspace{0.2cm}

\noindent Using the matrix above, $F_{X,P}$ can also be represented via $F_{X,P}(x_i)=\mathbf{p}^\top\mathbf{A}_i\mathbf{x}$, for $i=1,\ldots,n-1$.

\vspace{0.2cm}

Next, to identify the optimal $\hat{Y}^*$, we follow the equivalent formulation of Problem~\eqref{optim} and describe the decision variable in terms of retention function $\hat{R}$. We assume that $\hat{R}=\hat{r}\circ X$ is piecewise linear between the segments $[x_i,x_{i+1}]$, $i=1,\ldots, n-1$, and describe the corresponding values $\hat{r}(x_i)$ as $\mathbf{\hat{r}}=[\hat{r}_1,\ldots,\hat{r}_n]^\top\in\mathbb{R}^n_+$. Moreover, the variable $\mathbf{\hat{r}}$ is feasible for Problem~\eqref{optim} if $\mathbf{0}\leq \mathbf{\hat{r}}\leq\mathbf{x}$ and $\mathbf{0}\leq \mathbf{A}_{n-1}\mathbf{\hat{r}}\leq \mathbf{A}_{n-1}\mathbf{x}$, where the matrix $\mathbf{A}_{n-1}$ is defined in eq.~\eqref{defMatrixAi}.

\vspace{0.2cm}

With these representations for $\mathbf{p}$ and $\mathbf{\hat{r}}$, the objective function in~\eqref{optim} is approximated using the trapezoid rule on intervals $[x_i,x_{i+1}]$, $i=1,\ldots, n-1$, as follows:
\begin{equation*}
\begin{split}
\int_0^M u(W_0-\hat{r}(x)-\Pi_0)dF_{X,P}(x) & = \sum_{i=1}^{n-1} \int_{x_i}^{x_{i+1}} u(W_0-\hat{r}(x)-\Pi_0)p_i\, dx \\
& \approx \dfrac{1}{2}\sum_{i=1}^{n-1} \big( u(W_0-\hat{r}_i-\Pi_0) + u(W_0-\hat{r}_{i+1}-\Pi_0)\big)(x_{i+1}-x_i)p_i.
\end{split}
\end{equation*}

\vspace{0.2cm}

The trapezoid rule allows us to determine the number of segments $n-1$ used in the linear approximation of $F_{X,Q}$, by specifying \textit{a priori} the target error bound. The error bound for this method is given by
\begin{equation}\label{eqErrorBound}
\vert\epsilon\vert \leq \dfrac{K(x_n-x_1)^3}{12n^2},
\end{equation}
where the constant $K$ satisfies $\vert u''(W_0-x-\Pi_0)\vert\leq K$. As example, let DM's utility be $u(x)=(1-\exp(-\gamma x))/\gamma$, for $\gamma=0.03$, the initial wealth $W_0=250$ and the premium budget $\Pi_0=4$. By direct computation, we get
\[
\vert u''(W_0-\hat{r}-\Pi_0)\vert=\gamma e^{-\gamma(W_0-\Pi_0)}e^{-\gamma \hat{r}}\leq \gamma e^{-\gamma(W_0-\Pi_0)}:=K.
\]
As before, we select $x_1=0$ and $x_n=M$, where the upper bound of the random variable $X$ is equal to $M=W_0-\Pi_0$. If the largest possible error between the objective function in~\eqref{optim} and its approximation is $\epsilon=10^{-4}$, then according to eq.~\eqref{eqErrorBound}, $n\approx 200$. Therefore, to solve~\eqref{optim}, we choose $n=200$ points of approximation $\lbrace x_1,\ldots, x_n\rbrace$.

\vspace{0.2cm}

The error introduced by solving \eqref{optim} in terms of $\widehat{Q}$ instead of $Q$ can be used to estimate the ambiguity radius $\delta$. The estimator $\delta=\delta_n$ depends on the number of piecewise linear segments and thus, it is informed by the data. In particular, we propose to approximate $\delta_n$ as
\[
\delta_n:=\big\vert\mathbb{E}_{\widehat{Q}}[X]-\mathbb{E}_{Q}[X]\big\vert=\WD(\widehat{Q},Q),
\]
as $F_{X,\widehat{Q}}$ dominates $F_{X,Q}$ in the first stochastic order. In our setting, $\delta_n$ is estimated to be around $0.3$.

\vspace{0.2cm}

In sum, Problem~\eqref{optim} is approximated by the following problem:
\begin{equation}\label{optimWasserstein}
\left\{
\begin{aligned}[3]
    & \underset{\quad\ \mathbf{\hat{r}}\in\mathbb{R}^n_+ \ \mathbf{p}\in [0,1]^{n-1}{}}{\max\quad \min}   &  & \dfrac{1}{2}\displaystyle{\sum_{i=1}^{n-1}} \big( u(W_0-\hat{r}_i-\Pi_0) + u(W_0-\hat{r}_{i+1}-\Pi_0)\big)(x_{i+1}-x_i)p_i {} \\
    & \makebox(80,0){s.t.}		 	&       & \mathbf{0}\leq \mathbf{A}_{n-1}\mathbf{\hat{r}}\leq  \mathbf{A}_{n-1}\mathbf{x} {}, \\
    &                                             &       & \mathbf{0}\leq \mathbf{\hat{r}} \leq \mathbf{x} {},\\
    &                                             &       & -\widehat{\mathbf{q}}^\top\mathbf{A}_{n-1}\mathbf{\hat{r}}\leq -2\tilde{\Pi}_0{}, \\
    &												&       & \WD(F_{X,P},F_{X,\widehat{Q}})\leq\delta_n,  \:  F_{X,P}(x_i)=\mathbf{p}^\top\mathbf{A}_i\mathbf{x}, \, i=1,\ldots,n{}, \\
    &                                             &       & \mathbf{p}^\top\mathbf{A}_{n-1}\mathbf{x}=1 {},
\end{aligned}\tag{P$_1^{n}$}
\right.
\end{equation}
where $\tilde{\Pi}_0=\dfrac{1}{2}\widehat{\mathbf{q}}^\top \mathbf{A}_{n-1}\mathbf{x} - (1+\rho)^{-1}\Pi_0$ and $\WD(F_{X,P},F_{X,\widehat{Q}})$ is computed as in eq.~\eqref{eqWD}. The objective function in~\eqref{optimWasserstein} is concave in $\mathbf{\hat{r}}$ and linear in $\mathbf{p}$, while the constraints are convex in $\mathbf{p}$ and linear in $\mathbf{\hat{r}}$. Problem~\eqref{optimWasserstein} is solved via successive convex programming (SCP -- see \cite{PflugPicher2014}). The idea is to approximate the infinite dimensional ambiguity set $\mathcal{C}_\delta^{\WD}$ by a finitely generated set $\mathcal{P}^{(m)}:=\lbrace \widehat{\mathbf{q}},\mathbf{p}^{(1)}, \ldots, \mathbf{p}^{(m)}\rbrace$, obtained iteratively from solving the inner problem in~\eqref{optimWasserstein}. The algorithm starts with $m=0$, $\mathcal{P}^{(m)}:=\mathcal{P}^{(0)} = \lbrace \widehat{\mathbf{q}}\rbrace$, and solves the outer problem:

\begin{equation}\label{optimWassersteinOuter}
\left\{
\begin{aligned}[3]
    & \underset{\; \mathbf{\hat{r}}\in\mathbb{R}^n_+ \; \mathbf{p}\in\mathcal{P}^{(m)} {}}{\max\quad \min}   &  & \dfrac{1}{2}\displaystyle{\sum_{i=1}^{n-1}} \big( u(W_0-\hat{r}_i-\Pi_0) + u(W_0-\hat{r}_{i+1}-\Pi_0)\big)(x_{i+1}-x_i)p_i {} \\
    & \makebox(70,5){s.t.}		 	&       & \mathbf{0}\leq \mathbf{A}_{n-1}\mathbf{\hat{r}}\leq  \mathbf{A}_{n-1}\mathbf{x} {}, \\
    &                                             &       & \mathbf{0}\leq \mathbf{\hat{r}} \leq \mathbf{x} {},\\
    &                                             &       & -\widehat{\mathbf{q}}^\top\mathbf{A}_{n-1}\mathbf{\hat{r}}\leq -2\tilde{\Pi}_0{}.
\end{aligned}\tag{P$_{1,\text{outer}}^{n}$}
\right.
\end{equation}
The solution $\mathbf{\hat{r}}^{(m)}:=\mathbf{\hat{r}}^{(1)}$ acts as input for the inner problem:
\begin{equation}\label{optimWassersteinInner}
\left\{
\begin{aligned}[2]
    & \underset{\mathbf{p}\in [0,1]^{n-1}{}}{\min}   &  & \dfrac{1}{2}\displaystyle{\sum_{i=1}^{n-1}} \big( u(W_0-\hat{r}_i^{(m)}-\Pi_0) + u(W_0-\hat{r}_{i+1}^{(m)}-\Pi_0)\big)(x_{i+1}-x_i)p_i {} \\
    & \makebox(40,10){s.t.}		 	&       & \WD(F_{X,P},F_{X,\widehat{Q}})\leq\delta_n,  \,  F_{X,P}(x_i)=\mathbf{p}^\top\mathbf{A}_i\mathbf{x}, \, i=1,\ldots,n{}, \\
    &                                             &       & \mathbf{p}^\top\mathbf{A}_{n-1}\mathbf{x}=1 {}.
\end{aligned}\tag{P$_{1,\text{inner}}^{n}$}
\right.
\end{equation}
The new $\mathbf{p}^{(m+1)}:=\mathbf{p}^{(1)}$ is added to the discrete set, i.e., $\mathcal{P}^{(m+1)}=\mathcal{P}^{(m)}\cup \lbrace\mathbf{p}^{(m+1)}\rbrace$, $m=m+1$, and the outer Problem~\eqref{optimWassersteinOuter} is solved using the updated $\mathcal{P}^{(m)}$. The algorithm stops when no new model is found. %The robustness of the solution is guaranteed, as the minimum in~\eqref{optimWassersteinOuter} is taken with respect to an increasing set $\mathcal{P}^{(m)}$.
The convergence of the algorithm is proven in Pflug and Pichler \cite{PflugPicher2014}. For completeness, a sketch of the proof of this results in our setting is presented in Appendix~\ref{appendixAlgorithm}.

\vspace{0.2cm}

For the implementation, we resume the input for~\eqref{optimWasserstein}: the DM's initial wealth is $W_0=250$ and the utility is $u(x)=(1-\exp(-\gamma x))/\gamma$, for $\gamma=0.03$, while the premium budget $\Pi_0=4$ and the safety loading is $\rho=0.2$. For $n=200$, we simulate from the distribution $F_{X,Q}$, and construct the piecewise linear approximation $F_{X,\widehat{Q}}$ of the empirical distribution  on the partition $0=x_1<x_2<\cdots<x_n=M=W_0-\Pi_0=246$. The cdf $F_{X,\widehat{Q}}$ will play the role of the baseline distribution in Problem~\eqref{optimWasserstein}. Finally, the ambiguity radius $\delta_n$ is estimated to be around $0.3$. Figure~\ref{fig:WDYP} shows one of the saddle points of Problem~\eqref{optimWasserstein}: the optimal $\mathbf{\hat{y}}^*$ is a deductible indemnity with a deductible $d^*\approx 7.6$, while the corresponding $F_{X,P^*}$ dominates $F_{X,\widehat{Q}}$ in the first stochastic order.

\begin{figure}[ht]
\center
  \subfloat{
    \includegraphics[width=.5\linewidth]{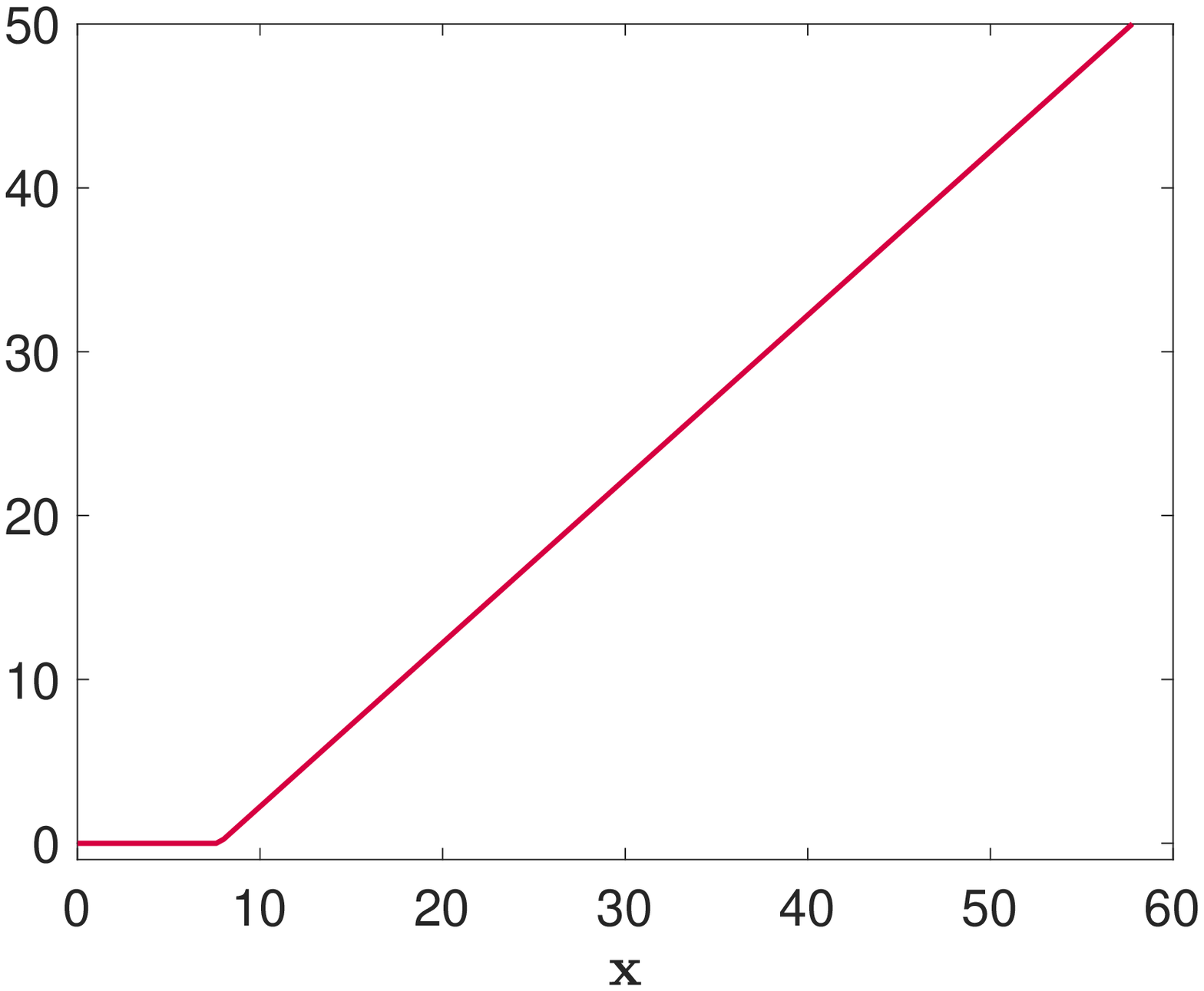}
  }
  \subfloat{
    \includegraphics[width=.5\linewidth]{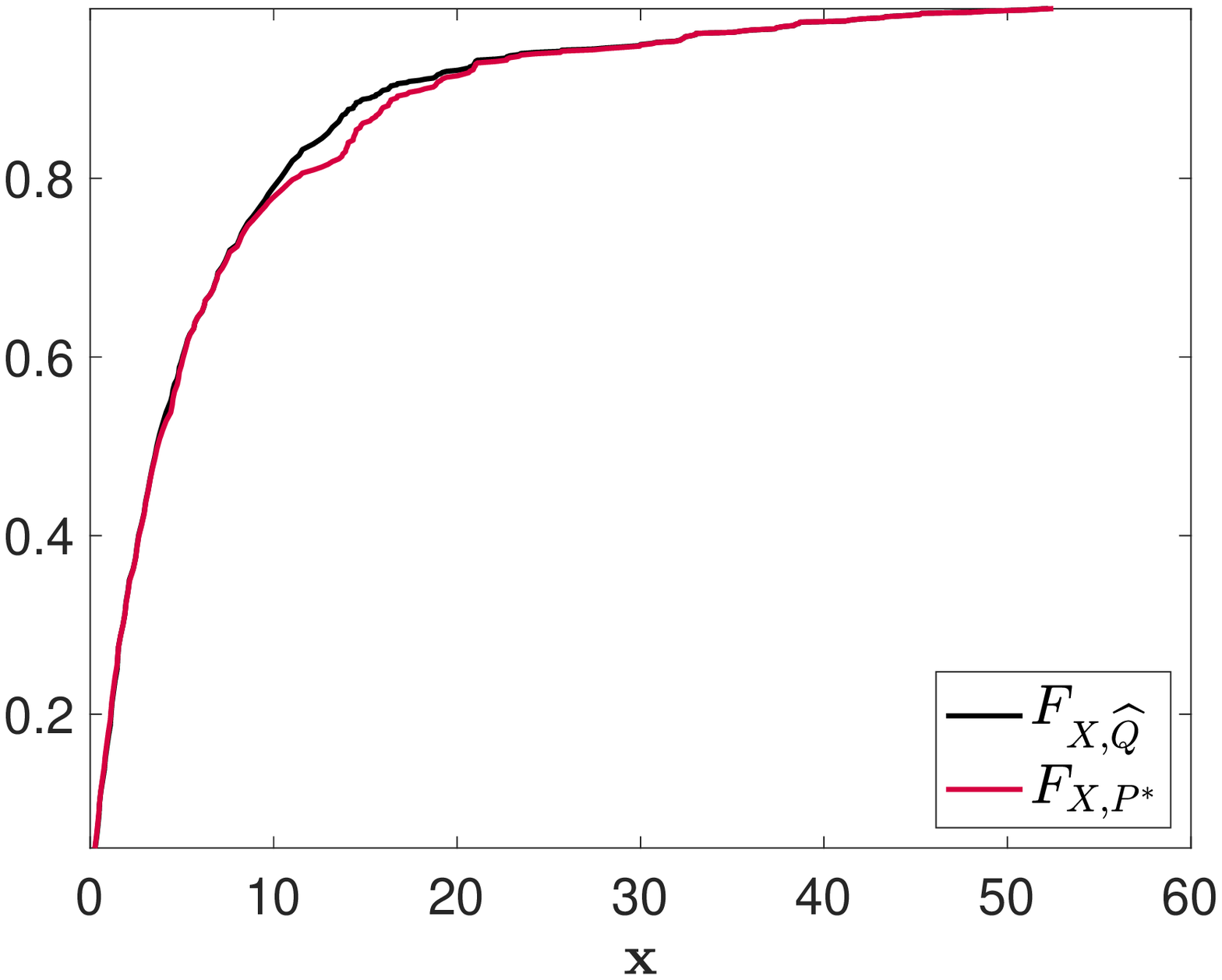}
  }
  \caption{Left: the optimal indemnity $\mathbf{\hat{y}}^*=\mathbf{x}-\mathbf{\hat{r}}^*$ as function of $\mathbf{x}$. Right: the DM's optimal distribution $F_{X,P^*}$ (red) compared to insurers' belief $F_{X,\widehat{Q}}$ (black).}\label{fig:WDYP}
\end{figure}

\vspace{0.2cm}

Next, we study a problem related to Problem~\eqref{optimWasserstein}, in which retention function $\mathbf{r}$ is required only to be bounded by $\mathbf{x}$, i.e. $\mathbf{r}\in\mathcal{I}$, where $\mathcal{F}=\mathcal{I}$ as defined in eq.~\eqref{FeasibWithoutNSC}. In Figure~\ref{fig:WDCase12} (left) we display the difference between the optimal retention functions of Problem~\eqref{optimWasserstein}, for the sets $\mathcal{I}$ and $\hat{\mathcal{I}}$, respectively. Figure~\ref{fig:WDCase12} (right) provides a closer look at the corresponding indemnities $\mathbf{y}^*$ and $\mathbf{\hat{y}}^*$, respectively. While the deductible contract is optimal when the no-sabotage condition is present (the red line in Figure~\ref{fig:WDCase12}), the indemnity $\mathbf{y}^*$ can be decreasing with respect to the loss $\mathbf{x}$ in the absence of this condition (see the blue line in Figure~\ref{fig:WDCase12}).

\begin{figure}[ht]
\center
  \subfloat{
    \includegraphics[width=.5\linewidth]{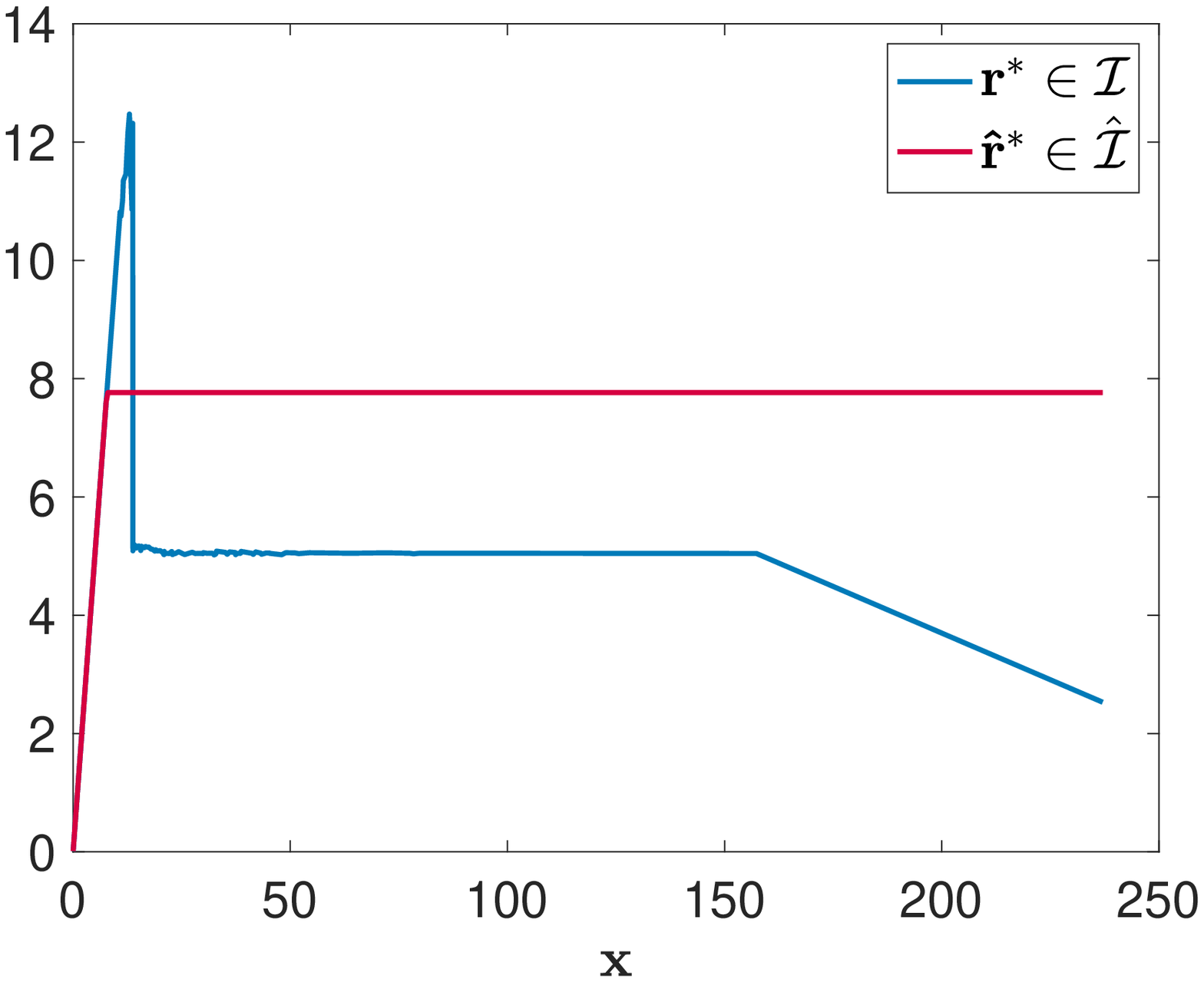}
  }
  \subfloat{
    \includegraphics[width=.5\linewidth]{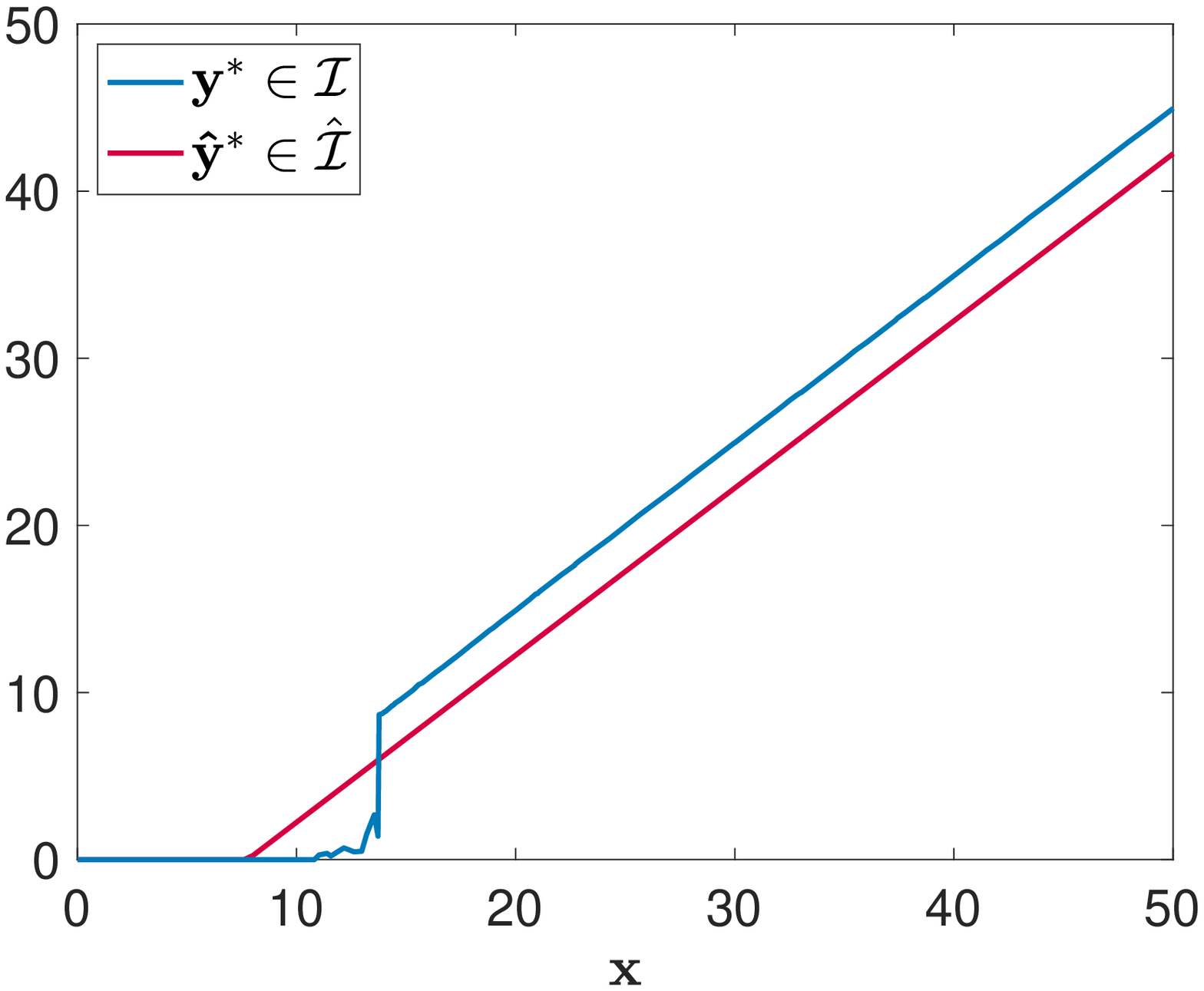}
  }
  \caption{Left: optimal retention functions $\mathbf{r}^*$ and $\mathbf{\hat{r}}^*$ for Problem~\eqref{optimWasserstein} when the feasibility sets are $\mathcal{I}$ and $\hat{\mathcal{I}}$, respectively. Right: the corresponding indemnities $\mathbf{y}^*= \mathbf{x}-\mathbf{r}^*$ and $\mathbf{\hat{y}}^*=\mathbf{x}-\mathbf{\hat{r}}^*$ with a zoomed-in perspective.}\label{fig:WDCase12}
\end{figure}
\end{example}

%======================================================================================

\vspace{0.4cm}

\begin{example}[R{\'e}nyi ambiguity set]\label{exRenyi}
For this example we focus on Problem~\eqref{optimV}, when the admissible set of indemnities is $\mathcal{F}=\hat{\mathcal{I}}$ as defined in eq.~\eqref{FeasibWithNSC}. Let DM's ambiguity set $\mathcal{C}_\delta$ be given by
\[
\mathcal{C}_{\delta}^{D_\alpha}:=\lbrace P\ll Q :\, D_\alpha(P\Vert Q)\leq\delta\rbrace,
\]
where $D_\alpha$ is the R{\'e}nyi divergence of order $\alpha$ between $P$ and $Q$, i.e.,
\begin{equation*}
D_\alpha(P\Vert Q):=\dfrac{1}{\alpha-1}\log\mathbb{E}_Q\bigg[\bigg(\dfrac{dP}{dQ}\bigg)^\alpha\bigg].
\end{equation*}
We observe that for every $\alpha\geq 1$, $D_\alpha(P\Vert Q)=0$ if and only if $P=Q$. When $\alpha\rightarrow 1$, $D_\alpha$ is the well-known Kullback-Leibler divergence. Moreover, since $S$ is a Polish space, for any ambiguity radius $\delta\in [0,\infty)$ and degree $\alpha\geq 1$ the set $\mathcal{C}_{\delta}^{D_\alpha}$ is a convex  and compact in the topology of weak convergence (e.g., \cite[Theorem 20]{vErvenHarremos2014}). For more on the properties of the divergence $D_\alpha$, we refer to \cite{Renyi1961} and \cite{LieseVajda1987}.

\vspace{0.2cm}

To illustrate our results, we follow the existing literature and consider a discretely distributed loss $X$. For a sample of size $n$, we assume without loss of generality that $x_1\leq\cdots\leq x_n$, and we denote this $n$-sample by $\mathbf{x}=[x_1,\ldots,x_n]^\top$. For our example, a random sample $\mathbf{x}$ of size $n=100$ is drawn from a truncated exponential distribution with mean $\mu=20$, and with an upper bound $M=W_0-\Pi_0$. Moreover, the insurer's belief $\widehat{Q}$ is the empirical distribution of the sample $\mathbf{x}$. Let $\widehat{\mathbf{q}}=[\widehat{q}_1,\ldots,\widehat{q}_n]^\top$ be the insurer's probability mass function (pmf), where $\widehat{q}_i:=\widehat{Q}(X=x_i)$, $\widehat{q}_i\geq 0$, $i=1,\ldots, n$, $\mathbf{1}^\top\widehat{\mathbf{q}}=1$.

\vspace{0.2cm}

Let $\mathbf{\hat{y}}=[\hat{y}_1,\ldots,\hat{y}_n]^\top\in\mathbb{R}_+^n$ be the indemnification function corresponding to the loss $\mathbf{x}$. Following the approach in~\cite{Asimitetal2017}, the feasibility constraints $0\leq \hat{y}_i\leq x_i$ and $0\leq \hat{y}_i-\hat{y}_{i-1}\leq x_i-x_{i-1}$, for $i=1, \ldots, n$, are represented by $\mathbf{0}\leq \mathbf{\hat{y}}\leq\mathbf{x}$ and $\mathbf{0}\leq \mathbf{A}_{n-1}\mathbf{\hat{y}}\leq \mathbf{A}_{n-1}\mathbf{x}$, where $\mathbf{A}_{n-1}\in\mathbb{R}^{(n-1)\times n}$ is defined in eq.~\eqref{defMatrixAi}.

\vspace{0.2cm}
Moreover, $\mathbf{p}=[p_1,\ldots,p_n]^\top\in [0,1]^n$ belongs to $\mathcal{C}_{\delta}^{D_\alpha}$ if it satisfies the following conditions:
\begin{enumerate}[label=(\roman*)]
  \item is a pmf: $\mathbf{1}^\top\mathbf{p}=1$;
  \item is absolutely continuous with respect to $\widehat{\mathbf{q}}$: if $\exists\, i \in \{1,\ldots, n\}$ such that $\widehat{q}_i=0$, then $p_i=0$.
  \item lies in a R{\'e}nyi ambiguity set around $\widehat{\mathbf{q}}$:
  $$ \mathbf{p}^\alpha\cdot\widehat{\mathbf{q}}^{\, 1-\alpha}= \sum_{i=1}^n p_i^{\alpha}\,\widehat{q}_i^{\, 1-\alpha}\leq\overline{\delta}, $$
  where $\overline{\delta}:=\exp(\delta(\alpha-1))$.
\end{enumerate}

To simplify the notation, let $D:=\lbrace \mathbf{p}\in [0,1]^n : \,  p_i=0  \text{ if } \widehat{q}_i=0, \, i=1,\ldots,n\rbrace$.

\vspace{0.2cm}

With these representations for the variables $\mathbf{\hat{y}}$ and $\mathbf{p}$, Problem~\eqref{optimV} is given by:
\begin{equation}\label{optimRenyi}
\left\{
\begin{aligned}[3]
    & \underset{\ \mathbf{\hat{y}}\in\mathbb{R}^n_+ \quad \mathbf{p}\in D{}}{\max\quad \min}   &  & \displaystyle{\sum_{i=1}^n} u(W_0-x_i+\hat{y}_i-\Pi_0)p_i  {} \\
    & \makebox(60,0){s.t.}		 	&       & \mathbf{0}\leq \mathbf{A}_{n-1}\mathbf{\hat{y}}\leq  \mathbf{A}_{n-1}\mathbf{x} {}, \\
    &                                             &       & \mathbf{0}\leq \mathbf{\hat{y}} \leq \mathbf{x} {},\\
    &                                             &       & \sum_{i=1}^n -v(W_0^{\text{Ins}}-(1+\theta)\hat{y}_i+\Pi_0)\widehat{q}_i \leq -v(W_0^{\text{Ins}}) {}, \\
    &												&       & \mathbf{p}^\alpha\cdot\widehat{\mathbf{q}}^{\, 1-\alpha}  \leq  \overline{\delta} {}, \\
    &                                             &       & \mathbf{1}^\top\mathbf{p}  =  1 {}.
\end{aligned}\tag{P$^{n}$}
\right.
\end{equation}

\vspace{0.2cm}

Observe that Problem~\eqref{optimRenyi} is a convex optimization problem, as the objective function is concave in $\hat{y}_i$ and linear in $p_i$, for  $i=1,\ldots, n$, while the constraints are convex in $\hat{y}_i$ and $p_i$, for any $\alpha>1$. Similarly to Example~\ref{exWD},  Problem~\eqref{optimRenyi} is solved in a step-wise manner, by splitting the initial problem into an inner Problem~\eqref{optimRenyiInner}:
\begin{equation}\label{optimRenyiInner}
\left\{
\begin{aligned}[2]
    & \underset{\mathbf{p}\in D{}}{\min}   &  & \displaystyle{\sum_{i=1}^n} u(W_0-x_i+\hat{y}_i-\Pi_0)p_i  {} \\
    & \makebox(20,0){s.t.}		   &       & \mathbf{p}^\alpha\cdot\widehat{\mathbf{q}}^{\, 1-\alpha}  \leq  \overline{\delta} {}, \\
    &                                             &       & \mathbf{1}^\top\mathbf{p}  =  1 {},
\end{aligned}\tag{P$^{n}_{\text{inner}}$}
\right.
\end{equation}

\noindent and outer Problem~\eqref{optimRenyiOuter}:
\begin{equation}\label{optimRenyiOuter}
\left\{
\begin{aligned}[2]
    & \underset{\ \mathbf{\hat{y}}\in\mathbb{R}^n_+ \quad \mathbf{p}\in \mathcal{P}^{(m)} {}}{\max\quad \min}  &  & \displaystyle{\sum_{i=1}^n} u(W_0-x_i+\hat{y}_i-\Pi_0)p_i  {} \\
    & \makebox(60,0){s.t.}		 	&       & \mathbf{0}\leq \mathbf{A}_{n-1}\mathbf{\hat{y}}\leq  \mathbf{A}_{n-1}\mathbf{x} {}, \\
    &                                             &       & \mathbf{0}\leq \mathbf{\hat{y}} \leq \mathbf{x} {},\\
    &                                             &       & \sum_{i=1}^n -v(W_0^{\text{Ins}}-(1+\theta)\hat{y}_i+\Pi_0)\widehat{q}_i \leq -v(W_0^{\text{Ins}}) {},
\end{aligned}\tag{P$^{n}_{\text{outer}}$}
\right.
\end{equation}
where $\mathcal{P}^{(m)}=\lbrace \widehat{\mathbf{q}},\mathbf{p}^{(1)},\ldots,\mathbf{p}^{(m)}\rbrace$ is a finite set of alternative pmfs, obtained as worst-case solutions in previous iterations of Problem~\eqref{optimRenyiInner}. More precisely, the algorithm starts with the initialization $\mathcal{P}^{(0)}=\lbrace \widehat{\mathbf{q}}\rbrace$. The outer Problem~\eqref{optimRenyiOuter} receives as input the singleton $\mathcal{P}^{(0)}$ and yields an optimal solution $\mathbf{\hat{y}}^{(1)}$. The indemnity $\mathbf{\hat{y}}^{(1)}$ is then used as input for~\eqref{optimRenyiInner}, which gives a new pmf $\mathbf{p}^{(1)}$. The new $\mathbf{p}^{(1)}$ is added to a finite set of feasible models $\mathcal{P}^{(1)}=\mathcal{P}^{(0)}\cup \lbrace \mathbf{p}^{(1)}\rbrace$, which is the input for~\eqref{optimRenyiOuter}. The algorithm continues until no new model is added to $\mathcal{P}^{(m)}$. A sketch of the proof of convergence is provided in Appendix~\ref{appendixAlgorithm}.

\vspace{0.2cm}

To obtain an explicit solution, let the DM's initial wealth be $W_0=200$, the budget be $\Pi_0=10$, the safety loading be $\rho=0.2$, and the DM's utility be $u(x)=x^{1/3}$. The insurers' initial wealth is $W_0^{\text{Ins}}=400$, and the utility is $v(x)=x^{1/5}$. For the ambiguity set $\mathcal{C}_{\delta}^{D_\alpha}$, we choose the ambiguity radius $\delta=0.7$ and the order of R{\'e}nyi divergence $\alpha=3$. Figure~\ref{fig:RenyiYP} shows the optimal indemnity $\mathbf{\hat{y}}^*$ (left) and the worst-case distribution $F_{X,P^*}$, corresponding to $\mathbf{p}^*$ (right).

\begin{figure}[ht]
\center
  \subfloat{
    \includegraphics[width=.5\linewidth]{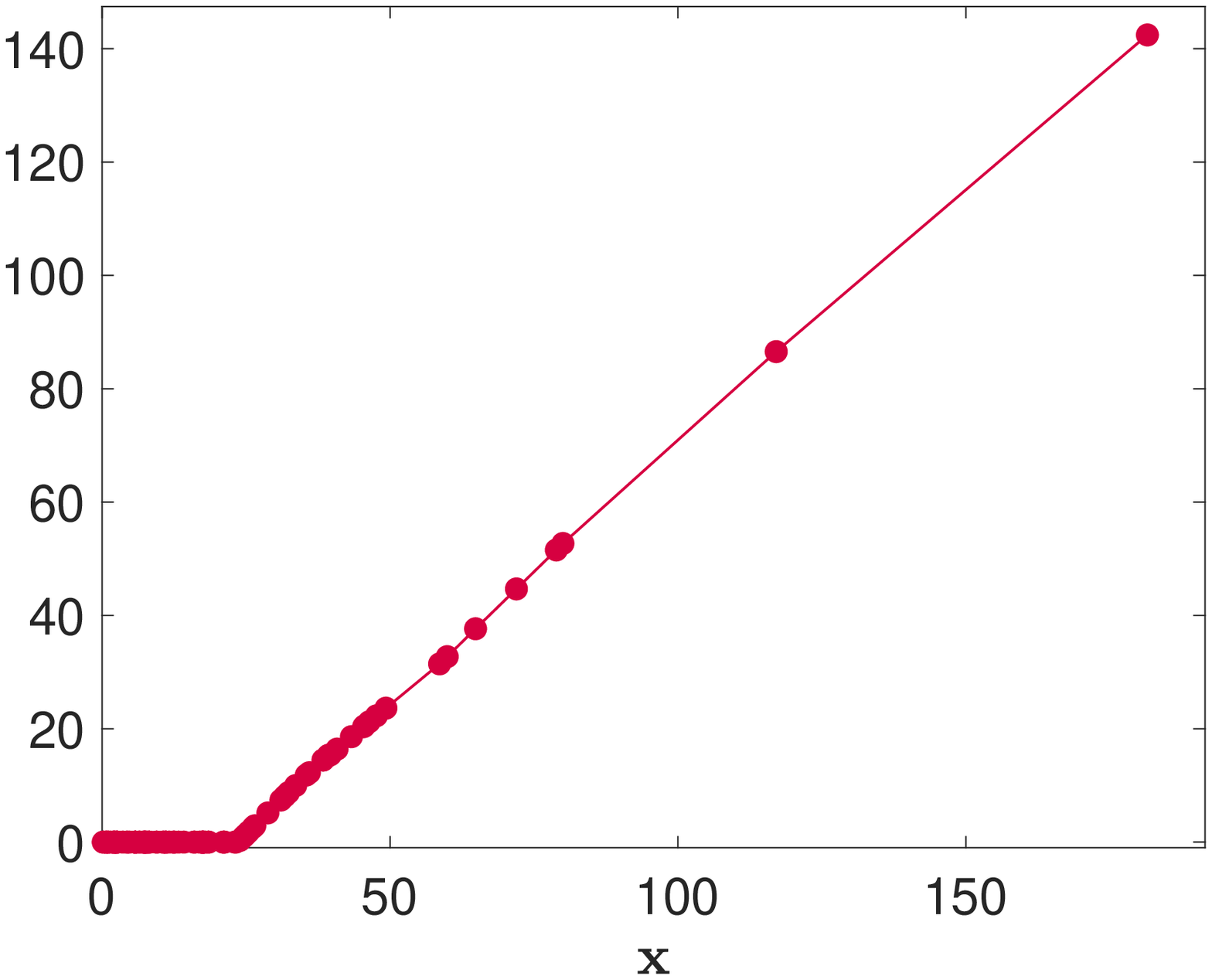}
  }
  \subfloat{
    \includegraphics[width=.5\linewidth]{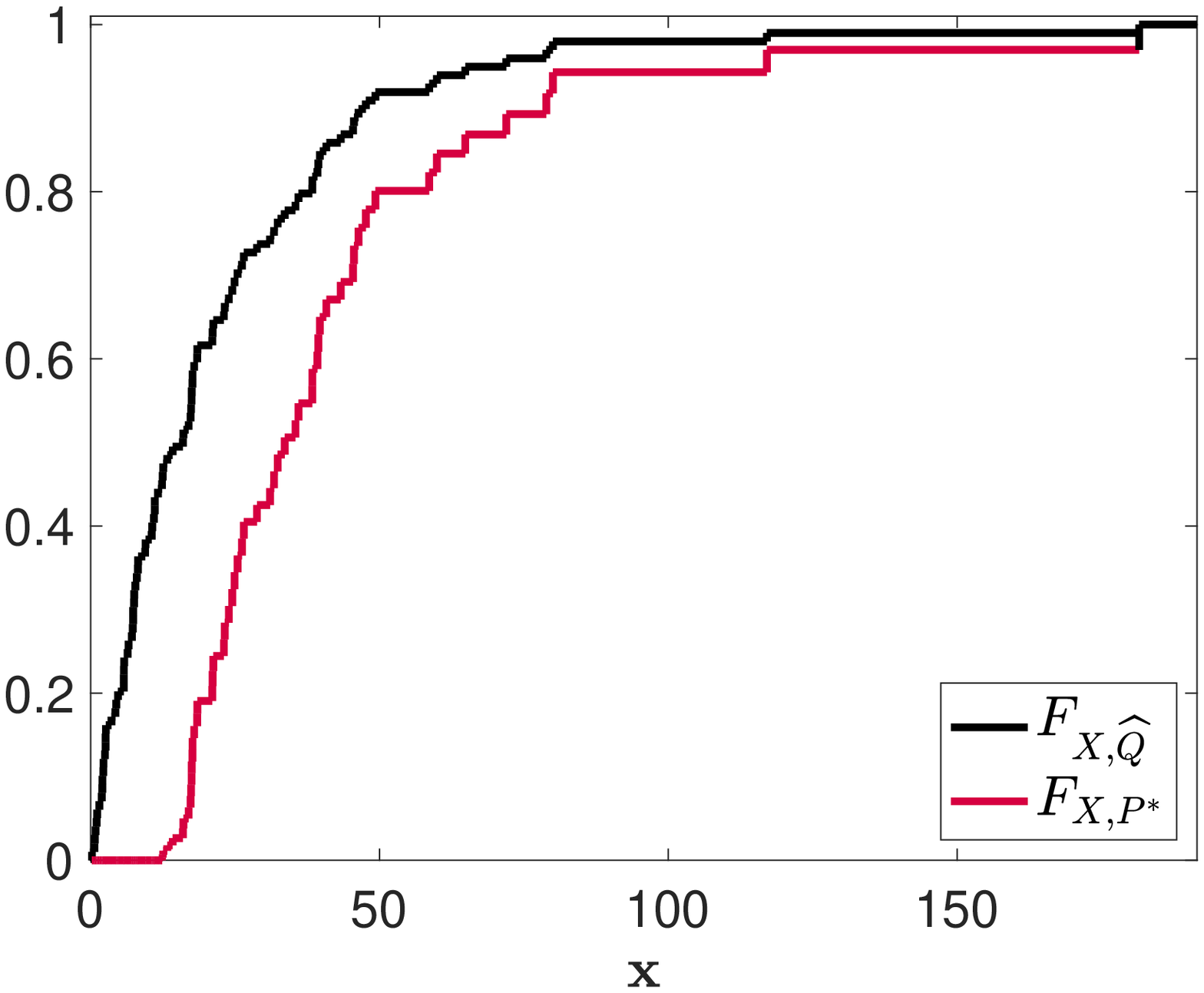}
  }
  \caption{Left: the optimal indemnity $\mathbf{\hat{y}}^*$ as function of $\mathbf{x}$. Right: the DM's optimal distribution $F_{X,P^*}$ (red) compared to insurers' belief $F_{X,\widehat{Q}}$ (black).}\label{fig:RenyiYP}
\end{figure}

\vspace{0.2cm}

We also solve Problem~\eqref{optimRenyi} for the same ambiguity set $\mathcal{C}_{\delta}^{D_\alpha}$, when the feasibility set is $\mathcal{F}=\mathcal{I}$ as defined in eq.~\eqref{FeasibWithoutNSC}. This implies that the constraint $\mathbf{0}\leq \mathbf{A}_{n-1}\mathbf{\hat{y}}\leq \mathbf{A}_{n-1}\mathbf{x}$ is removed from the optimization Problem~\eqref{optimRenyi}. Figure~\ref{fig:RenyiCase12} illustrates the difference between the optimal indemnities corresponding to $\mathcal{I}$ and $\hat{\mathcal{I}}$.

\begin{figure}[h]
\center
\includegraphics[scale=0.5]{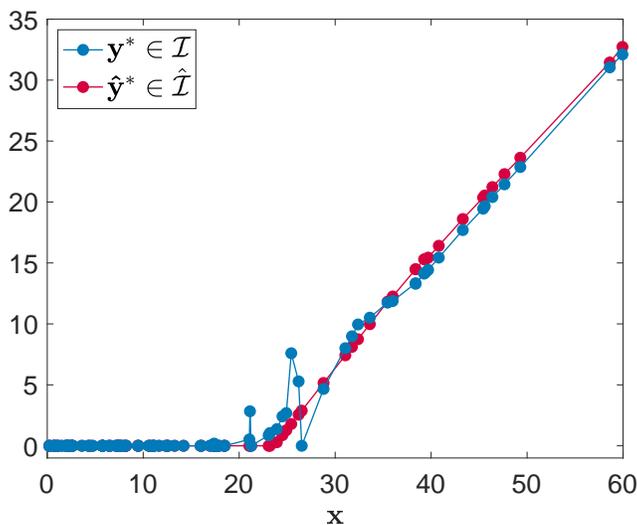}
\caption{Optimal indemnities $\mathbf{y}^*$ and $\mathbf{\hat{y}}^*$ for Problem~\eqref{optimRenyi} when the feasibility sets are $\mathcal{I}$ and $\hat{\mathcal{I}}$, respectively.}\label{fig:RenyiCase12}
\end{figure}

\vspace{0.2cm}

We next investigate the decrease in optimal expected utility, when the ambiguity set increases. Certainty equivalence is used to quantify the impact of ambiguity radii on the optimal value of Problem~\eqref{optimRenyi}.

\vspace{0.2cm}

For each $\delta$, let $(\mathbf{\hat{y}}^*,\mathbf{p}^*)$ be an optimal solution of~\eqref{optimRenyi} and define the certainty equivalents $\ce1$ and $\CE2$ as follows:
\begin{equation*}
\left\{\begin{aligned}
& \inf_{P\in\mathcal{C}_{\delta}^{D_\alpha}}\mathbb{E}_P[u(W_0-\mathbf{x}+\ce1(\delta))]= \sup_{\mathbf{\hat{y}}\in\hat{\mathcal{I}}}\,\inf_{P\in\mathcal{C}_{\delta}^{D_\alpha}}\mathbb{E}_P[u(W_0-\mathbf{x}+\mathbf{\hat{y}}-\Pi_0)],\\
& u(\CE2(\delta)) = \sup_{\mathbf{\hat{y}}\in\hat{\mathcal{I}}}\,\inf_{P\in\mathcal{C}_{\delta}^{D_\alpha}}\mathbb{E}_P[u(W_0-\mathbf{x}+\mathbf{\hat{y}}-\Pi_0)],
\end{aligned}\right.
\end{equation*}
where $P$ is the probability measure corresponding to $\mathbf{p}$.  The constant $\ce1$ quantifies the marginal benefit of the optimal insurance contract, which we interpret as the willingness-to-pay for insurance. Moreover, $\CE2$ measures the certainty equivalent of DM's final position. Figure~\ref{fig:RenyiCE} displays the changes in certainty equivalents for increased values of ambiguity radius. The left figure shows that a larger ambiguity radius yields a higher marginal benefit of the optimal insurance contract. This implies that the DM has a higher willingness-to-pay for the optimal insurance contract if the ambiguity set gets larger. On the other hand, the certainty equivalent of the final wealth position decreases when the ambiguity set gets larger because the DM is more ambiguity-averse (Figure~\ref{fig:RenyiCE} (right)).

\begin{figure}[ht]
\center
  \subfloat{
    \includegraphics[width=.5\linewidth]{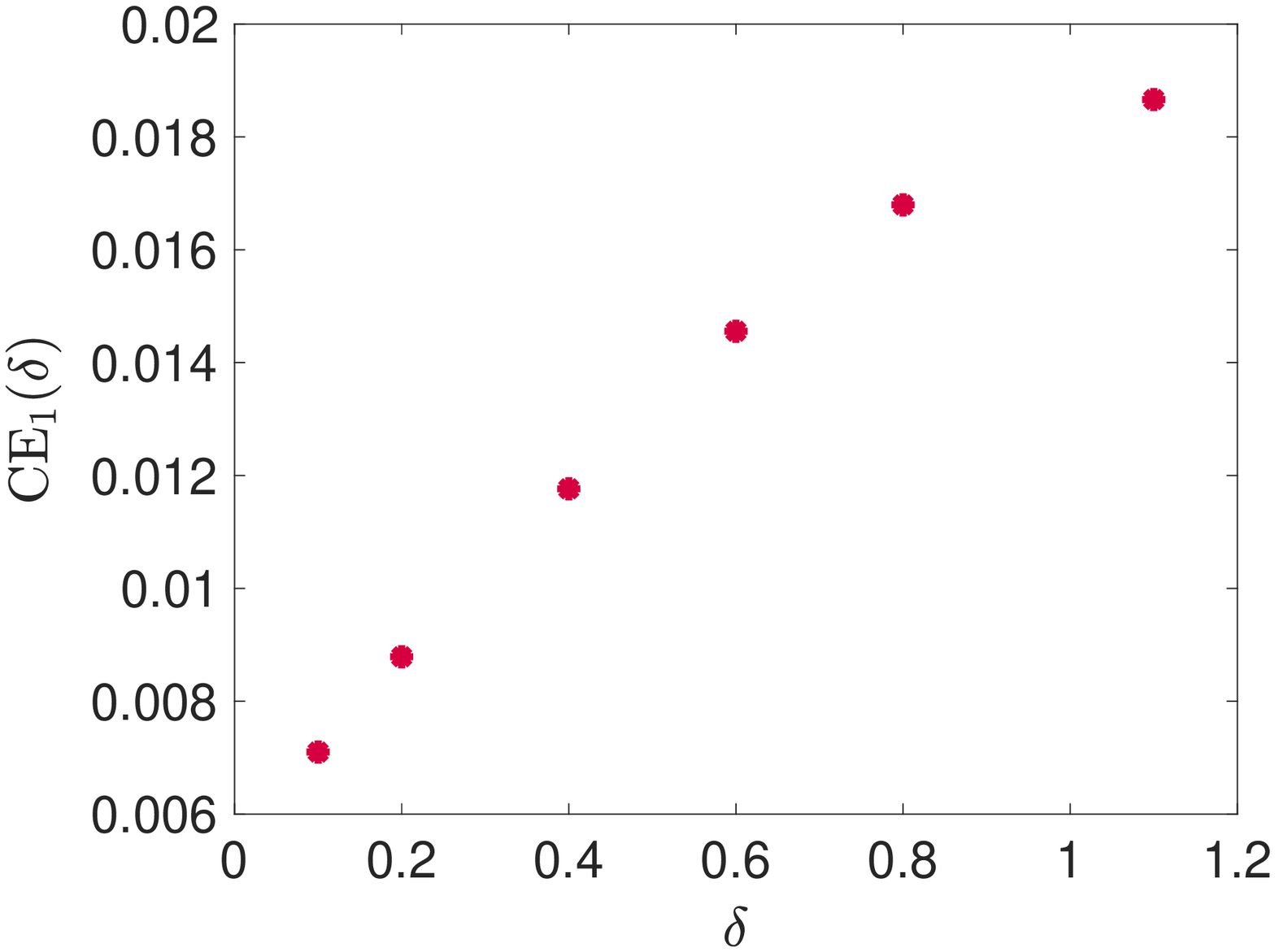}
  }
  \subfloat{
    \includegraphics[width=.5\linewidth]{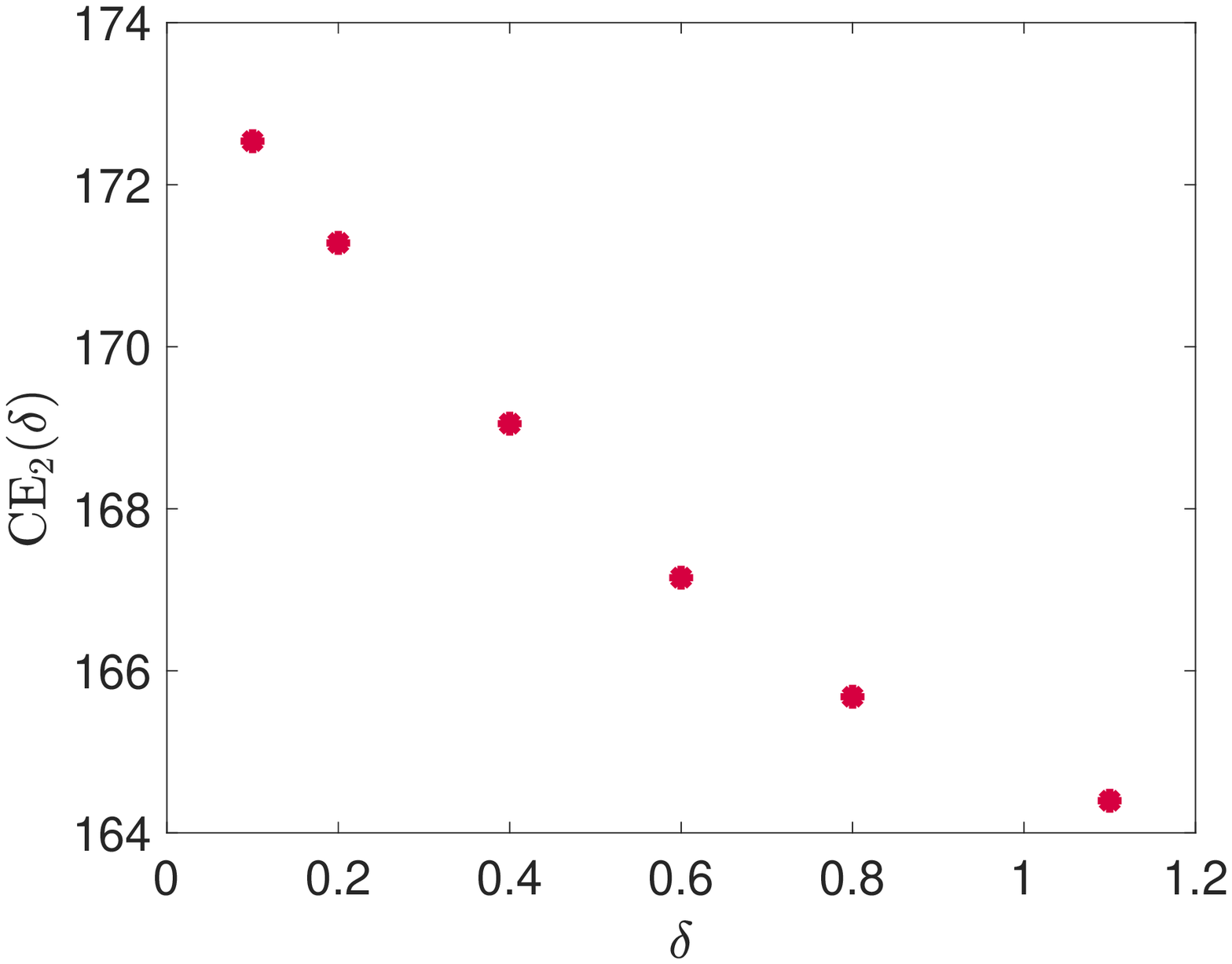}
  }
  \caption{Left: certainty equivalent $\ce1$ as function of the ambiguity radius $\delta$. Right: certainty equivalent $\CE2$ as function of the ambiguity radius $\delta$.}\label{fig:RenyiCE}
\end{figure}

\vspace{0.2cm}

Figure~\ref{fig:RenyiDelta} (left) provides a closer look at the optimal indemnities $\mathbf{\hat{y}}^*$, when the ambiguity set $\mathcal{C}_{\delta}^{D_\alpha}$ becomes wider. Figure~\ref{fig:RenyiDelta} (right) shows the worst-case distribution $F_{X,P^*}$ for several values of $\delta$. For increasing values of the ambiguity radius, it can be observed that each $F_{X,P^*}$ dominates all the previous distributions in the first stochastic order.

\begin{figure}[h]
\center
  \subfloat{
    \includegraphics[width=.5\linewidth]{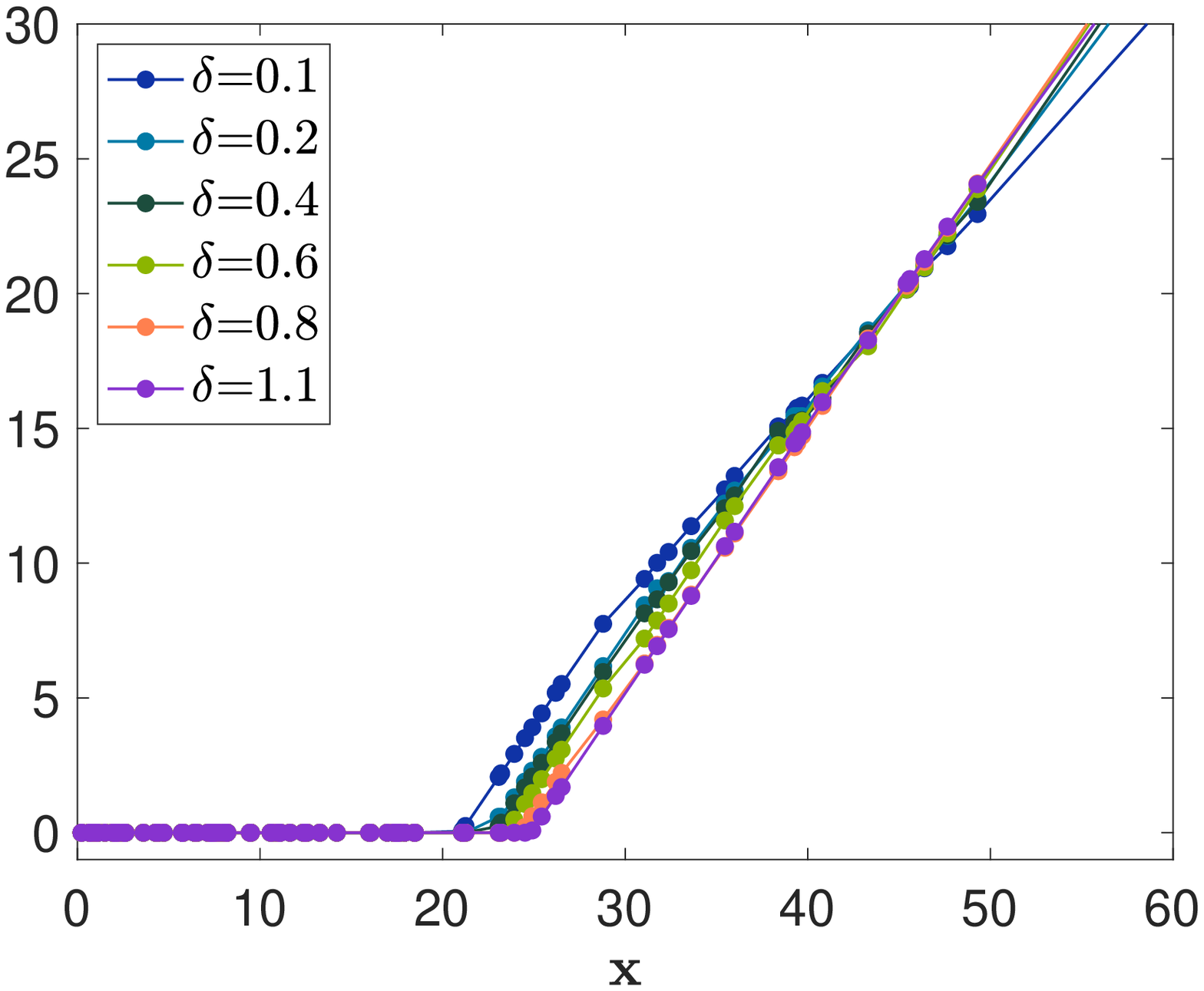}
  }
  \subfloat{
    \includegraphics[width=.5\linewidth]{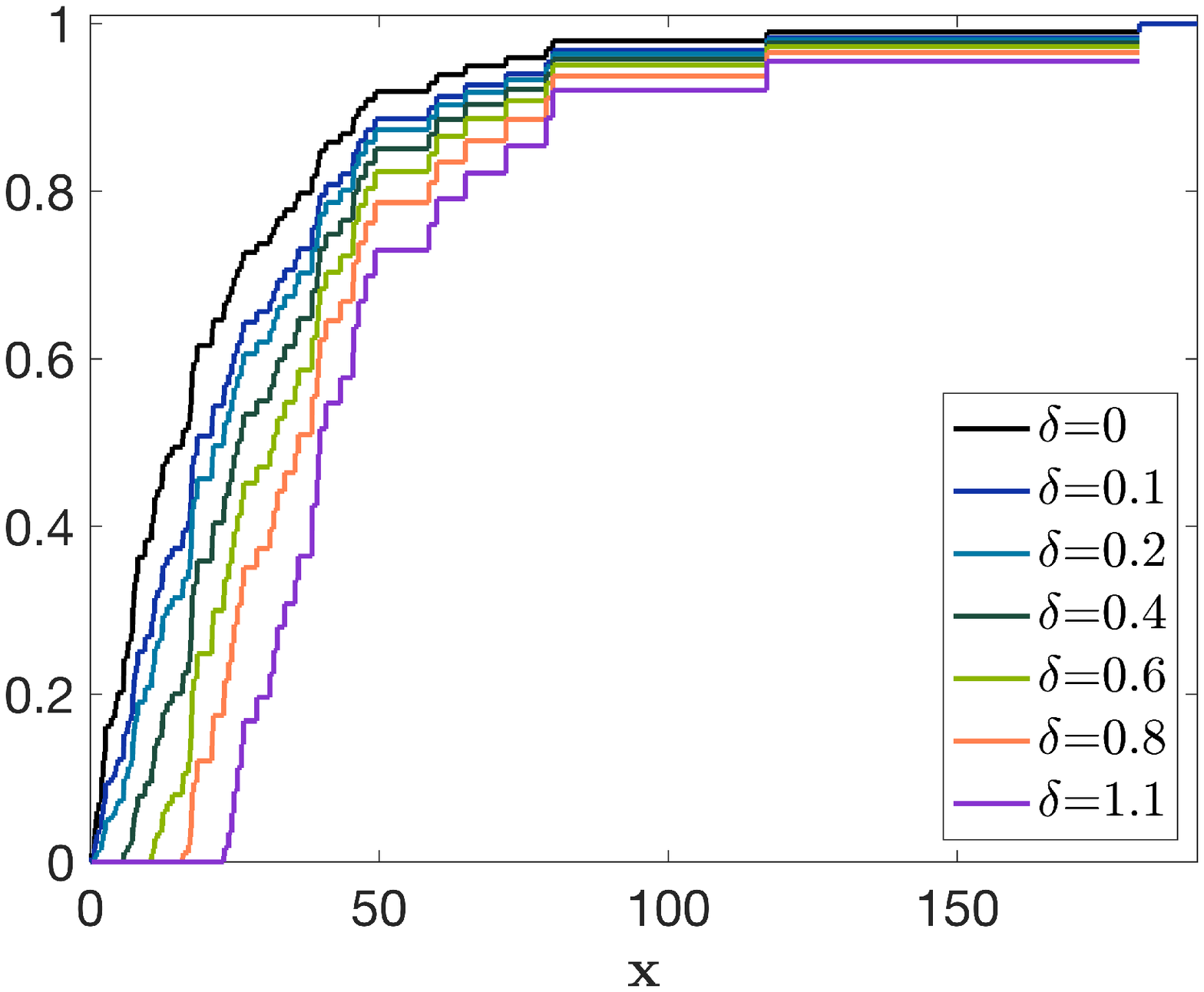}
  }
  \caption{Left: optimal indemnities $\mathbf{\hat{y}}^*\in \mathcal{\hat{I}}$ of Problem~\eqref{optimRenyi} for several values of $\delta$. Right: the corresponding worst-case probability distributions $F_{X,P^*}$ for several values of $\delta$.\vspace{0.5cm}}\label{fig:RenyiDelta}
\end{figure}
\end{example}

%\vspace{0.4cm}

\newpage
%====================================================================================
%====================================================================================
%====================================================================================

\section{Conclusion}\label{section6}
The impact of ambiguity on insurance markets in general, and insurance contracting in particular, is by now well-documented. One of the most popular and intuitive ways to model sensitivity of preferences to ambiguity is the Maxmin-Expected Utility (MEU) model of Gilboa and Schmeidler \cite{Gilboa-Schmeidler1989}. Nonetheless, to the best of our knowledge, none of the theoretical studies of risk sharing in insurance markets in the presence of ambiguity have examined the case in which the insured is a MEU-maximizer. This paper fills this void. Specifically, we extend the classical setup and results in two ways: (i) the decision maker (DM) is endowed with MEU preferences; and (ii) the insurer is an Expected-Utility-maximizer who is not necessarily risk-neutral (that is, the premium principle is not necessarily an expected-value premium principle). The main objective of this paper is then to determine the shape of the optimal insurance indemnity in that case.

\vspace{0.4cm}

We characterize optimal indemnity functions both with and without the customary \textit{ex ante no-sabotage} requirement on feasible indemnities, and for both concave and linear utility functions for the two agents. The no-sabotage condition is shown to play a key role in determining the shape of optimal indemnity functions. An equally important factor in characterizing optimal indemnities is the singularity in beliefs between the two agents.

\vspace{0.4cm}

We subsequently examine several illustrative examples, and we provide numerical studies for the case of a Wasserstein and a R{\'e}nyi ambiguity set. Specifically, we provide a successive convex programming algorithm to compute optimal insurance indemnities in a discretized framework. The Wasserstein and R{\'e}nyi distances are two popular metrics to construct probability ambiguity sets. We show in numerical examples that a larger ambiguity set yields a lower certainty equivalent of final wealth, but increases the willingness-to-pay for insurance.

%\vspace{2cm}
\newpage

%====================================================================================
%====================================================================================
%====================================================================================
 %\newpage
\setlength{\parskip}{0.5ex}
\hypertarget{LinkToAppendix}{\ }
\appendix

\vspace{-1.2cm}

%====================================================================================
%====================================================================================
%====================================================================================

\section{Proof of Proposition~\ref{propCase2}}\label{appendixProofpropCase2}

For $\hat{R}:=X-\hat{Y}$ the retention random variable, Problem~\eqref{optim} becomes
\begin{equation}\label{optimc2}
\begin{aligned}
& \underset{\hat{R}\in\hat{\mathcal{I}}}{\sup}\,\underset{P\in\mathcal{C}}{\inf}
& & \mathbb{E}_P [u(W_0-\hat{R}-\Pi_0)]\\
& \text{s.t.}
& & \mathbb{E}_Q [\hat{R}]\geq \tilde{\Pi}_0,
\end{aligned}
\end{equation}
where $\tilde{\Pi}_0=\mathbb{E}_Q[X]-(1+\rho)^{-1}\Pi_0$. The Sion's Minimax Theorem yields that there exists a saddle point $(\hat{R}^*,P^*)\in\hat{\mathcal{I}}_0\times\mathcal{C}$ such that $\sup_{\hat{R}\in\hat{\mathcal{I}}_0}\,\inf_{P\in\mathcal{C}}\mathbb{E}_P [u(W_0-\hat{R}-\Pi_0)] = \min_{P\in\mathcal{C}}\max_{\hat{R}\in\hat{\mathcal{I}}_0}\mathbb{E}_{P^*} [u(W_0-\hat{R}-\Pi_0)]$. For $P^*\in\mathcal{C}$, the inner optimization problem becomes:
\begin{equation*}
\begin{aligned}
& \underset{\hat{R}\in\hat{\mathcal{I}}}{\max}
& & \int_A u(W_0-\hat{R}-\Pi_0)h^*\,dQ+ \int_{S\setminus A} u(W_0-\hat{R}-\Pi_0)\,dP^*_s \\
& \text{s.t.}
& & \int_A \hat{R}\,dQ(s)\geq \tilde{\Pi}_0,
\end{aligned}
\end{equation*}
where $A:=A_{P^*}$ and $h^*:=h_{P^*}$ are as in Remark~\ref{remDecomposition}, and $h^*=\xi^*(X)$, for some Borel measurable function $\xi^*:\mathbb{R}_+\rightarrow\mathbb{R}_+$. The splitting technique in \cite[Lemma C.6]{Ghossoub2019c} states that the optimal retention function $\hat{R}^*$ can be obtained as $\hat{R}^*=\hat{R}_1^*\mathrm{1}_A$, where $\hat{R}_1^*$ solves Problem~\eqref{optim1c2} below.

\begin{equation}\label{optim1c2}
\sup_{\hat{R}_1\in\hat{\mathcal{I}}} \bigg\lbrace \int u(W_0-\hat{R}_1-\Pi_0)h^*\,dQ\, : \, \int \hat{R}_1 dQ \geq \tilde{\Pi}_0\bigg\rbrace.
\end{equation}

Therefore, the optimal solution $\hat{R}^*=\hat{R}_1^*\mathrm{1}_A$ is 1-Lipschitz $Q$-a.s., as long as $\hat{R}_1^*$ is 1-Lipschitz on $A$. Similar to Theorem~\ref{propCase2V}, a necessary and sufficient condition for $\hat{R}_1^*$ to be the optimal solution of Problem~\eqref{optim1c2} is
\begin{equation}\label{eqRoptimc2}
\begin{aligned}
& \int_0^M \big[u'(W_0-\hat{r}_1^*(t)-\Pi_0)\xi^*(t)-\lambda\big](\hat{r}_1^*(t)-\hat{r}_1(t))dQ\circ X^{-1}(t)\leq 0,\, \forall \hat{R}_1=\hat{r}_1(X)\in \hat{\mathcal{I}},
\end{aligned}
\end{equation}
where $\lambda\in\mathbb{R}_+$ is the Lagrange multiplier associated to Problem~\eqref{optim1c2}. As any $\hat{R}_1=\hat{r}_1(X)\in\hat{\mathcal{I}}$ is absolutely continuous, it is almost everywhere differentiable on $[0,M]$, and hence~\eqref{eqRoptimc2} becomes
\begin{equation*}
\begin{split}
0 & \geq \int_0^M\int_0^t ((\hat{r}_1^*)'(x)-\hat{r}_1'(x))(u'(W_0-\hat{r}_1^*(t)-\Pi_0)\xi^*(t)-\lambda)\,dx\,dQ\circ X^{-1}(t) \\
& = \int_0^M\int_x^M (u'(W_0-\hat{r}_1^*(t)-\Pi_0)\xi^*(t)-\lambda)\,dQ\circ X^{-1}(t)((\hat{r}_1^*)'(x)-\hat{r}_1'(x))\,dx,
\end{split}
\end{equation*}
for all $\hat{R}_1=\hat{r}_1(X) \in\hat{\mathcal{I}}$; hence $\hat{R}^*_1=\hat{r}_1^*(X)$ is of the form
\begin{equation}\label{eqOptimMIF}
(\hat{r}^*_1)'(x)=\begin{cases}
0, &\mbox{if } \int_{[x,M]\cap X(A)} (u'(W_0-\hat{r}^*(t)-\Pi_0)\xi^*(t)-\lambda^*)\,dQ\circ X^{-1}(t)>0,\\
\kappa(t),  &\mbox{if }  \int_{[x,M]\cap X(A)} (u'(W_0-\hat{r}^*(t)-\Pi_0)\xi^*(t)-\lambda^*)\,dQ\circ X^{-1}(t)=0,\\
1,  &\mbox{if } \int_{[x,M]\cap X(A)} (u'(W_0-\hat{r}^*(t)-\Pi_0)\xi^*(t)-\lambda^*)\,dQ\circ X^{-1}(t)<0,
\end{cases}
\end{equation}
for some Lebesgue measurable  and $[0,1]$-valued function $\kappa$. The existence of $\lambda^*\in\mathbb{R}_+$ such that $\mathbb{E}_Q[\hat{R}^*_1]=\tilde{\Pi}_0$ follows similar to Theorem~\ref{propCase2V}.\hfill {\small $\boxvoid$}

\vspace{0.4cm}

%====================================================================================
%====================================================================================
%====================================================================================

\section{Convexity and Compactness of $\mathcal{C}_{\mathcal{W}}$ in Examples~\ref{exGeneralU} and \ref{exLinearUalternative1}}\label{appendixCompactConvexCw}

\begin{lemma}\label{lemmaSetC}
For a fixed $Q \in ca^+_1(\Sigma)$, let $\mathcal{C}_{\mathcal{W}}$ be the set defined as follows:
\begin{equation}\label{SetC2}
\mathcal{C}_{\mathcal{W}}:=\bigg\lbrace P\in ca^{+}_{1}(\Sigma):\, \dfrac{dP}{dQ}=\dfrac{w(X)}{\int w(X)dQ},\, w\in \mathcal{W}\bigg\rbrace,
\end{equation}

\vspace{0.2cm}

\noindent where $\mathcal{W} \subset L^1\left(\mathbb{R}, \mathcal{B}(\mathbb{R}), Q \circ X^{-1}\right)$ is a collection of nonnegative increasing weight functions, such that $\int w(X)dQ > 0$, for all $w \in \mathcal{W}$. Then the following hold:
\vspace{0.2cm}
\begin{enumerate}
\item[(i)] If $\mathcal{W}$ is a convex cone, then $\mathcal{C}_{\mathcal{W}}$ is convex.
\vspace{0.2cm}
\item[(ii)] If $\mathcal{C}_{\mathcal{W}}$ is uniformly absolutely continuous with respect to some $\mu \in ca^+(\Sigma)$, then $\mathcal{C}_{\mathcal{W}}$ is weak$^*$-compact.
\end{enumerate}
\end{lemma}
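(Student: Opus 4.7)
For part (i), I would verify convexity directly from the definition. Given $P_1, P_2 \in \mathcal{C}_\mathcal{W}$ with $dP_i/dQ = w_i(X)/c_i$, where $c_i := \int w_i(X)\,dQ > 0$, and $\alpha \in [0,1]$, the convex combination $\alpha P_1 + (1-\alpha)P_2$ is absolutely continuous with respect to $Q$ with density $w(X) := \alpha w_1(X)/c_1 + (1-\alpha)w_2(X)/c_2$. Because $\mathcal{W}$ is a convex cone, each rescaling $w_i/c_i$ lies in $\mathcal{W}$ (positive-scalar cone property), and their convex combination $w$ lies in $\mathcal{W}$ by convexity; $w$ remains nonnegative and nondecreasing. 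A one-line computation yields $\int w(X)\,dQ = \alpha + (1-\alpha) = 1$, so $w = w/\int w(X)\,dQ$, placing $\alpha P_1 + (1-\alpha)P_2$ in $\mathcal{C}_\mathcal{W}$.

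For part (ii), the plan is to show weak$^*$-closedness of $\mathcal{C}_\mathcal{W}$ inside the weak$^*$-compact set of nonnegative finitely additive measures of total mass one (which is weak$^*$-compact by Banach-Alaoglu combined with the weak$^*$-closedness of positivity and of the norming condition $\lambda(S)=1$). So take a net $(P_\beta)\subset\mathcal{C}_\mathcal{W}$ converging weak$^*$ to some $\nu$; since $\mathbf{1}_E \in B(\Sigma)$ for every $E\in\Sigma$, we immediately get $\nu(E)=\lim_\beta P_\beta(E)$, hence $\nu\geq 0$ and $\nu(S)=1$.

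The first key step is to exploit uniform absolute continuity with respect to $\mu$ to force $\nu$ to be countably additive, which is needed to land in $ca^+_1(\Sigma)$. For any decreasing sequence $E_n \downarrow \emptyset$ in $\Sigma$, one has $\mu(E_n)\to 0$, and uniform absolute continuity yields $\sup_\beta P_\beta(E_n)\to 0$; passing to the weak$^*$-limit gives $\nu(E_n)=\lim_\beta P_\beta(E_n)\to 0$, so $\nu$ is countably additive and $\nu\ll\mu$.

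The second and more delicate step is to identify $\nu$ as an element of $\mathcal{C}_\mathcal{W}$, i.e.\ to represent $d\nu/dQ$ as $\tilde{w}(X)/\int \tilde{w}(X)\,dQ$ for some $\tilde{w}\in\mathcal{W}$. Uniform absolute continuity also produces uniform integrability of the density family $\{dP/d\mu : P\in\mathcal{C}_\mathcal{W}\}$, and hence weak relative compactness in $L^1(\mu)$ by the Dunford-Pettis theorem. Extracting a subnet, the densities converge weakly in $L^1(\mu)$ to $d\nu/d\mu$, which is nonnegative and $\sigma(X)$-measurable (hence of the form $\tilde{w}(X)$ by Doob's measurability theorem); nondecreasingness can be preserved by passing to Komlós-type Cesàro averages, which remain in $\mathcal{W}$ thanks to the convex-cone structure. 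The main obstacle in this step will be to ensure that $\tilde{w}$ lies in $\mathcal{W}$ itself and not merely in an $L^1(\mu)$-closure of it: this requires an additional closedness property of $\mathcal{W}$ under the weak-$L^1$ limits of monotone nonnegative functions that the construction produces, which must either be imposed as a hypothesis or verified from the explicit parametrization of $\mathcal{W}$ in the application.
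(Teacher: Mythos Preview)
Your argument for (i) is exactly the direct verification the paper has in mind (the paper writes only ``(i) is easy to verify'').

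For (ii), the paper takes a very different and much shorter route than yours. Instead of embedding $\mathcal{C}_{\mathcal{W}}$ in the Banach--Alaoglu ball of $ba(\Sigma)$ and then working to identify weak$^*$ limits via Dunford--Pettis and Koml\'os averaging, the paper simply observes that $\mathcal{C}_{\mathcal{W}}$ is norm-bounded (total variation equal to $1$) and uniformly absolutely continuous with respect to $\mu$, and then invokes Dunford--Schwartz~[IV.9.2] to obtain weak sequential compactness of $\mathcal{C}_{\mathcal{W}}$ in $ca(\Sigma)$; a result of Maccheroni--Marinacci is then cited to upgrade this to weak$^*$-compactness. No density-identification step appears at all. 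Your route is more self-contained in that it does not rely on the nontrivial passage from weak to weak$^*$ compactness, but it is considerably heavier.

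Your final caveat---that one still needs $\tilde w\in\mathcal{W}$, i.e.\ some closedness of $\mathcal{W}$, to land back in $\mathcal{C}_{\mathcal{W}}$---is well taken. The paper's short proof faces the same issue: Dunford--Schwartz~IV.9.2 yields only \emph{relative} weak (sequential) compactness, so without an additional closedness hypothesis on $\mathcal{C}_{\mathcal{W}}$ one obtains only that its weak closure is compact. The paper does not address this point explicitly, so your diagnosis of the obstacle is accurate for both approaches; the difference is that the paper's argument never attempts the limit-identification step where the difficulty would surface.
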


%\vspace{0.2cm}

\begin{proof}
(i) is easy to verify. To show (ii), first note that $\mathcal{C}_{\mathcal{W}}$ is norm-bounded. Since $\mathcal{C}_{\mathcal{W}}$ is also uniformly absolutely continuous with respect to $\mu \in ca^+(\Sigma)$, it follows from \cite[Theorem IV.9.2]{Dunford} that $\mathcal{C}_{\mathcal{W}}$ is weakly sequentially compact, and hence weak$^*$-compact, by \cite[Theorem 1]{MaccheroniMarinacci2001}.
\end{proof}

\vspace{0.2cm}

\begin{remark}
In Lemma \ref{lemmaSetC}, if $\mathcal{C}_{\mathcal{W}}$ is countable, that is, is of the form
$$\bigg\lbrace P_n\in ca^{+}_{1}(\Sigma):\, n \in \mathbb{N}, \, \dfrac{dP_n}{dQ}=\dfrac{w(X)}{\int w(X)dQ},\, w\in \mathcal{W}\bigg\rbrace,$$
and if $\underset{n \rightarrow +\infty}\lim P_{n}(A)$ exists for each $A \in \Sigma$, then the requirement of uniform absolute continuity of $\mathcal{C}_{\mathcal{W}}$ is superfluous by the Vitali-Hahn-Saks Theorem (\cite[Theorem III.7.2]{Dunford}).
\end{remark}

\vspace{0.2cm}

\begin{proposition}
If $\mathcal{W}$ is order bounded in the Banach lattice $L^1\left(\mathbb{R}, \mathcal{B}(\mathbb{R}), Q \circ X^{-1}\right)$, with a constant upper bound and a nonnegative lower bound having nonzero $L^1$-norm, then $\mathcal{C}_{\mathcal{W}}$ is uniformly absolutely continuous with respect to $Q$.
\end{proposition}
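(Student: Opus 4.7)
My plan is to reduce uniform absolute continuity to a uniform $L^\infty$-bound on the Radon--Nikod\'ym derivatives $dP/dQ$, which follows directly from the two order bounds on $\mathcal{W}$.

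First, recall that $\mathcal{C}_{\mathcal{W}}$ is uniformly absolutely continuous with respect to $Q$ provided that for every $\varepsilon>0$ there is some $\delta>0$ such that $Q(A)<\delta$ implies $P(A)<\varepsilon$ for all $P\in\mathcal{C}_{\mathcal{W}}$ simultaneously. So I would fix $P\in\mathcal{C}_{\mathcal{W}}$ with $dP/dQ = w(X)/\int w(X)\,dQ$ for some $w\in\mathcal{W}$, and estimate $P(A)$ uniformly in $w$.

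Let $M_w\in[0,\infty)$ be the constant upper bound of $\mathcal{W}$ and let $\ell\in L^1(\mathbb{R},\mathcal{B}(\mathbb{R}),Q\circ X^{-1})$ be the nonnegative lower bound, with $c:=\int \ell(X)\,dQ>0$. Then for every $w\in\mathcal{W}$ one has $\ell\leq w\leq M_w$ pointwise a.s., so $\int w(X)\,dQ\geq c>0$ and $w(X)\leq M_w$. Consequently, for any $A\in\Sigma$,
\[
P(A)=\int_A \frac{w(X)}{\int w(X)\,dQ}\,dQ \leq \frac{M_w}{c}\,Q(A).
\]
Since the bound $M_w/c$ does not depend on the choice of $P\in\mathcal{C}_{\mathcal{W}}$, given $\varepsilon>0$ it suffices to take $\delta:=\varepsilon c/M_w$ (and $\delta=1$ if $M_w=0$, which forces $P=Q$ trivially); then $Q(A)<\delta$ implies $P(A)<\varepsilon$ for every $P\in\mathcal{C}_{\mathcal{W}}$, establishing the claim.

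The argument is genuinely short and contains no serious obstacle: the only point requiring care is ensuring that the denominator $\int w(X)\,dQ$ is bounded below \emph{uniformly in $w\in\mathcal{W}$}, which is exactly what the hypothesis $\|\ell\|_{L^1}>0$ provides. Everything else is a one-line Radon--Nikod\'ym estimate.
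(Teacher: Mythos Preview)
Your proof is correct and follows essentially the same approach as the paper's: bound $dP/dQ$ uniformly in $L^\infty$ by $M_w/c$ using the constant upper bound on $w$ and the positive lower bound on $\int w(X)\,dQ$, then choose $\delta=\varepsilon c/M_w$. The paper's argument is identical up to notation (it writes $M$ and $\|f\|_1$ where you write $M_w$ and $c$).
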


\vspace{0.2cm}

\begin{proof}
Suppose that $\mathcal{W}$ is order bounded in $L^1\left(\mathbb{R}, \mathcal{B}(\mathbb{R}), Q \circ X^{-1}\right)$, with a constant upper bound and a nonnegative lower bound having nonzero $L^1$-norm. Then there exists $f \in L^1_+\left(\mathbb{R}, \mathcal{B}(\mathbb{R}), Q \circ X^{-1}\right)$ and $M \in \mathbb{R}^+$,  such that $M < +\infty$, $\|f\|_1 = \int f\,dQ \circ X^{-1} > 0$, and $f \leq w \leq M$, for each $w \in \mathcal{W}$. Consequently, for each $P \in \mathcal{C}_{\mathcal{W}}$,
$$\dfrac{dP}{dQ} \leq \frac{M}{\|f\|_1} < +\infty.$$
Hence, for each $P \in \mathcal{C}_{\mathcal{W}}$ and each $A \in \Sigma$,
$$P(A) = \int_A dP  \leq \frac{M}{\|f\|_1} \,Q(A).$$
Consequently, for each $\varepsilon > 0$, letting $\delta := \frac{\|f\|_1}{M} \, \varepsilon > 0$, it follows that for each $A \in \Sigma$ and each $P \in \mathcal{C}_{\mathcal{W}}$,
$$Q(A) < \delta \Longrightarrow P(A) < \frac{M}{\|f\|_1} \, \delta = \varepsilon.$$
Hence, $\mathcal{C}_{\mathcal{W}}$ is uniformly absolutely continuous with respect to $Q$.
\end{proof}

\vspace{0.4cm}

%====================================================================================
%====================================================================================
%====================================================================================

\section{$\Delta$-approximation of a Continuous Function}\label{appendixDeltaApprox}

Let $f:[a,b]\rightarrow\mathbb{R}$ be a function defined on a compact interval $[a,b]\subseteq\mathbb{R}$. For the definition of piecewise linear approximation of $f$, we follow the notation in \cite{Ngueveu2019}.

\vspace{0.2cm}

\begin{definition}\normalfont \
\begin{itemize}
\item A function $g:[a,b]\rightarrow\mathbb{R}$ is a \textit{piecewise linear function (pwl)} with $m\in\mathbb{N}$ line-segments if $\exists\, \mathbf{a}=[a_1,\ldots,a_m]^\top\in\mathbb{R}^m$, $\mathbf{b}=[b_1,\ldots,b_m]^\top\in\mathbb{R}^m$, $\mathbf{x}^{\min}=[x^{\min}_1,\ldots,x^{\min}_m]^\top\in [a, b]^m$ and $\forall i=1,\ldots, m$, $\exists\,x^{\max}_i\in [x^{\min}_i,b]$ such that the following holds:
\begin{align*}
g(x) & =a_i x+b_i, \quad \forall i=1,\ldots,m,\, \forall x\in [x^{\min}_i,x^{\max}_i]; \\
x^{\max}_i & =x^{\min}_{i+1}, \quad \forall i=1,\ldots,m-1; \\
x^{\min}_1 & =a; \\
x^{\max}_m & =b.
\end{align*}
\item A pwl function $g:[a,b]\rightarrow\mathbb{R}$ is called \textit{$\Delta$-approximation} of $f$ with $\Delta>0$ if:
\begin{equation}\label{defApprox}
\max_{x\in [a,b]} \vert g(x)-f(x)\vert \leq \Delta.
\end{equation}
\item We call function $g$ a \textit{$\Delta$-underestimator} of function $f$ if condition~\eqref{defApprox} is satisfied along with
\[
g(x)\leq f(x), \forall x\in [a,b].
\]
We call function $g$ a \textit{$\Delta$-overestimator} of $f$ if $-g$ is a $\Delta$-underestimator of $-f$.
\end{itemize}
\end{definition}

\vspace{0.2cm}

To simplify the notation, we characterize a pwl function associated to $f$ via the system $G(f):=\bigcup _{i=1}^m([a_i,b_i], [x^{\min}_i,x^{\max}_i])$, where $x^{\min}_i$ and $x^{\max}_i$ are the breakpoint of the $i$-th segment.

 \vspace{0.2cm}

According to \cite{DuistermaatKolk2004}, if the function $f$ defined on the compact set $[a,b]$ is continuous, then for any scalar $\Delta>0$,  there exists a continuous $\Delta$-approximation. Moreover, \cite[Corollary 2.1]{RebennackKallrath2015} states that if a $\Delta$-approximation $g$ of $f$ exists, then for $\epsilon=2\Delta$, the functions $g_-(x):=g(x)-\Delta$ and $g_+(x):=g(x)+\Delta$ define an $\epsilon$-underestimator and an $\epsilon$-overestimator, respectively, of $f$.

%\vspace{0.4cm}
\newpage
%====================================================================================
%====================================================================================
%====================================================================================

\section{Convergence of the Algorithm in Examples~\ref{exWD} and~\ref{exRenyi}}\label{appendixAlgorithm}

The convergence of the SCP algorithm in Examples~\ref{exWD} and~\ref{exRenyi} is proven in Pflug and Pichler \cite[Proposition B.6]{PflugPicher2014}. Below we present a sketch of the proof for the general setting of Problem~\eqref{optimV}. Recall that the sets $\hat{\mathcal{I}}_0$ in \eqref{FeasibHatI} and $\mathcal{C}_\delta$ in \eqref{defCparametric} are compact and the mapping $( \hat{Y}, P )\mapsto \mathbb{E}_P [ u( W_0 - X + \hat{Y} - \Pi_0 )=:u( \hat{Y}, P )$ is jointly continuous.

\vspace{0.2cm}

\begin{proposition}
Every cluster point $\hat{Y}^*$  of the iteration:
\begin{align}
\hat{Y}^{(m+1)} &\in  \argmax_{ \hat{Y} \in \hat{\mathcal{I}}_0 } \min_{ P^{(i)} \in \mathcal{P}^{(m)} } u( \hat{Y}, P^{(i)} ), \label{eqIteration1} \\
P^{(m+1)} &\in \argmin_{P\in\mathcal{C}_\delta} \, u( \hat{Y}^{(m+1)}, P ) \label{eqIteration2}
\end{align}
is a solution of Problem~\eqref{optimV}.
\end{proposition}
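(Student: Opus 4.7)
The plan is to show that any cluster point $\hat{Y}^*$ of the sequence $(\hat{Y}^{(m)})$ maximizes the worst-case expected utility
\[
f(\hat{Y}) := \min_{P\in\mathcal{C}_\delta}\, u(\hat{Y},P)
\]
over $\hat{\mathcal{I}}_0$, which is exactly optimality for Problem~\eqref{optimV}. Write $v^*:=\max_{\hat{Y}\in\hat{\mathcal{I}}_0} f(\hat{Y})$ and $g_m(\hat{Y}):=\min_{P^{(i)}\in\mathcal{P}^{(m)}} u(\hat{Y},P^{(i)})$. First I would record three general facts: since $\mathcal{P}^{(m)}\subseteq\mathcal{C}_\delta$ we have $g_m\geq f$; since $\mathcal{P}^{(m+1)}\supseteq\mathcal{P}^{(m)}$ the scalar sequence $v_m:=g_m(\hat{Y}^{(m+1)})=\max_{\hat{Y}\in\hat{\mathcal{I}}_0} g_m(\hat{Y})$ is non-increasing in $m$ and bounded below by $v^*$, so $v_m\downarrow v_\infty\geq v^*$; and by Berge's maximum theorem (using joint continuity of $u$ and weak$^*$-compactness of $\mathcal{C}_\delta$) the function $f$ is continuous on $\hat{\mathcal{I}}_0$, while step~\eqref{eqIteration2} gives the key identity $u(\hat{Y}^{(m+1)},P^{(m+1)})=f(\hat{Y}^{(m+1)})$.

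Next, given a cluster point $\hat{Y}^*$, I would extract a subsequence along which $\hat{Y}^{(m_\ell+1)}\to \hat{Y}^*$ in $C[0,M]$ (using compactness of $\hat{\mathcal{I}}_0$ from Remark~\ref{CompactIhat0}) and then, passing to a further subsequence still denoted $m_\ell$, arrange that $P^{(m_\ell+1)}\to P^*$ in the weak$^*$ topology. Continuity of $f$ together with joint continuity of $u$ then yield
\[
f(\hat{Y}^*)=\lim_{\ell\to\infty} f(\hat{Y}^{(m_\ell+1)})=\lim_{\ell\to\infty} u(\hat{Y}^{(m_\ell+1)},P^{(m_\ell+1)})=u(\hat{Y}^*,P^*).
\]

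The crux of the argument is a double-limit step showing $v_\infty\leq u(\hat{Y}^*,P^*)$. Because the sets $\mathcal{P}^{(m)}$ are nested, for any fixed index $k$ and any $m\geq k+1$ the containment $P^{(k+1)}\in\mathcal{P}^{(m)}$ gives
\[
v_m=g_m(\hat{Y}^{(m+1)})\leq u(\hat{Y}^{(m+1)},P^{(k+1)}).
\]
Letting first $m=m_\ell\to\infty$ with $k$ fixed and using continuity of $u(\cdot,P^{(k+1)})$ on $\hat{\mathcal{I}}_0$ produces $v_\infty\leq u(\hat{Y}^*,P^{(k+1)})$; then specializing $k=m_j$ along the subsequence and letting $j\to\infty$, continuity of $u(\hat{Y}^*,\cdot)$ in the weak$^*$ topology (an element of $B(\Sigma)$-duality) yields $v_\infty\leq u(\hat{Y}^*,P^*)=f(\hat{Y}^*)$. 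Together with the trivial bounds $f(\hat{Y}^*)\leq v^*\leq v_\infty$, equality holds throughout and $\hat{Y}^*$ is a maximizer of $f$, i.e.\ a solution of Problem~\eqref{optimV}.

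The main obstacle is this iterated-limit step: one must carefully separate the two indices and exploit the nestedness of $\mathcal{P}^{(m)}$ to introduce an earlier iterate $P^{(k+1)}$ as a test prior before passing the remaining index to the limit. A secondary technical point is to verify continuity of $f$ via Berge's theorem and to ensure the existence of weak$^*$-convergent subsequences for $(P^{(m)})$, which relies on weak$^*$-metrizability of the bounded, weak$^*$-closed set $\mathcal{C}_\delta$ in our Polish setting.
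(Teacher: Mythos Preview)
Your argument is correct and shares the paper's overall strategy: both rely on the monotone sequence $v_m=\max_{\hat Y}g_m(\hat Y)$ (the paper's $u^m$), its lower bound $v^*$, compactness of $\hat{\mathcal I}_0$, and joint continuity of $u$. The execution differs, however. The paper proceeds by two terse contradiction steps---first showing $u^*=\min_{P\in\mathcal C_\delta}u(\hat Y^*,P)$, then showing $\hat Y^*$ is a maximizer---without ever extracting a limit of the priors. You instead pass to a weak$^*$-convergent subsequence $P^{(m_\ell+1)}\to P^*$, identify $f(\hat Y^*)=u(\hat Y^*,P^*)$ via the key identity from~\eqref{eqIteration2}, and then run the nested double-limit $v_\infty\le u(\hat Y^*,P^{(k+1)})\to u(\hat Y^*,P^*)$ to close the gap directly.

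What each buys: the paper's route is shorter but leans on implicit steps (in particular, its first contradiction is quite compressed). Your route is longer but fully explicit: invoking Berge's theorem for continuity of $f$, using weak$^*$ sequential compactness of $\mathcal C_\delta$ (legitimate here since $B(\Sigma)$ is separable enough in the Polish setting to make bounded weak$^*$-closed sets metrizable), and separating the two limit indices cleanly. Both are valid; yours is the more transparent write-up of essentially the same idea.
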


\vspace{0.2cm}

\begin{proof}
Let $u^m:= \min_{ P^{(i)} \in \mathcal{P}^{(m)} } u( \hat{Y}^{(m+1)}, P^{(i)} )$;  by Remark~\ref{remC}, it follows that
\begin{equation}\label{eqIteration}
u^m = \max_{ \hat{Y}\in \hat{\mathcal{I}}_0 } \min_{ P^{(i)} \in \mathcal{P}^{(m)} } u( \hat{Y}, P^{(i)} ) \geq \max_{ \hat{Y}\in \hat{\mathcal{I}}_0 } \min_{ P^{(i)} \in \mathcal{P}^{(m+1)} } u( \hat{Y}, P^{(i)} ) = u^{m+1}.
\end{equation}

\vspace{0.2cm}

Since $u( \hat{Y}, P )$ is bounded, the iterations~\eqref{eqIteration1} -\eqref{eqIteration2}  form a decreasing sequence $u^m$ that converges to $u^*:=\inf u^m>-\infty$. By compactness of $\hat{\mathcal{I}}_0$, the sequence $Y^{(m)}$ has at least one cluster point. Let $\hat{Y}^*$ be one of the cluster points.

\vspace{0.2cm}

We would like to show first that $u^*=\min_{P \in \mathcal{C}_\delta } u(\hat{Y}^*, P)$. Assume by contradiction that $u^*>\min_{P \in \mathcal{C}_\delta } u(\hat{Y}^*, P)$; then there exists some $\tilde{P} \in \mathcal{C}_\delta$ such that $u^*> u( \hat{Y}^*, \tilde{P} )$. Since $u( \hat{Y}, P )$ is continuous in $\hat{Y}$, there exists some $\hat{Y}^{(m)}$ in a neighborhood of $\hat{Y}^*$  such that $ u^{(m-1)} \leq u( \hat{Y}^{(m)}, \tilde{P} )< u^*\leq u^{(m)}$, which contradicts eq.~\eqref{eqIteration}.

\vspace{0.2cm}

Next, we show that $\hat{Y}^* \in \argmax_{\hat{Y}\in \hat{\mathcal{I}}_0 } \min_{P\in\mathcal{C}_\delta} u( \hat{Y}^*, P )$. Again assume that there exists some $\tilde{Y}\in \hat{\mathcal{I}}_0$ such that $ \min_{ P \in \mathcal{C}_\delta } u( \tilde{Y}, P ) > \min_{ P\in \mathcal{C}_\delta } u( \hat{Y}^*, P ) $. By construction, we have $\min_{ P^{(i)} \in \mathcal{P}^{(m)} } u( \tilde{Y}, P^{(i)}) \leq \min_{ P^{(i)} \in \mathcal{P}^{(m)} } u( Y^{(m+1)}, P^{(i)}) = u^m$. Taking $m\rightarrow +\infty$ yields $\min_{ P \in \mathcal{C}_\delta } u( \tilde{Y}, P)\leq u^*$, a contradiction.
\end{proof}

%====================================================================================
%====================================================================================
%====================================================================================

%\vspace{0.2cm}
\newpage

\bibliographystyle{plain}
\bibliography{BiblioMario10}

\vspace{0.6cm}

\end{document}